\theoremstyle{definition}
\newtheorem{definition}{Definition}[section]
\theoremstyle{plain}
\newtheorem{theorem}{Theorem}[section]
\newtheorem{proposition}{Proposition}[section]
\newtheorem{lemma}{Lemma}[section]
\newtheorem{corollary}{Corollary}[section]
\newtheorem{rrule}{Reduction Rule}[section]
\newtheorem{claim}{Claim}[section]
\newenvironment{claimproof}{{\noindent\textit{Proof of Claim. }}}{\hfill$\blacksquare$}
\newcommand{\Oh}{\ensuremath{\mathcal{O}}}
\DeclareMathOperator{\cl}{cl}
\algnewcommand\OR{\textbf{or}\space}
\algnewcommand\ALGAND{\textbf{and}\space}
\algnewcommand\True{\texttt{True}\space}
\algnewcommand\False{\texttt{False}\space}
\algnewcommand\algorithmicforeach{\textbf{for each}}
\algnewcommand{\LineComment}[1]{\State \hfill \(\triangleright\) #1}
\algrenewcommand\algorithmicprocedure{\textbf{function}}
\newcommand{\PHC}{{\normalfont coNP $\subseteq$ NP/poly}} 
\newcommand{\problemdef}[3]{
  \vspace{.5ex}
  \begin{quote}
      \normalsize\textsc{#1} \smallskip \\
      \begin{tabularx}{0.9\textwidth}{@{}l@{\hspace{3pt}}X}
        \normalsize\textbf{Input:}    & \normalsize#2 \\
        \normalsize\textbf{Question:} & \normalsize#3
      \end{tabularx}
  \end{quote}
  \vspace{.5ex}
}
\begin{document}

\title{Computing Dense and Sparse Subgraphs of Weakly Closed Graphs\footnote{An extended abstract of this work appeared in the Proceedings of the~31st International Symposium on Algorithms and Computation (ISAAC~'20) held in Hong Kong, China. 
The full version contains all missing proofs, new hardness results for \textsc{2-Club} and \textsc{Dominating Clique}, and kernel lower bounds for \textsc{Independent Dominating Set}.}}

\author{Tomohiro Koana$^1$\footnote{Supported by the Deutsche Forschungsgemeinschaft (DFG), project FPTinP, NI 369/19.}
 \and Christian Komusiewicz$^2$
 \and Frank Sommer$^2$\footnote{Supported by the Deutsche Forschungsgemeinschaft (DFG), projects MAGZ, KO~3669/4-1 and EAGR, KO~3669/6-1.} 
 }
\date{%
    $^1$ Algorithmics and Computational Complexity, Technische Universität Berlin, Germany\\%
    $^2$ Fachbereich Mathematik und Informatik, Philipps-Universität Marburg, Germany\\[2ex]%
}







\maketitle

\begin{abstract}
A graph~$G$ is weakly~$\gamma$-closed if every induced subgraph of~$G$ contains one vertex~$v$ such that for each non-neighbor~$u$ of~$v$ it holds that~$ \vert N(u)\cap N(v) \vert <\gamma$. The weak closure~$\gamma(G)$ of a graph, recently introduced by Fox et al.~[SIAM J.~Comp.~2020], is the smallest number such that~$G$ is weakly~$\gamma$-closed. This graph parameter is never larger than the degeneracy (plus one) and can be significantly smaller. Extending the work of Fox et al.~[SIAM J.~Comp.~2020] on clique enumeration, we show that several problems related to finding dense subgraphs, such as the enumeration of bicliques and $s$-plexes, are fixed-parameter tractable with respect to~$\gamma(G)$. Moreover, we show that the problem of determining whether a weakly $\gamma$-closed graph~$G$ has a subgraph on at least~$k$ vertices that belongs to a graph class~$\mathcal{G}$ which is closed under taking subgraphs admits a kernel with at most~$\gamma k^2$ vertices. Finally, we provide fixed-parameter algorithms for \textsc{Independent Dominating Set} and \textsc{Dominating Clique} when parameterized by~$\gamma+k$ where~$k$ is the solution size. 
Furthermore, we show that \textsc{Independent Dominating Set} does not admit a polynomial kernel for constant~$\gamma$ under standard assumptions.
\end{abstract}


\section{Introduction}
In the quest to design efficient algorithms for NP-hard graph problems, a very successful approach is to exploit the sparsity of input graphs: many problems that are assumed to be hard in general graphs turn out to be efficiently solvable in sparse graphs~\cite{AG09,ELS13,GKS17,K16,KS15,Luks82,See96}. One popular sparseness measure that has been used for a variety of graph problems is the \emph{degeneracy} of the input graph $G$, defined as follows. For a vertex $v \in V(G)$, let $\deg_G(v)\coloneqq  \vert N(v) \vert $ denote the degree of~$v$. 
\begin{definition}
A graph $G$ is \emph{$d$-degenerate} if one of the following two equivalent conditions holds:
\begin{itemize}
\item There exists a degeneracy ordering $\delta\coloneqq (v_1,\ldots, v_n)$ of~$G$, that is, an ordering such that $\deg_{G_i}(v_i)\le d$ where~$G_i\coloneqq G[\{v_i,\ldots, v_n\}]$  
\item Every induced subgraph $G'$ of $G$ has a vertex~$v$ with~$\deg_{G'}(v)\le d$.
\end{itemize}
The degeneracy of a graph~$G$ is the smallest integer~$d$ such that~$G$ is~$d$-degenerate.
\end{definition}
Many graph algorithms which exploit the fact that the input graph has bounded degeneracy have been proposed. For example, there is an algorithm that enumerates all maximal cliques of a graph in~$\Oh(3^{d/3}\cdot dn)$~time and performs very efficiently on real-world input instances~\cite{ELS13}. This algorithm is an FPT-algorithm for the parameter~$d$ since the exponential part of the running time depends only on~$d$. Further applications of degeneracy include FPT-algorithms for~clique relaxations~\cite{K16,KS15} and for biclique enumeration algorithms~\cite{Epp94,HM20}. Degeneracy can also be used for problems that are W[1]-hard for their standard parameterization by solution size. For example, \textsc{Dominating Set} and related problems are W[1]-hard with respect to the solution size~$k$ but have FPT-algorithms for~$d+k$~\cite{AG09,GV08,PRS12}.

In a recent work, Fox et al.~\cite{FRSWW20} proposed exploiting a different property of real-world graphs that is motivated by the triadic closure principle. This principle postulates that people in a social network which have many common friends are likely to be friends themselves. Many real-world social networks give evidence for this postulate as they contain no pair of nonadjacent vertices with many common neighbors. The degree to which a given graph adheres to the triadic closure principle can be expressed in the \emph{closure number of~$G$}, defined as follows.
\begin{definition}[\cite{FRSWW20}]
  \label{def:c}Let $\cl_G(v) \coloneqq \max_{v' \in V \setminus N[v]}  \vert N(v) \cap N(v') \vert $ denote
  the \emph{closure number} of a vertex~$v$ in a graph~$G$. A graph $G$ is
  \emph{$c$-closed} if $\cl_G(v) < c$ for all $v \in V(G)$.
  The \emph{closure number of a graph~$G$} is the smallest integer~$c$ such that~$G$ is~$c$-closed.
\end{definition}
Fox et al.~\cite{FRSWW20} showed that a~$c$-closed graph has~$\Oh(3^{c/3}\cdot n^2)$
maximal cliques. Given that all maximal cliques can be enumerated in
$\Oh(\alpha \cdot n^2)$~time, where~$\alpha$ is the number of maximal cliques~\cite{CN85}, this bound implies
that all maximal cliques of a $c$-closed graph can be enumerated
in~$\Oh^*(3^{c/3})$~time.\footnote{The~$\Oh^*$ notation hides polynomial factors in the input size.} This means that the clique enumeration problem has an FPT-algorithm with respect to the closure number of the input graph. In companion
works, we showed that several hard graph problems such as
\textsc{Independent Set}, \textsc{Dominating Set}, \textsc{Induced Matching} and \textsc{Partial Vertex Cover} admit
polynomial kernels on $c$-closed graphs when parameterized by the respective solution size~\cite{KKS20,KKNS22}.  Recently, FPT-algorithms for further problems related to \textsc{Dominating Set}  such as \textsc{Perfect Code} were obtained by Kanesh et al.~\cite{KMR+22}.
Koana and Nichterlein \cite{KN21} studied the time complexity of finding and enumerating small induced subgraphs in $c$-closed graphs.

Fox et al.~\cite{FRSWW20}  suggested a further graph parameter which combines sparseness and triadic closure, the \emph{weak closure} of a graph.
\begin{definition}[\cite{FRSWW20}]
  \label{def:gamma}
  A graph $G$ is weakly $\gamma$-closed\footnote{To avoid confusion with the closure number~$c$, we denote the weak closure by~$\gamma$ instead of~$c$.} if one of the following holds:
  \begin{itemize}
    \item
      There exists a \emph{weak closure ordering} $\sigma\coloneqq (v_1, \dots, v_n)$ of~$G$, that is, an ordering such that $\cl_{G_i}(v_i) < \gamma$ for all $i \in [n]$ where $G_i \coloneqq  G[\{ v_i, \dots, v_n \}]$.
    \item
      Every induced subgraph $G'$ of $G$ has a vertex $v \in V(G')$ such that $\cl_{G'}(v) < \gamma$.
  \end{itemize}
  The \emph{weak closure number} of a graph~$G$ is the smallest integer~$\gamma$ such that~$G$ is weakly~$\gamma$-closed.
\end{definition}

The weak closure number~$\gamma$ of a
graph~$G$ is never larger than~$d+1$ where~$d$ is the degeneracy of~$G$ and also never
larger than the closure number~$c$ of~$G$. Consequently, fixed-parameter algorithms
for~$\gamma$ are, in principle, preferable to those for the closure number~$c$ or the
degeneracy~$d$. From an application point of view, the weak closure number is also an excellent
parameter in such graphs since it tends to take on very small values in real-world social
networks~\cite{FRSWW20} (see also Table~\ref{tab:c-closure}). Fox et al.~\cite{FRSWW20}
showed that a graph has~$\Oh(3^{\gamma/3}\cdot n^2)$ many maximal cliques which, again using
known clique enumeration algorithms, gives an algorithm that enumerates all maximal cliques
in $\Oh^*(3^{\gamma/3})$~time.
Very recently, it was shown that \textsc{Dominating Set} is FPT with respect to~$\gamma+k$~\cite{LS21} and that several problems like \textsc{Connected Vertex Cover} and \textsc{Induced Matching} admit kernels of size~$k^{\Oh(\gamma)}$~\cite{KKS21}.

\paragraph{Our Results} 
 In a
nutshell, we show that low weak closure helps in solving a variety of graph problems that are
related to searching for sparse or dense subgraphs or for sparse or dense dominating sets. Our main results for clique relaxations are listed in Table~\ref{tab:results}; our main results for variants of \textsc{Dominating Set} are listed in \Cref{tab:results2}. 

Our results improve over the state of
the art in the following sense: the best known tractability results for these problems
employ the degeneracy of the input graph as a parameter and, as discussed above, the weak
closure is essentially a smaller parameter. For some problems, we also provide results
for the $c$-closure parameter. There are two reasons for this. First, for some problems we
obtain better running time bounds for the parameter~$c$. Second, we provide some lower
bounds for the problems under consideration and, whenever possible, we provide them for
the larger closure parameter~$c$.

From a practical point of view, the most important
results are, in our opinion, the enumeration algorithms for maximal non-induced bicliques
and maximal $s$-plexes whose running times grow moderately with~$\gamma$.
Both algorithms are based on the algorithm to enumerate all maximal cliques in weakly~$\gamma$-closed graphs~\cite{FRSWW20}. Independently, Behera et al.~\cite{HR20} obtained similar results for the enumeration of maximal~$s$-plexes and further dense subgraphs parameterized by the~$c$-closure; it seems that their algorithms for $s$-plex enumeration can be adapted to parameterization by weak closure as well~\cite{HR20}. 

\begin{table}[t]
  \centering
  \caption{An overview of our results for clique relaxations. Our algorithms for \textsc{$s$-Plex} and \textsc{Non-Induced $(k_1, k_2)$-Biclique} and our $\Oh(2^{\gamma} n^{s + 3})$-time algorithm for \textsc{$s$-Defective Clique} are based on algorithms enumerating all maximal $s$-plexes (\Cref{theo-enumerate-max-s-plexes}), non-induced bicliques (\Cref{thm:nbiclique-enum}), and $s$-defective cliques (\Cref{thm:defect-clique}), respectively.}
  \begin{tabularx}{\textwidth}{lll}
    \toprule
    Problem & Result & Reference \\
    \midrule
    \textsc{Independent Set} & $\Oh(\gamma k^2)$-vertex kernel & \Cref{cor:is} \\
    \midrule
    \textsc{$s$-Plex} & W[1]-hard for $k$ even if $c = 2$ & \Cref{thm-s-plex-w-hard} \\
    & $\Oh(2^\gamma n^{2s + 1})$-time algorithm for~$s\ge 2$ & \Cref{cor:s-plex-in-fpt-time} \\ 
    \textsc{$s$-Defective Clique} & W[1]-hard for $k$ even if $c = 2$ & \cite{RS08} \\
    & $\Oh(2^{\gamma} n^{s + 3})$-time algorithm & \Cref{cor-s-def-clique-gamma} \\
    & $2^{\Oh(\gamma \sqrt{s} + s \log k)} n^{\Oh(\sqrt{s})}$-time algorithm & \Cref{thm:s-def-clique-gamma} \\ 
    \textsc{2-Club} & NP-hard for $c = 4$ & \Cref{thm-2-club-nphard-constant-closure} \\
    \midrule
    \textsc{Non-Induced $(k_1, k_2)$-Biclique} & $\Oh^*(2^\gamma)$-time algorithm & \Cref{thm-non-induced-k-k-biclique-weakly-gamma} \\
    \textsc{Induced $(k, k)$-Biclique} & $\Oh^*(\gamma^{\Oh(\gamma)})$-time algorithm & \Cref{thm-induced-k-k-biclique-weakly-gamma} \\
    \textsc{Induced $(k_1, k_2)$-Biclique} & $\Oh^*(1.6107^c)$-time algorithm if~$k_1\ge 2$ & \Cref{thm-induced-k-k-biclique-c-closed} \\
    & NP-hard if~$k_1=1$ for $c = 3$ and $\gamma = 2$ & \Cref{thm-induced-k-edge-biclique-hard} \\
    & P for~$c=2$  &  \Cref{cor:induced-biclique-2-closed}\\
    & P for $k_1 = 1$ and $\gamma = 1$ & \Cref{thm-induced-biclique-dicho} \\
    & P for $k_1 \ge 2$ and $\gamma \le k_1 + 1$ & \Cref{thm-induced-biclique-dicho} \\
    & NP-hard for $k_1 \ge 2$ and $\gamma \ge k_1 + 2$ & \Cref{thm-induced-biclique-dicho} \\
    \bottomrule
  \end{tabularx}
  \label{tab:results}
\end{table}

\begin{table}[t]
  \centering
  \caption{An overview of our results for variants of \textsc{Dominating Set.}}
  \begin{tabularx}{\textwidth}{lll}
    \toprule
    Problem & Result & Reference \\
    \midrule
    \textsc{Independent Dominating Set} & $\Oh^*(((\gamma-1)/2)^k k^{2k})$-time algorithm & \Cref{theo-ids-fpt-algo} \\
    & no~$k^{\Oh(1)}$ kernel for~$\gamma=2$& \Cref{thm-ids-no-poly-gamma-2} \\
    & no~$(k+c)^{\Oh(1)}$ kernel & \Cref{thm-ids-no-poly-gamma-2} \\
    \textsc{Dominating Clique} & $\Oh^*((\gamma - 1)^{k-1})$-time algorithm & \Cref{thm-dom-clique-weakly-gamma-fpt} \\
    & NP-hard for~$c=3$ & \Cref{thm-dom-clique-nph-constant-c} \\
    & no $\Oh(k^{c-1-\epsilon})$ kernel & \Cref{thm-dom-clique-kernel-lb} \\
    \bottomrule
  \end{tabularx}
  \label{tab:results2}
\end{table}

\paragraph{Preliminaries}

For $n \in \mathds{N}$, we denote by $[n]$ the set $\{ 1, \dots, n \}$.
For a graph~$G$, we denote by $V(G)$ and $E(G)$ its vertex set and edge set, respectively.
We let~$n \coloneqq   \vert V(G) \vert $ denote the number of vertices.
Let $X \subseteq V(G)$ be a vertex set.
We let $G[X]$ denote the subgraph induced by $X$ and $G - X \coloneqq  G[V(G) \setminus X]$ the graph obtained by removing the vertices of $X$.
We denote by $N_G(X)\coloneqq \{ y \in V(G) \setminus X \mid xy \in E(G), x \in X \}$ and~$N_G[X]\coloneqq N_G(X) \cup X$, the open and closed neighborhood of $X$, respectively.
For all these notations, when $X$ is a singleton $\{ x \}$ we may write~$x$ instead of~$\{x\}$.
The maximum degree of $G$ is~$\Delta \coloneqq  \max_{v \in V(G)} \deg_G(v)$. 
The~\emph{${h}_G$-index} of a graph~$G$ is the largest integer~$h$ such that~$G$ has at least~$h$ vertices of degree at least~$h$~\cite{ES12}.
We may drop the subscript~$\cdot_G$ when it is clear from context.

Instances~$(I,k)$ of a parameterized problem consist of a classical input instance~$I$ and a parameter~$k\in \mathds{N}$. A parameterized problem is \emph{fixed-parameter tractable} if every instance~$(I, k)$ can be solved in $f(k) \cdot  \vert I \vert ^{\Oh(1)}$ time for some computable function~$f$.
An algorithm with such a running time is an \emph{FPT-algorithm}.  
A basic class of parameterized intractability is W[1]: it is widely assumed that W[1]-hard problems do not admit an FPT-algorithm. W[1]-hardness can be shown via a parameterized reduction from a W[1]-hard problem. A \emph{parameterized reduction} from a parameterized problem~$L$ to a parameterized problem~$L'$ is an algorithm that maps each instance~$(I,k)$ of~$L$ in~$f(k)\cdot \vert I \vert ^{\Oh(1)}$~time to an equivalent instance~$(I',k')$ of~$L'$ such that~$k'\le g(k)$ for some computable function~$g$.

A \emph{kernelization} is a polynomial-time algorithm which transforms every instance~$(I, k)$ into an equivalent instance $(I', k')$ such that $ \vert I' \vert  + k' \le g(k)$ for some computable function $g$. If~$g$ is a polynomial function, then we speak of a \emph{polynomial kernel}. A problem is fixed-parameter tractable if and only if it admits a kernelization. There are, however, many problems which are fixed-parameter tractable but do not admit a polynomial kernel under standard complexity-theoretic assumptions.

For more details on parameterized complexity, we refer to the standard monographs~\cite{CFK+15,DF13}.

\section{Sparse Subgraphs}
\label{sec-is-variants}

In this section we study problems that are related to finding sparse subgraphs of a given graph. The most fundamental problem in this context is the \textsc{Independent Set} problem, where one aims to find a large edgeless subgraph or, in other words, a large  set of vertices without edges between them.  

\problemdef{Independent Set}
{A graph $G$ and $k \in \mathds{N}$.}
{Is there a vertex set $S \subseteq V(G)$ such that $ \vert S \vert  \ge k$ and the vertices in $S$ are pairwise nonadjacent?}

Since~\textsc{Independent Set} is NP-hard already on graphs with maximum degree~3~\cite{GJS76}, there is no hope for FPT-algorithms for parameterization by the closure number~$c$ or the weak closure number~$\gamma$. Parameterization by~$c+k$, however, leads to fixed-parameter tractability: in a companion work~\cite{KKS20}, we provided an $\Oh(ck^2)$-vertex kernel for \textsc{Independent Set}.
Here, we strengthen this result by showing that a generalization of \textsc{Independent Set} admits a polynomial kernel with respect to the parameter $k + \gamma$.

The problem that we consider is defined as follows.
Let $\mathcal{G}$ be a graph class.
We say that $\mathcal{G}$ is \emph{monotone} if~$\mathcal{G}$ is closed under vertex and edge deletions. 
That is, if~$G\in\mathcal{G}$, then for each (not necessarily induced) subgraph~$H$ of~$G$ we have~$H\in\mathcal{G}$. The aim is now to find a large induced subgraph belonging to $\mathcal{G}$.

\problemdef{$\mathcal{G}$-Subgraph}
{A graph $G$ and $k \in \mathds{N}$.}
{Is there a vertex set $S \subseteq V(G)$ with $ \vert S \vert  \ge k$ such that $G[S] \in \mathcal{G}$?}

When $\mathcal{G}$ is the class of edgeless graphs, then \textsc{$\mathcal{G}$-Subgraph} is the same as \textsc{Independent Set}.

The kernelization algorithm consists of one reduction rule that works on a weak closure ordering $(v_1, v_2, \dots, v_n)$  of the input graph~$G$.  The correctness of the reduction rule hinges on the following observation about the size of common neighborhoods of nonadjacent vertices~$v_i$ and~$v_j$ when we consider only the vertices with higher index than~$v_i$. To state this observation and the data reduction itself, let~$G_i \coloneqq G[V_i]$ for $V_i = \{ v_i, v_{i + 1}, \dots, v_n \}$.  
\begin{lemma}\label{lem:common-forward-neighbors}
    Let~$j\in [n] \setminus \{ i \}$.
    If~$v_i v_j \notin E(G)$, then~$ \vert N_{G_i}(v_i) \cap N_{G}(v_j) \vert  < \gamma$.
  \end{lemma}
  \begin{proof}
    First, assume that~$j < i$. Then, we have $ \vert N_{G_j}(v_i) \cap N_{G_j}(v_j) \vert  < \gamma$ by the definition of weak closure orderings. Since~$V_i\subseteq V_j$ this implies that~$ \vert N_{G_i}(v_i) \cap N_{G}(v_j) \vert  < \gamma$.
    Second, assume that~$j > i$. By the definition of weak closure orderings we have~$ \vert N_{G_i}(v_i) \cap N_{G_i}(v_j) \vert  < \gamma$.
  \end{proof}
  Lemma~\ref{lem:common-forward-neighbors} allows us to show the correctness of the following reduction rule, which removes vertices with many neighbors that have higher index in the weak closure ordering.
  
\begin{rrule} \label{rule:removevi}
  If $\deg_{G_i}(v_i) \ge \gamma k$, then remove $v_i$.
\end{rrule}
\begin{lemma}
  \Cref{rule:removevi} is correct for monotone graph classes.
\end{lemma}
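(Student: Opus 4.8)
The plan is to show that applying \Cref{rule:removevi} is safe, i.e., $(G, k)$ is a yes-instance of \textsc{$\mathcal{G}$-Subgraph} if and only if $(G - v_i, k)$ is. One direction is immediate from monotonicity: if $S$ is a solution in $G - v_i$, then $G[S]$ is an induced subgraph of $G$ (vertex-deleting nothing new), so $G[S] \in \mathcal{G}$ still holds and $|S| \ge k$, hence $S$ works in $G$. The nontrivial direction is to argue that whenever $G$ has a solution, it has one avoiding $v_i$.

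For this, suppose $S$ is a solution with $v_i \in S$; I want to replace $v_i$ by some other vertex. The key is the closure condition in $G_i$: every vertex $v_j$ with $j > i$ that is \emph{not} a neighbor of $v_i$ satisfies $|N_{G_i}(v_i) \cap N_{G_i}(v_j)| < \gamma$. Now $v_i$ has at least $\gamma k$ neighbors inside $V_i = \{v_i, \dots, v_n\}$ by the rule's precondition. The solution $S$ itself has at most $k$ vertices, and after removing $v_i$ it has at most $k - 1$ vertices, all lying in $V_i$ as well (we may assume, by re-indexing if necessary, that the relevant part of $S$ is contained in $V_i$; more carefully, we should work with the graph $G_i$ and argue that a solution containing $v_i$ can be taken inside $V_i$, or restrict attention to $S \cap V_i$ and handle $S \setminus V_i$ separately — this bookkeeping is part of the proof). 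Each vertex $w \in S \setminus \{v_i\}$ that is a non-neighbor of $v_i$ "blocks" fewer than $\gamma$ neighbors of $v_i$ in $V_i$ (namely those in $N_{G_i}(v_i) \cap N_{G_i}(w)$), since we need a replacement vertex $v^*$ that is non-adjacent to all non-neighbors of $v_i$ in $S$ but we only control common neighborhoods. So the set of neighbors of $v_i$ in $V_i$ that are adjacent to some non-neighbor of $v_i$ in $S$ has size less than $(k-1)\gamma < \gamma k$; therefore there exists a neighbor $v^*$ of $v_i$ in $V_i$ that is \emph{non-adjacent} to every non-neighbor of $v_i$ that lies in $S \setminus \{v_i\}$.

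Now set $S' := (S \setminus \{v_i\}) \cup \{v^*\}$; note $v^* \notin S$ is possible to arrange since the blocked set is strictly smaller than $N_{G_i}(v_i) \cap V_i$ even after discounting the at most $k-1$ vertices of $S$ (re-counting $\gamma k - (k-1) > 0$). I claim $G[S'] \in \mathcal{G}$. By monotonicity it suffices to exhibit $G[S']$ as a subgraph of $G[S]$: the only edges that could be present in $G[S']$ but absent in $G[S]$ are those incident to $v^*$; a vertex $w \in S \setminus \{v_i\}$ adjacent to $v^*$ in $G[S']$ is either a neighbor of $v_i$ (then the edge $wv^*$ maps to an edge $wv_i$ of $G[S]$, or rather we map $v^*$'s adjacencies onto $v_i$'s) or a non-neighbor of $v_i$, but by choice of $v^*$ the latter case gives no edge $wv^*$. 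Hence the map sending $v^* \mapsto v_i$ and fixing every other vertex of $S'$ is an injective homomorphism from $G[S']$ into $G[S]$ whose image is an induced subgraph — so $G[S']$ is (isomorphic to) a subgraph of $G[S] \in \mathcal{G}$, giving $G[S'] \in \mathcal{G}$ by monotonicity. Since $|S'| = |S| \ge k$ and $v_i \notin S'$, this yields a solution of $G - v_i$, completing the equivalence.

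The main obstacle I anticipate is the careful argument that the replacement vertex $v^*$ can be chosen so that $G[S']$ embeds into $G[S]$ \emph{as a subgraph} (not necessarily induced) — in particular, ruling out that $v^*$ has a neighbor in $S \setminus \{v_i\}$ that $v_i$ does not have, \emph{among the non-neighbors of $v_i$}; edges to common neighbors of $v_i$ are harmless because they are dominated by the corresponding edges of $v_i$. This is exactly where the closure bound $|N(v_i) \cap N(w)| < \gamma$ in $G_i$ is used, together with the counting $(k-1)\gamma < \gamma k \le \deg_{G_i}(v_i)$, and one must also make sure $v^*$ itself is not already in $S$ (which the strict inequality handles). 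A secondary point is to make precise, via the closure ordering, that it is enough to reason inside $G_i$ rather than all of $G$.
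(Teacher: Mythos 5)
Your overall strategy is the same as the paper's: replace $v_i$ by a neighbor $u\in N_{G_i}(v_i)$ that is non-adjacent to every non-neighbor of $v_i$ in $S$, so that mapping $u\mapsto v_i$ exhibits $G[S']$ as a subgraph of $G[S]$, and monotonicity finishes the argument. The gap is exactly the step you yourself flag as ``bookkeeping'': your counting needs $|N_{G_i}(v_i)\cap N_{G_i}(w)|<\gamma$ for \emph{every} non-neighbor $w$ of $v_i$ in $S$, but the closure ordering directly gives this only for vertices $v_j$ with $j>i$. A solution may well contain non-neighbors $v_j$ with $j<i$, and neither of your proposed workarounds is sound: you cannot ``re-index'' or assume the relevant part of $S$ lies in $V_i$ (nothing forces a solution containing $v_i$ to avoid $\{v_1,\dots,v_{i-1}\}$), and ``handling $S\setminus V_i$ separately'' is precisely the unproved case, since an edge from the replacement vertex to such a $w$ would destroy the embedding into $G[S]$. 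The missing observation --- which is the claim the paper proves --- is that for $j<i$ the closure ordering applied at position $j$ yields $|N_{G_j}(v_j)\cap N_{G_j}(v_i)|<\gamma$ (as $v_i\in V_j$ is a non-neighbor of $v_j$ in $G_j$), and since $V_i\subseteq V_j$ this bound persists in $G_i$. With that claim your argument goes through inside $G_i$ for all of $S$, exactly as in the paper.

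Two smaller points. First, fix $|S|=k$ at the outset (the problem only guarantees $|S|\ge k$; a $k$-element subset still induces a graph in $\mathcal{G}$ by monotonicity), since your counting uses ``at most $k$ vertices''. Second, your arithmetic for ensuring $v^*\notin S$ does not work as stated: subtracting the fewer than $(k-1)\gamma$ blocked vertices and then the at most $k-1$ solution vertices from $\gamma k$ can leave nothing when $k$ is large compared to $\gamma$. Count instead that each non-neighbor of $v_i$ in $S$ blocks at most $\gamma-1$ neighbors, so at most $(k-1)(\gamma-1)+(k-1)=(k-1)\gamma<\gamma k\le \deg_{G_i}(v_i)$ neighbors of $v_i$ in $G_i$ are excluded, leaving a valid $v^*$ outside $S$.
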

\begin{proof}
  Let $G'\coloneqq  G - v_i$ for $v_i \in V$ with $\deg_{G_i}(v_i) \ge \gamma k$ be the graph obtained by applying Reduction Rule~\ref{rule:removevi}.
  Clearly, if~$G'[S]\in\mathcal{G}$ for some vertex set~$S\subseteq V(G')$, then also~$G[S]\in\mathcal{G}$. 
  
  Hence, it remains to show that if there is a vertex set~$S\subseteq V(G)$ of size $k$ such that~$G[S]\in\mathcal{G}$, then there is a vertex set~$S'\subseteq V(G')$ of size $k$ such that~$G'[S']\in\mathcal{G}$. 
  If~$v_i \notin S$, we observe that~$G'[S] \in \mathcal{G}$.
  Thus, in the following we assume that~$v_i \in S$.  Let~$S_i \coloneqq  S \setminus N_G(v_i)$ be the set of vertices in $S$ that are not adjacent to $v_i$. We show that there is some vertex~$u\notin S$ that is not adjacent to any vertex of~$S_i$.  By \Cref{lem:common-forward-neighbors}, any vertex~$v_j \in S_i$  has less than~$\gamma$ neighbors in~$N_{G_i}(v_i)$.
  Since~$\deg_{G_i}(v_i) \ge \gamma k$ and~$\vert S_i\vert<k$, there exists at least one vertex~$u$ in~$N_{G_i}(v_i)$ that is not adjacent to any vertex from $S_i$. Consequently,~$N_G(u)\cap S\subseteq N_G(v_i)\cap S$.
	Since $\mathcal{G}$ is monotone, we may thus replace~$v_i$ in~$S$ with~$u$: for~$S' \coloneqq (S \setminus \{ v_i \}) \cup \{ u \}$ we have $G'[S'] \in \mathcal{G}$.
\end{proof}

\begin{theorem}
  \label{theorem:monotonesubgraph}
  Let $\mathcal{G}$ be a monotone graph class.
  Then, \textsc{$\mathcal{G}$-Subgraph} has a kernel with at most $\gamma k^2$ vertices.
\end{theorem}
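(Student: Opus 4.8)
The plan is to show that exhaustively applying \Cref{rule:removevi} already yields the desired kernel. Since an induced subgraph of a weakly $\gamma$-closed graph is again weakly $\gamma$-closed, after deleting a vertex one may recompute a closure ordering and test the rule again; as each application deletes a vertex, the process terminates in polynomial time. Let $H$ be the resulting graph and let $w_1, \dots, w_m$ be a closure ordering of $H$ for which the rule no longer fires, with $H_j := H[\{ w_j, \dots, w_m \}]$. Non-applicability then means $\deg_{H_j}(w_j) < \gamma k$ for every $j \in [m]$, that is, every vertex has fewer than $\gamma k$ neighbours that come later in the ordering. Correctness of each individual reduction step is already established, so $(H, k)$ is equivalent to $(G, k)$; it remains to bound $m$.

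To bound $m$, I would build an independent set greedily along the ordering. Process $w_1, \dots, w_m$ in order, maintaining a set $A$ and a set of \emph{used} vertices: whenever the current $w_j$ is not used, add it to $A$ and mark $w_j$ together with $N_{H_j}(w_j)$ as used. The set $A$ is independent, because if $w_a, w_b \in A$ with $a < b$ were adjacent then $w_b \in N_{H_a}(w_a)$ would already have been marked used when $w_a$ entered $A$. Each addition marks at most $\gamma k$ vertices as used (the chosen vertex plus its at most $\gamma k - 1$ later neighbours), so after $k-1$ additions at most $(k-1)\gamma k < \gamma k^2$ vertices are used; hence if $m > \gamma k^2$ some unused vertex always remains and we obtain $|A| \ge k$.

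Finally I would combine the two halves. If $m \le \gamma k^2$, output $(H, k)$. Otherwise $H$, and therefore $G$, contains an independent set $S$ with $|S| = k$, so $G[S]$ is the edgeless graph on $k$ vertices. Since $\mathcal{G}$ is monotone, this graph belongs to $\mathcal{G}$ exactly when $\mathcal{G}$ contains some graph on at least $k$ vertices: in that case $(G,k)$ is a yes-instance and we return a trivial yes-instance, and otherwise $\mathcal{G}$ contains no graph on $\ge k$ vertices, so $(G,k)$ is a no-instance for every $k \ge 1$. In all cases we have produced an equivalent instance on at most $\gamma k^2$ vertices, which is the claimed kernel.

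I expect the delicate point to be the bookkeeping around the closure ordering: after the deletions one must argue that the low-degree bound $\deg_{H_j}(w_j) < \gamma k$ holds for the final graph with respect to one of its own closure orderings — which is precisely what non-applicability of the rule provides — and one must also handle the small case distinction on whether $\mathcal{G}$ contains arbitrarily large graphs, which is what converts ``$H$ has a size-$k$ independent set'' into a statement about the instance itself. The greedy counting argument is routine.
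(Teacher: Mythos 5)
Your proposal is correct and follows essentially the same route as the paper's proof: exhaustively apply Reduction Rule~\ref{rule:removevi}, observe that the surviving graph has an ordering in which every vertex has fewer than $\gamma k$ later neighbours (degeneracy $<\gamma k$), and conclude that any instance with more than $\gamma k^2$ vertices contains an independent set of size $k$, which by monotonicity certifies a yes-instance. The only differences are cosmetic: you spell out the folklore greedy argument for finding the independent set (which the paper merely cites) and you explicitly handle the degenerate case where $\mathcal{G}$ contains no graph on $\ge k$ vertices, which the paper glosses over.
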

\begin{proof}
  One can exhaustively apply \Cref{rule:removevi} in polynomial time.
  The resulting graph has a weak closure ordering where every vertex~$v_i$ has less than~$\gamma k$ neighbors in~$G_i$. Hence, this graph has degeneracy~$d < \gamma k$.
  Note that any graph~$G$ on at least $(d + 1) k$~vertices contains an independent set $S$ of size $k$.
  Due to the monotonicity of~$\mathcal{G}$, $G[S] \in \mathcal{G}$ for an independent set $S$.
  Thus, returning Yes is correct whenever $ \vert V(G) \vert  \ge \gamma k^2$ and we obtain an equivalent instance with at most $\gamma k^2$ vertices.
\end{proof}

Since the class of edgeless graphs is monotone, we obtain the following.

\begin{corollary}
  \label{cor:is}
  \textsc{Independent Set} has a kernel with at most~$\gamma k^2$ vertices.
\end{corollary}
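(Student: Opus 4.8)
The plan is to obtain this as an immediate specialization of \Cref{theorem:monotonesubgraph}. The only thing to check is that the class $\mathcal{E}$ of edgeless graphs is monotone, i.e.\ closed under vertex and edge deletions. This is trivial: if $G$ has no edges, then neither does any $G - X$ nor any graph obtained from $G$ by deleting edges. Hence $\mathcal{E}$ is monotone in the sense required by the theorem.

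Next I would invoke the observation, already recorded just after the definition of \textsc{$\mathcal{G}$-Subgraph}, that when $\mathcal{G} = \mathcal{E}$ the problem \textsc{$\mathcal{G}$-Subgraph} is literally \textsc{Independent Set}: a set $S$ with $|S| \ge k$ satisfies $G[S] \in \mathcal{E}$ precisely when $S$ is an independent set of size at least $k$. Applying \Cref{theorem:monotonesubgraph} with $\mathcal{G} = \mathcal{E}$ then yields a kernel for \textsc{Independent Set} with at most $\gamma k^2$ vertices, which is the claim.

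There is essentially no obstacle here; the corollary is a one-line consequence of the theorem once monotonicity of the edgeless class is noted. (If one wanted, one could also avoid even citing the general theorem and note directly that after exhaustively applying \Cref{rule:removevi} the graph has degeneracy $d < \gamma k$, so on $\ge (d+1)k$ vertices it contains an independent set of size $k$ by the standard degeneracy bound; but reusing \Cref{theorem:monotonesubgraph} is cleaner.)
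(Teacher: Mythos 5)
Your proposal is correct and matches the paper's argument exactly: the paper derives \Cref{cor:is} by noting that the class of edgeless graphs is monotone and applying \Cref{theorem:monotonesubgraph}, which is precisely what you do.
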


\Cref{theorem:monotonesubgraph} also implies kernels for many other problems, including \textsc{Acyclic Subgraph}, \textsc{Bipartite Subgraph}, \textsc{Planar Subgraph}, and \textsc{Bounded Degree Subgraph}.
These problems ask whether the input graph~$G$ contains a vertex set $S \subseteq V(G)$ such that $ \vert S \vert  \ge k$ and $G[S]$ is acyclic, bipartite, planar, or has bounded maximum degree, respectively.
All of these problems are W[1]-hard in general graphs~\cite{KR02}.

\begin{corollary}
Each of  \textsc{Acyclic Subgraph}, \textsc{Bipartite Subgraph}, \textsc{Bounded Degree Subgraph}, and \textsc{Planar Subgraph} has a kernel with $\gamma k^2$ vertices.
\end{corollary}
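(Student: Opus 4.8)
The plan is to invoke \Cref{theorem:monotonesubgraph} directly, so the only thing that needs checking is that each of the four graph classes in question is \emph{monotone} in the sense defined above, i.e., closed under both vertex deletions and edge deletions. Once monotonicity is established, the $\gamma k^2$-vertex kernel is immediate.

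Concretely, I would argue as follows. The class of forests (acyclic graphs) is monotone, since deleting a vertex or an edge cannot create a cycle, so every subgraph of an acyclic graph is acyclic. The class of bipartite graphs is monotone, since a proper $2$-coloring of a graph restricts to a proper $2$-coloring of every subgraph (equivalently, deleting vertices or edges cannot create an odd cycle). For a fixed integer $r$, the class of graphs with maximum degree at most $r$ is monotone, because deleting a vertex or an edge never increases the degree of any remaining vertex. Finally, the class of planar graphs is monotone, as planarity is preserved under taking subgraphs (indeed even under taking minors). Applying \Cref{theorem:monotonesubgraph} to each of these classes then yields the claimed kernel with at most $\gamma k^2$ vertices for \textsc{Acyclic Subgraph}, \textsc{Bipartite Subgraph}, \textsc{Bounded Degree Subgraph}, and \textsc{Planar Subgraph}.

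There is essentially no obstacle in this argument; it is a routine verification. The only point I would be slightly careful about is \textsc{Bounded Degree Subgraph}: the degree bound $r$ must be treated as fixed (or carried along unchanged by the kernelization), so that the target class ``graphs of maximum degree at most $r$'' is well-defined and monotone for the given instance, after which \Cref{theorem:monotonesubgraph} applies verbatim.
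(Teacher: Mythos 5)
Your proposal is correct and matches the paper's (implicit) argument exactly: the paper derives this corollary directly from \Cref{theorem:monotonesubgraph} by observing that each of the four target classes is monotone, which is precisely the routine verification you carry out. Your remark about fixing the degree bound for \textsc{Bounded Degree Subgraph} is a fine point but raises no real issue.
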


Moreover, it follows from \Cref{theorem:monotonesubgraph} that \textsc{Sparsest-$k$-Subgraph}, the problem of finding an induced subgraph with exactly~$k$ vertices and at most $t$~edges, also admits a polynomial kernel in weakly $\gamma$-closed graphs.

\begin{corollary}
  \textsc{Sparsest-$k$-Subgraph} has a kernel with at most $\gamma k^2$ vertices.
\end{corollary}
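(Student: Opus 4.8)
The plan is to cast \textsc{Sparsest-$k$-Subgraph} as a \textsc{$\mathcal{G}$-Subgraph} problem for a suitable monotone class and then invoke \Cref{theorem:monotonesubgraph}. Fix an instance $(G, k, t)$ and let $\mathcal{G}_t$ be the class of all graphs with at most $t$ edges. Deleting a vertex or an edge never increases the number of edges, so $\mathcal{G}_t$ is monotone.

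The key step is to verify that $(G, k, t)$ is a yes-instance of \textsc{Sparsest-$k$-Subgraph} if and only if $(G, k)$ is a yes-instance of \textsc{$\mathcal{G}_t$-Subgraph}. The forward direction is immediate: a set $S$ with $|S| = k$ and $|E(G[S])| \le t$ satisfies $|S| \ge k$ and $G[S] \in \mathcal{G}_t$. For the converse, suppose $S$ witnesses $|S| \ge k$ and $G[S] \in \mathcal{G}_t$; then any $k$-element subset $S'' \subseteq S$ induces a subgraph of $G[S]$, hence $|E(G[S''])| \le t$, so $S''$ is a solution for \textsc{Sparsest-$k$-Subgraph}. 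Note that this argument uses only monotonicity and the fact that $|S| \ge k$ lets us choose such an $S''$.

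Finally I would run the kernelization of \Cref{theorem:monotonesubgraph}, i.e.\ apply \Cref{rule:removevi} exhaustively; since the rule only deletes vertices, we obtain an induced subgraph $G'$ of $G$ together with an equivalent instance $(G', k)$ of \textsc{$\mathcal{G}_t$-Subgraph} on at most $\gamma k^2$ vertices (if more vertices remain, we already answered Yes). Applying the equivalence of the previous paragraph once more, now with $G'$ in place of $G$, and chaining the three equivalences shows that $(G', k, t)$ is an equivalent instance of \textsc{Sparsest-$k$-Subgraph} with at most $\gamma k^2$ vertices. There is essentially no obstacle here: the only thing to be careful about is that \textsc{Sparsest-$k$-Subgraph} asks for \emph{exactly} $k$ vertices while \textsc{$\mathcal{G}$-Subgraph} asks for at least $k$, and this mismatch is harmless precisely because an upper bound on the number of edges is preserved under vertex deletion. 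Alternatively, one can argue directly: after exhaustively applying \Cref{rule:removevi} the graph has degeneracy $d < \gamma k$, and any graph on at least $\gamma k^2 \ge (d+1)k$ vertices has an independent set of size $k$, which has $0 \le t$ edges and is thus a valid solution, so returning Yes is correct once more than $\gamma k^2$ vertices remain.
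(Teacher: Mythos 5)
Your proposal is correct and follows essentially the same route as the paper, which derives the corollary from \Cref{theorem:monotonesubgraph} by taking $\mathcal{G}$ to be the (monotone) class of graphs with at most $t$ edges. Your explicit handling of the ``exactly $k$'' versus ``at least $k$'' mismatch via monotonicity, and the observation that \Cref{rule:removevi} does not depend on $t$, simply spell out the details the paper leaves implicit.
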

This is in sharp contrast to \textsc{Densest-$k$-Subgraph}, where one asks for a set~$S\subseteq V(G)$ of exactly~$k$ vertices such that~$G[S]$ has at least $t$ edges:  \textsc{Densest-$k$-Subgraph} is W[1]-hard with respect to~$k$ even in $2$-closed graphs~\cite{RS08}. 


\section{Clique Relaxations}

In this section, we present algorithms for generalizations of the \textsc{Clique} problem. 
In contrast to the variants of \textsc{Independent Set} considered in \Cref{sec-is-variants}, here we only consider parameterization by the weak closure number~$\gamma$.
Recall, that Fox et al.~\cite{FRSWW20}
showed that a graph has~$\Oh(3^{\gamma/3}\cdot n^2)$ many maximal cliques.
Using known clique enumeration algorithms this gives an algorithm that enumerates all maximal cliques
in $\Oh^*(3^{\gamma/3})$~time.

\subsection{\texorpdfstring{$s$}{s}-Plex}

A clique is a vertex set~$S$ such that each vertex~$v\in S$ is adjacent to each other vertex in~$S$. 
One way to relax the clique definition is to allow each vertex~$v\in S$ to have at most~$s$ non-neighbors in~$S$. 
This idea can be formalized as follows.

\begin{definition}
In a graph~$G=(V,E)$ a set~$S\subseteq V$ is an \emph{$s$-plex} if every vertex in~$G[S]$ has degree at
least~$\vert S\vert - s$ in~$G[S]$.
\end{definition}

Observe that cliques are exactly the 1-plexes. 
Here, we study the task of enumerating maximal $s$-plexes which has received some interest in practice~\cite{CFMPT17,CMSGMV18}, and the problem of finding a sufficiently large $s$-plex, defined as follows.
\problemdef{$s$-Plex}
{A graph $G$ and $k \in \mathds{N}$.}
{Does~$G$ contain an~$s$-plex~$S$ of size at least $k$?}

On the negative side,~\textsc{$s$-Plex} is W[1]-hard when parameterized by~$k$ for all~$s \in \mathds{N}$~\cite{KR02,KHMN09}. On the positive side, a simple algorithm can enumerate all maximal~$s$-plexes of a $d$-degenerate graph in $2^d n^{s + \Oh(1)}$~time~\cite{K16}. 

For the problem of enumerating all maximal $s$-plexes, we obtain an FPT-algorithm for the weak closure number. 
\begin{theorem}
  \label{theo-enumerate-max-s-plexes}
  For $s \ge 2$,  a graph~$G$ has $\Oh(2^{\gamma} n^{2s - 1})$ maximal $s$-plexes.
  Moreover, all maximal $s$-plexes of~$G$ can be enumerated in $\Oh(2^\gamma n^{2s + 1})$ time.
\end{theorem}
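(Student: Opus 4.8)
The plan is to adapt the maximal-clique enumeration algorithm of Fox et al.~\cite{FRSWW20} for weakly $\gamma$-closed graphs. Recall the degeneracy-based approach behind the $2^dn^{s+\Oh(1)}$-time algorithm~\cite{K16}: in a degeneracy ordering, the smallest vertex $v$ of an $s$-plex $S$ has at most $d$ neighbours among the later vertices, so $|S|\le d+s$, and $S$ can be guessed by choosing $v$, its at most $s-1$ non-neighbours in $S$, and a subset of its at most $d$ later neighbours. This fails for weakly $\gamma$-closed graphs, since $v$ may have many later neighbours even when $\gamma$ is small; the remedy is to replace ``bounded later neighbourhood'' by ``two non-adjacent vertices, one early in the closure ordering, have fewer than $\gamma$ common neighbours in the corresponding suffix graph''.

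Concretely, fix a closure ordering $\sigma=v_1,\dots,v_n$, write $G_i:=G[\{v_i,\dots,v_n\}]$, and let $S$ be a maximal $s$-plex with $s\ge 2$. The key structural claim I would prove is the following. If $S$ is a clique it is a maximal clique and is found by~\cite{FRSWW20}; otherwise let $v_a$ be the $\sigma$-smallest vertex of $S$ having a non-neighbour $w$ in $S$. Every vertex of $S$ that is $\sigma$-earlier than $v_a$ is adjacent to all of $S$; call this set $D$, which is then a clique joined to all of $S$. Since $w\in V(G_a)$ and $\cl_{G_a}(v_a)<\gamma$, the set $C:=N_{G_a}(v_a)\cap N_{G_a}(w)$ has fewer than $\gamma$ vertices, and every vertex of $S\setminus(D\cup\{v_a,w\}\cup C)$ is a non-neighbour in $S$ of $v_a$ or of $w$; as each of these two vertices has at most $s-1$ non-neighbours in $S$ (one being the other), at most $2s-4$ vertices remain. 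Hence $S=D\cup\{v_a,w\}\cup(S\cap C)\cup E$ with $|C|<\gamma$ and $|E|\le 2s-4$, so $|S\setminus D|<\gamma+2s-3$.

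The enumeration then guesses $v_a$, $w$, the subset $S\cap C$ of $C$, and the at most $2s-4$ vertices of $E$, fixing $R:=S\setminus D$; there are $\Oh(2^\gamma n^{2s-2})$ possibilities for $R$. For each $R$, the core $D$ must be a maximal clique of the subgraph induced by the $\sigma$-predecessors of $v_a$ that are adjacent to all of $R$ (else $R\cup D$ is not a maximal $s$-plex), so one enumerates these cliques via~\cite{FRSWW20} and outputs $R\cup D$ whenever it is a maximal $s$-plex, discarding the rest and de-duplicating. Together with the at most $\Oh(3^{\gamma/3}n^2)\le\Oh(2^\gamma n^2)$ maximal cliques (which cover the clique-type maximal $s$-plexes), and after accounting for the polynomial cost of verifying each candidate, this is intended to yield the stated $\Oh(2^\gamma n^{2s-1})$ bound on the number of maximal $s$-plexes and the $\Oh(2^\gamma n^{2s+1})$ running time.

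The main obstacle I foresee is the clique core $D$: it can be arbitrarily large (e.g.\ a big clique each of whose vertices is joined to two fixed non-adjacent vertices), so it genuinely has to be enumerated, yet doing so via~\cite{FRSWW20} costs an $\Oh(3^{\gamma/3})$ factor which, multiplied by the $2^\gamma$ already spent on $S\cap C$, overshoots the target $2^\gamma$. The analysis must therefore charge at most one exponential-in-$\gamma$ factor per maximal $s$-plex, exploiting that once $v_a$, $w$, and the bounded part are fixed, maximality (together with $s\ge 2$) constrains the admissible pairs $(S\cap C,D)$ tightly; one must also ensure that the filtering step produces exactly the maximal $s$-plexes and that each is generated only boundedly often.
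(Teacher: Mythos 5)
Your structural decomposition of a non-clique maximal $s$-plex $S$ (the $\sigma$-minimal vertex $v_a$ with a non-neighbour, a non-neighbour $w\in V(G_a)$, the set $C=N_{G_a}(v_a)\cap N_{G_a}(w)$ of size less than $\gamma$, at most $2s-4$ further exceptional vertices, and a clique core $D$ of $\sigma$-predecessors joined to all of $S$) is correct, but the proof does not close, and the gap is exactly the one you flag: the core $D$. Given a guess $R=S\setminus D$, the core must indeed be a maximal clique of the graph induced by the predecessors of $v_a$ adjacent to all of $R$, but it is not determined by $R$ (several disjoint cliques, each fully joined to two non-adjacent vertices $v_a,w$, yield many maximal $2$-plexes sharing the same $R$), and the only bound available per guess is the $\Oh(3^{\gamma/3}n^2)$ maximal-clique bound of Fox et al. Multiplying this with the $\Oh(2^\gamma n^{2s-2})$ guesses gives $\Oh(2^\gamma 3^{\gamma/3}n^{2s})$ candidates, which overshoots the claimed $\Oh(2^\gamma n^{2s-1})$ both in the base of the exponential and in the degree of the polynomial; the charging argument needed to pay only one exponential-in-$\gamma$ factor per maximal $s$-plex is precisely what is missing, so as written the proposal establishes only a weaker bound (and a correspondingly weaker enumeration time).

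The paper sidesteps the clique-core problem with a different mechanism: induction on vertex deletion instead of a global guess. Take $v$ with $\cl_G(v)<\gamma$ and $G':=G-v$, and let $\mathcal{S}'$ be the (recursively computed) maximal $s$-plexes of $G'$. Every maximal $s$-plex $S$ of $G$ either avoids $v$, or contains $v$ with $S\setminus\{v\}$ maximal in $G'$ -- these are read off from $\mathcal{S}'$ -- or is ``new''. The new ones are bounded by $2^{\gamma-1}n^{2s-2}+2^{\gamma-1}n^{s}$ by distinguishing whether $S$ contains a non-neighbour $u$ of $v$ (then $S\setminus N_G[v]$ and $(S\cap N_G[v])\setminus N_G(u)$ each have fewer than $s$ vertices and $S\cap N_G(v)\cap N_G(u)$ lies in a set of size less than $\gamma$) or $S\subseteq N_G[v]$ (then a witness $u\notin N_G(v)$ of the non-maximality of $S\setminus\{v\}$ plays the role of your $w$). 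In this scheme every $s$-plex with a large part adjacent to everything -- the objects you would have to enumerate as $D$ -- is of the first two types and charged to $\mathcal{S}'$, so no clique enumeration is ever performed; telescoping over the $n$ deletions gives the $2^\gamma n^{2s-1}+1$ count, and an $\Oh(n^2)$ maximality test per candidate plus de-duplication gives the $\Oh(2^\gamma n^{2s+1})$ running time. If you want to rescue your route, you would need to prove that, per guess $R$, only polynomially many (in fact $\Oh(n)$) cores $D$ can complete $R$ to a maximal $s$-plex, which is not argued and does not follow from maximality alone.
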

\begin{proof}
First, we show the bound on the number of maximal $s$-plexes in a weakly~$\gamma$-closed graph.
  Let $v \in V(G)$ be a vertex such that $\cl_G(v) < \gamma$ and let $G' \coloneqq  G - v$ be the graph obtained by deleting $v$.
  Let $\mathcal{S}$ and $\mathcal{S}'$ be the collections of all maximal $s$-plexes (without duplicates) in $G$ and $G'$, respectively.
  We show that $ \vert \mathcal{S} \vert  \le  \vert \mathcal{S}' \vert  + 2^\gamma n^{2s-2}$ and that $\mathcal{S}$ can be constructed from $\mathcal{S}'$ in $\Oh( \vert \mathcal{S}' \vert  \cdot n + 2^\gamma n^{2s + 1})$~time.
  To obtain the bound we identify the following four types of maximal $s$-plexes in~$G$:
  \begin{enumerate}[label=Type \arabic*:,leftmargin=*]
    \item
      $S$ does not contain $v$. Then, $S$ is also maximal in $G'$.
    \item
      $S$ contains $v$ and $S \setminus \{ v \}$ is maximal in $G'$.
    \item
      $S$ contains $v$,~$S \setminus \{ v \}$ is not maximal in $G'$, and $S$ contains a non-neighbor of $v$ (that is, $S \setminus N_G(v) \ne \emptyset$).
    \item
      $S$ contains $v$,~$S \setminus \{ v \}$ is not maximal in $G'$, and $S$ is contained in the neighborhood of~$v$, that is,~$S \subseteq N_G[v]$.
  \end{enumerate}

  Clearly, each maximal $s$-plex is of one of these four types.
  It is easy to see that there are $ \vert \mathcal{S}' \vert $ maximal $s$-plexes of Type 1 and Type 2.
  Hence, it remains to bound the number of maximal $s$-plexes of Type ~3 and Type~$4$. 
  
  Next, we bound the number of maximal $s$-plexes of Type~$3$. 
  Consider such an $s$-plex~$S$.
  We may partition~$S$ into three parts as follows:
  We first divide~$S$ into~$S_v \coloneqq  S \cap N_G[v]$ and~$\widetilde{S_{v}} \coloneqq  S \setminus N_G[v]$.
  We divide~$S_v$ further into~$S_{uv} \coloneqq  S_v \cap N_G(u)$ and~$\widetilde{S_{uv}} \coloneqq  S_v \setminus N_G(u)$ for some vertex~$u \in \widetilde{S_v}$.
  Here,~$u$ is any non-neighbor of~$v$ to exploit the weak~$\gamma$-closure.
  By the definition of $s$-plexes,~$ \vert \widetilde{S_v} \vert  < s$ and~$ \vert \widetilde{S_{uv}} \vert  < s$.
  Hence, there are at most~$n^{2s - 2}$ choices for~$\widetilde{S_v}$ and~$\widetilde{S_{uv}}$.
  For~$S_{uv}$, there are at most~$2^{\gamma - 1}$ choices because~$S_{uv} \subseteq N_G(v) \cap N_G(u)$ and~$ \vert N_G(v) \cap N_G(u) \vert  < \cl_G(v) < \gamma$.
  Overall, there are at most~$2^{\gamma - 1} n^{2s - 2}$ maximal $s$-plexes of Type 3.

  It remains to bound the number of maximal $s$-plexes of Type 4.
  Let~$S$ be one of these $s$-plexes.
  Since~$S' \coloneqq  S \setminus \{ v \}$ is not maximal in~$G'$, there exists a vertex~$u \in V(G) \setminus S$ such that~$S' \cup \{ u \}$ is an $s$-plex in~$G'$.
  If~$u \in N_G(v)$, then~$S \cup \{ u \}$ is also an~$s$-plex in~$G$, which contradicts the fact that~$S$ is maximal in~$G$.
  Hence, we can assume that~$u \notin N_G(v)$.
  Then,~$S \setminus N_G(u)$ contains at most~$s - 1$ vertices, which in turn implies that there are at most~$n^{s - 1}$ choices for~$S \setminus N_G(u)$.
  Since~$S\subseteq N(v)$ we observe that~$S \cap N_G(u) \subseteq N_G(v) \cap N_G(u)$ and~$ \vert N_G(v) \cap N_G(u) \vert  \le \cl_G(v) < \gamma$. 
  Thus, we have~$2^{\gamma - 1}$ choices for~$S \cap N_G(v)$.
  All in all, there are at most~$2^{\gamma - 1} n^s$ maximal $s$-plexes of Type 4.
  
By the above analysis, we obtain~$ \vert \mathcal{S} \vert  \le  \vert \mathcal{S}' \vert  + 2^{\gamma - 1} n^{2s - 2} + 2^{\gamma - 1} n^s \le  \vert \mathcal{S}' \vert  + 2^\gamma n^{2s - 2}$.
Next, we bound the overall number of maximal $s$-plexes in a graph with~$n$ vertices.
To this end, let~$a_n$ be the number of maximal $s$-plexes in weakly $\gamma$-closed graphs on $n$ vertices.
  Clearly,~$a_1 = 1$. Furthermore, the above analysis showed that~$a_{n} - a_{n - 1} =  \vert \mathcal{S} \vert  -  \vert \mathcal{S}' \vert  \le 2^{\gamma} n^{2s - 2}$.
  Hence, by induction we obtain~$a_n = a_1 + \sum_{i = 2}^{n} (a_{i} - a_{i - 1}) \le 2^\gamma n^{2s - 1} + 1$.
In other words, a weakly~$\gamma$-closed graph on~$n$ vertices has at most~$2^\gamma n^{2s - 1} + 1$ maximal $s$-plexes.

Second, we bound the overall time needed to enumerate all maximal $s$-plexes.
To obtain this bound, we again let~$v\in V(G)$ be any vertex such that~$\cl_G(v) < \gamma$, let~$G' \coloneqq  G - v$ be the graph obtained by deleting $v$, and let~$\mathcal{S}$ and~$\mathcal{S}'$ be the collections of all (without duplicates) maximal~$s$-plexes in~$G$ and~$G'$.
Observe that all maximal~$s$-plexes of Type 1 and 2 can be found in~$\Oh( \vert \mathcal{S}' \vert \cdot n)$~time.  
Furthermore, maximal~$s$-plexes of Type 3 and 4 can be enumerated in $\Oh((2^{\gamma-1}n^{2s-2}+2^{\gamma-1}n^s) \cdot n^2)$~time, because it takes $\Oh(n^2)$ time to verify whether a vertex set is a maximal $s$-plex or not.
  Finally, we remove duplicates in $\Oh(( \vert \mathcal{S}' \vert +2^{\gamma-1}n^{2s-2}+2^{\gamma-1}n^s)\cdot n)=\Oh( \vert \mathcal{S}' \vert \cdot n+2^\gamma n^{2s-1})$~time, using radix sort.
  Altogether, the algorithm needs $\Oh( \vert \mathcal{S}' \vert \cdot n + 2^\gamma n^{2s})$~time to enumerate all maximal~$s$-plexes in~$G$. 
  Recall that~$a_n$ is the number of maximal $s$-plexes in a weakly~$\gamma$-closed graph on~$n$ vertices. 
  Thus, all maximal $s$-plexes of a weakly~$\gamma$-closed graph on~$n$ vertices can be enumerated in~$\Oh((a_n \cdot n + 2^\gamma n^{2s}) \cdot n) = \Oh(2^\gamma n^{2s + 1})$ time.
\end{proof}

A factor of~$n^{2s - 2}$ for the number of maximal~$s$-plexes in \Cref{theo-enumerate-max-s-plexes} is unavoidable:
Consider a graph $G$ consisting of two cliques~$C_1$ and~$C_2$ of equal size. 
Clearly, $G$ is~$1$-closed.
Each subset of~$C_1$ of size exactly~$s - 1$ and each subset of~$C_2$ of size exactly~$s - 1$ together form a maximal~$s$-plex. 
Hence, there exist~$1$-closed graphs with~$\Omega((n/2)^{2s - 2})$ maximal~$s$-plexes. 

For \textsc{$s$-Plex},
Theorem~\ref{theo-enumerate-max-s-plexes} directly implies the following.

\begin{corollary}
\label{cor:s-plex-in-fpt-time}
For~$s\ge 2$, \textsc{$s$-Plex} can be solved in $\Oh(2^{\gamma} n^{2s+1})$~time.
\end{corollary}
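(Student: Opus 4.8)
The plan is to reduce \textsc{$s$-Plex} directly to the enumeration guarantee of \Cref{theo-enumerate-max-s-plexes}. The key (and essentially only) observation needed is that a graph~$G$ contains an $s$-plex on at least~$k$ vertices if and only if it contains an inclusion-\emph{maximal} $s$-plex on at least~$k$ vertices: any $s$-plex~$S$ is contained in some inclusion-maximal $s$-plex~$S'$, and then $|S'| \ge |S|$; conversely every maximal $s$-plex is in particular an $s$-plex. So it suffices to enumerate all maximal $s$-plexes and test whether one of them is large enough.

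Concretely, I would first run the enumeration algorithm of \Cref{theo-enumerate-max-s-plexes}, which for $s \ge 2$ lists all maximal $s$-plexes of the weakly $\gamma$-closed input graph~$G$ in $\Oh(2^\gamma n^{2s+1})$ time; recall also that their number is $\Oh(2^\gamma n^{2s-1})$. I would then perform a single pass over this list, maintaining the largest cardinality encountered, and output Yes iff this maximum is at least~$k$. This post-processing takes $\Oh(2^\gamma n^{2s-1} \cdot n) = \Oh(2^\gamma n^{2s})$ time, which is dominated by the cost of the enumeration step, so the overall running time is $\Oh(2^\gamma n^{2s+1})$ as claimed.

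There is no real obstacle here: the entire content is the invocation of \Cref{theo-enumerate-max-s-plexes}, plus the one-line argument that checking maximal $s$-plexes suffices and that the scan does not dominate the running time. I would also note in passing that the restriction $s \ge 2$ is inherited from \Cref{theo-enumerate-max-s-plexes}; the excluded case $s = 1$ is exactly \textsc{Clique}, which is handled in $\Oh^*(3^{\gamma/3})$ time by the maximal-clique enumeration algorithm of Fox et al.\ via the same ``enumerate and take the largest'' strategy.
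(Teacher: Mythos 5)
Your proposal is correct and matches the paper's (implicit) argument exactly: the corollary is obtained by running the enumeration algorithm of \Cref{theo-enumerate-max-s-plexes} and checking whether some maximal $s$-plex has at least $k$ vertices, which is valid since every $s$-plex is contained in an inclusion-maximal one. The additional scan is dominated by the enumeration time, so the stated $\Oh(2^{\gamma} n^{2s+1})$ bound follows just as you argue.
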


Next, we show that there is presumably no $f(k) \cdot n^{\Oh(1)}$-time algorithm for \textsc{$s$-Plex} in~$2$-closed graphs.
Moreover, our reduction also shows that \textsc{$s$-Plex} is W[1]-hard for the parameter $k + s+d$.

\begin{theorem}
  \label{thm-s-plex-w-hard}
  \textsc{$s$-Plex} is W[1]-hard in 2-closed graphs when parameterized by $k + s + d$.
\end{theorem}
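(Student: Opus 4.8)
The plan is to give a parameterized reduction from \textsc{Clique} (or, where convenient, \textsc{Multicolored Clique}) parameterized by the target clique size~$\kappa$. From a graph~$H$ I will construct a graph~$G$ that is $2$-closed and whose degeneracy~$d$ is bounded by a polynomial in~$\kappa$, together with integers~$k$ and~$s$, so that~$G$ has an~$s$-plex on at least~$k$ vertices if and only if~$H$ has a clique on~$\kappa$ vertices. Since~$k$ and~$d$ are then both bounded in terms of~$\kappa$, this yields W[1]-hardness of \textsc{$s$-Plex} on $2$-closed graphs when parameterized by~$k+d$; in particular it rules out an~$f(k)\cdot n^{\Oh(1)}$-time algorithm already on $2$-closed graphs under the usual assumption, and since the construction also keeps the degeneracy bounded it answers the open question of~\cite{K16} for the parameter~$k+d$.

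Two structural observations guide the construction. First, the canonical way to keep a graph $2$-closed is edge subdivision: turning each edge~$uv$ of~$H$ into a degree-two ``edge vertex''~$b_{uv}$ adjacent only to the ``vertex vertices''~$a_u$ and~$a_v$ yields a graph in which the only non-adjacent pair with a common neighbor is a pair~$a_u,a_v$ with~$uv\in E(H)$ (sharing the unique neighbor~$b_{uv}$), and whose degeneracy is at most~$2$. Plain subdivisions are, however, far too sparse: the~$\binom{\kappa}{2}$ edge vertices of a~$\kappa$-clique have degree only~$2$, so the set consisting of them and the~$\kappa$ vertex vertices is an~$s$-plex only for~$s$ close to its own size, which makes huge unstructured sets~$s$-plexes as well. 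The remedy is to attach, vertex-locally, to each~$a_v$ and each~$b_{uv}$ a small ``padding block'' -- a disjoint clique-like block of size polynomial in~$\kappa$, joined by a matching-like connection so that no new non-adjacent pair gains two common neighbors -- which keeps~$G$ $2$-closed while keeping the degeneracy bounded by the block size. The second observation fixes how large the blocks and the target~$k$ must be: in a $2$-closed graph a non-clique~$s$-plex~$S$ has at most~$2s-1$ vertices, because for a non-edge~$uv$ inside~$S$ we have~$\deg_{G[S]}(u)+\deg_{G[S]}(v)\le (|S|-2)+|N_{G[S]}(u)\cap N_{G[S]}(v)|\le |S|-1$ while both degrees are at least~$|S|-s$; together with the degeneracy bound~$|S|\le d+s$ this tells us that in~$G$ only ``tight'' sets of one prescribed size can be~$s$-plexes. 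I will choose the padding sizes, then~$k$ as the size of the intended solution (the~$\kappa$ vertex vertices of a clique, the~$\binom{\kappa}{2}$ edge vertices of its edges, and all attached padding) and~$s$ as~$k$ minus the common degree of that solution, so that~$k$ and~$s$ land exactly at this extremal point.

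The forward direction is immediate: a~$\kappa$-clique of~$H$ gives, via the intended vertex set, an~$s$-plex of size~$k$ in~$G$. The main obstacle -- and where I expect the real work to lie -- is the converse: that \emph{every}~$s$-plex~$S$ of size at least~$k$ in~$G$ comes from a~$\kappa$-clique. Note first that~$G$ has no clique on~$k$ vertices (its only cliques live inside single padding blocks), so any such~$S$ is automatically a non-clique and the bound~$|S|\le 2s-1$ applies. The plan is then to combine~$|S|\le 2s-1$ and~$|S|\le d+s$ with the chosen values of~$k$ and~$s$ to force~$|S|=k$ and to force nearly every vertex of~$S$ to attain its degree requirement with equality; this ``rigidity'' forces every padding block met by~$S$ to lie entirely in~$S$, pins down exactly how many vertex and edge vertices~$S$ contains, forces an edge vertex of~$S$ to pull in its two endpoint vertices, and -- through the exact arithmetic designed into the padding sizes -- leaves no slack for~$S$ to contain two vertex vertices from the same class or an inconsistent set of edge vertices; the edges whose edge vertices lie in~$S$ then span a~$\kappa$-clique. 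Making this last step water-tight -- choosing the padding sizes so that the equation~$|S|=k$ has a unique non-negative solution in the relevant counts, and so that padding blocks cannot be used partially or recombined into a spurious, slightly larger~$s$-plex -- is the delicate heart of the argument; verifying $2$-closedness, the degeneracy bound, and the degree computations is then routine bookkeeping.
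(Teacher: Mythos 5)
Your overall strategy is the same as the paper's: reduce from \textsc{Clique}, use (subdivided) edge gadgets to keep the graph $2$-closed, inflate the gadgets with small cliques so that the intended solution has a uniform minimum degree, set $s$ to the solution size minus that degree, and observe that both the solution size and the degeneracy are bounded in the \textsc{Clique} parameter. However, the proposal stops exactly where the proof has to be carried out. The construction is never fixed (block sizes, how the ``matching-like connections'' are wired, how $2$-closure survives them), and the backward direction is only announced as ``the delicate heart of the argument'' rather than proved. In the paper this backward direction \emph{is} the proof: every gadget vertex is given degree exactly $k-1$ in the whole graph, so any gadget vertex in an $s$-plex of size $k'$ drags its entire closed neighborhood (the whole gadget, both endpoints, and the shared set $T$) into the solution; then a two-sided count over $\ell=|S'\cap V(G)|$ --- the upper bound $|S'|\le \ell+(k-3)+(k-1)\binom{\ell}{2}$ because only whole gadgets of edges with both endpoints selected can appear, and the lower bound $|S'|\ge \ell+(k-3)+(k-1)\ell(k-1)/2$ because each selected original vertex has all its neighbors inside gadgets and needs $k-1$ of them --- forces $\ell=k$ and forces the selected original vertices to span $\binom{k}{2}$ edges, i.e.\ a clique. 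Nothing playing the role of this rigidity claim or this double counting is established in your write-up, so the soundness of the reduction, which is the whole content of the theorem, is missing.

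Beyond the missing proof, two concrete features of your sketch look problematic. First, the ``extremal point'' you want to exploit cannot be the $2$-closure bound $|S|\le 2s-1$: in any construction of the kind you describe the intended solution consists of $\Omega(\kappa^2)$ vertex-disjoint gadgets/blocks while the degeneracy is only of the order of a single block, so $k>2d+1$ and hence $s\ge k-d>(k+1)/2$, i.e.\ $2s-1>k$; moreover the intended solution contains non-adjacent pairs with no common neighbor inside it (padding vertices of two different blocks), and for such a pair your own inequality gives $|S|\le 2s-2$, so the $2s-1$ bound is strictly slack. The only bound that can be tight is $|S|\le d+s$, and the rigidity must therefore come from vertices whose global degree equals exactly $|S|-s$, i.e.\ from the construction, not from $2$-closure. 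Second, attaching a private padding block to every original vertex $a_v$ is dangerous precisely for the converse: if $a_v$ can satisfy its degree demand inside its own block, nothing forces a selected vertex-vertex to be incident to many selected edge gadgets, and that incidence lower bound is exactly what rules out solutions assembled from edge gadgets whose edges do not form a clique (the paper avoids this by giving original vertices no padding at all, so their only neighbors are gadget vertices, and by sharing one set $T$ across all gadgets). Relatedly, ``the arithmetic leaves no slack'' needs an actual argument: different multisets of block sizes can sum to the same total, so uniqueness of the decomposition of $k$ is a design constraint you would have to verify, not a free consequence. As it stands, the proposal is a plausible plan in the spirit of the paper's proof, but it does not yet constitute one.
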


\begin{proof}
  We reduce from \textsc{Clique}.
  An illustration of our construction is shown in Fig.~\ref{fig-example-s-plex}.
  Let $(G, k)$ be an instance of \textsc{Clique} with $k \ge 4$.
  First, we subdivide each edge $uv$ of $G$ twice.
  That is, we remove the edge $uv$ and add edges~$u x_{u}^v$, $x_{u}^v x_v^u$, and~$x_v^u v$, where $x_{u}^v$ and $x_v^u$ are two new vertices.
  Second, for each edge $uv \in E(G)$, we introduce~$k - 3$ vertices $x_{uv}^1, \dots, x_{uv}^{k - 3}$.
  Let $X_{uv} \coloneqq \{ x_{u}^v, x_v^u, x_{uv}^1, \dots, x_{uv}^{k - 3} \}$ and let $X \coloneqq \bigcup_{uv \in E(G)} X_{uv}$.
  We then add edges so that $X_{uv}$ forms a clique.
  Lastly, we introduce a set $T \coloneqq \{ t^1, \dots, t^{k - 3} \}$ of~$k - 3$~vertices and add edges between $x_{uv}^i$ and~$t^i$ for each $uv \in E(G)$ and each $i \in [k - 3]$.
  Let $G'$ be the resulting graph.
  
  \begin{figure}
  \centering
  \begin{tikzpicture}
  \node[label=below:{$u$}](gu) at (0, 0) [shape = circle, draw, fill=black, scale=0.07ex]{};
  \node[label=below:{$v$}](gv) at (1, 0) [shape = circle, draw, fill=black, scale=0.07ex]{};
  \node[label=above:{$w$}](gw) at (0.5, 1) [shape = circle, draw, fill=black, scale=0.07ex]{};
  \node[label=below:{$y$}](gy) at (2, 0) [shape = circle, draw, fill=black, scale=0.07ex]{};
  \path [-,line width=0.2mm](gu) edge (gv);
  \path [-,line width=0.2mm](gu) edge (gw);
  \path [-,line width=0.2mm](gw) edge (gv);
  \path [-,line width=0.2mm](gv) edge (gy);

    \node[label=below:{$u$}](u) at (5, 0) [shape = circle, draw, fill=black, scale=0.07ex]{};
  \node[label=below:{$v$}](v) at (7, 0) [shape = circle, draw, fill=black, scale=0.07ex]{};
  \node[label=above:{$w$}](w) at (6, 2) [shape = circle, draw, fill=black, scale=0.07ex]{};
  \node[label=below:{$y$}](y) at (9, 0) [shape = circle, draw, fill=black, scale=0.07ex]{};
    \path [-,line width=0.2mm](u) edge (v);
  \path [-,line width=0.2mm](u) edge (w);
  \path [-,line width=0.2mm](w) edge (v);
  \path [-,line width=0.2mm](v) edge (y);
  
    \node[label=below:{$x^v_u$}](xvu) at (5.6667, 0) [shape = circle, draw, fill=black, scale=0.07ex]{};
    \node[label=below:{$x^u_v$}](xuv) at (6.3333, 0) [shape = circle, draw, fill=black, scale=0.07ex]{};
    \node[label=below:{$x^y_v$}](xvu) at (7.6667, 0) [shape = circle, draw, fill=black, scale=0.07ex]{};
    \node[label=below:{$x^v_y$}](xuv) at (8.3333, 0) [shape = circle, draw, fill=black, scale=0.07ex]{};
    \node[label=left:{$x^w_u$}](xwu) at (5.3333, 0.6667) [shape = circle, draw, fill=black, scale=0.07ex]{};
    \node[label=left:{$x^u_w$}](xuw) at (5.6667, 1.3333) [shape = circle, draw, fill=black, scale=0.07ex]{};
    \node[label=right:{$x^v_w$}](xwv) at (6.3333, 1.3333) [shape = circle, draw, fill=black, scale=0.07ex]{};
    \node[label=right:{$x^w_v$}](xvw) at (6.6667, 0.6667) [shape = circle, draw, fill=black, scale=0.07ex]{};
    
        \node[label=above:{$x^1_{uv}$}](x1uv) at (5.6, -1.2) [shape = circle, draw, fill=black, scale=0.07ex]{};
        \node[label=above:{$x^{k-3}_{uv}$}](xkuv) at (6.4, -1.2) [shape = circle, draw, fill=black, scale=0.07ex]{};
        \draw (6,-0.6) ellipse (0.9cm and 0.9cm);
        \node[label=left:{$X_{uv}$}](xuv) at (5.3, -0.9) {};
        
        \node[label=above:{$x^1_{vy}$}](x1yv) at (7.6, -1.2) [shape = circle, draw, fill=black, scale=0.07ex]{};
        \node[label=above:{$x^{k-3}_{vy}$}](xkyv) at (8.4, -1.2) [shape = circle, draw, fill=black, scale=0.07ex]{};
        \draw (8,-0.6) ellipse (0.9cm and 0.9cm);
        \node[label=right:{$X_{vy}$}](xyv) at (8.7, -0.9) {};
  
\node[label=below:{$t^1$}](t1) at (6.6, -2) [shape = circle, draw, fill=black, scale=0.07ex]{};
\node[label=below:{$t^{k-3}$}](tk) at (7.4, -2) [shape = circle, draw, fill=black, scale=0.07ex]{};
\draw (7,-2) ellipse (1.2cm and 0.55cm);
\path [-,line width=0.2mm](t1) edge (x1uv);
\path [-,line width=0.2mm](t1) edge (x1yv);
\path [-,line width=0.2mm](tk) edge (xkuv);
\path [-,line width=0.2mm](tk) edge (xkyv);
\node[label=right:{$T$}](xyv) at (8, -2) {};

\node[label=left:{$a)$}](xyv) at (0, 2.5) {};
\node[label=left:{$b)$}](xyv) at (5, 2.5) {};
  \end{tikzpicture}
  \label{fig-example-s-plex}
  \caption{Illustration of the construction of Theorem~\ref{thm-s-plex-w-hard}. $a)$ shows the graph~$G$ of the \textsc{Clique} instance and~$b)$ shows the graph~$G'$ of \textsc{$s$-Plex}. 
  Here, the sets~$X_{uw}$ and~$X_{vw}$ are not drawn.
  Note that the sets~$X_{uv}$ and~$X_{vy}$ are cliques.}
\end{figure}

  It is easy to verify that $G'$ is 2-closed.
  Moreover, $G'$ is $(k - 1)$-degenerate:
  Each vertex $x \in X_{uv}$ is of degree $k - 1$ and there is no edge in $G' - X$.
  We show that $G$ has a clique of size $k$ if and only if $G'$ has an $s$-plex of size $k'$, where $k' \coloneqq 2k - 3 + (k - 1) \binom{k}{2}$ and $s \coloneqq k' - (k - 1)$.

  Suppose that $G$ has a clique $S$ of size exactly $k$.
  Let $S' = S \cup T \cup \bigcup_{u,v \in S} X_{uv}$.
  Observe that $ \vert S' \vert  = k'$.
  We verify that each vertex in $G'[S']$ has degree at least $k' - s = k - 1$.
  \begin{itemize}
    \item
      Let $v \in S$.
      By construction, we have $x_{v}^u \in N_{G'}(v)$ for each $u \in S \setminus \{ v \}$.
      Since $x_{v}^u$ is contained  in $S'$, $v$ has at least $k - 1$ neighbors in $G'[S']$.
    \item
      We have $\deg_{G'[S']}(t^i) \ge \binom{k}{2} \ge k - 1$ for each $i \in [k - 3]$, because $t^i$ is adjacent to $x_{uv}^i$ for all $uv \in E(G[S])$.
    \item
      Consider $x_{u}^v$ for $uv \in E(G[S])$.
      We have $u \in N_{G'}(x_{u}^v)$ by construction.
      Moreover, $x_{u}^{v}$ is adjacent to all $k - 2$ vertices in $X_{uv} \setminus \{ x_{u}^v \}$.
      Thus, we have $\deg_{G'[S']}(x_{u}^v) \ge k - 1$.
    \item
      Consider $x_{uv}^i$ for $uv \in E(G[S])$ and $i \in [k - 3]$.
      We have $t^i \in N_{G'}(x_{uv}^i)$ by construction.
      Moreover, $x_{uv}^i$ is adjacent to all $k - 2$ vertices in $X_{uv} \setminus \{ x_{uv}^i \}$.
      Thus, we have $\deg_{G'[S']}(x^i_{uv}) \ge k - 1$.
  \end{itemize}
  Thus, every vertex has at least $k - 1 = k' - s$ neighbors in $G'[S']$.

  Conversely, suppose that $S'$ is an $s$-plex of size exactly $k'$.
  We start with the following claim.
  \begin{claim}
    If $S'$ contains a vertex $x$ of $X_{uv}$ for some $uv \in E(G)$, then $S'$ also contains all vertices in $N_{G'}[X_{uv}]$, that is, $\{u, v\}\cup X_{uv}\cup T \subseteq S'$.
  \end{claim}
  \begin{claimproof}
    By construction, $\deg_{G'}(x) = k - 1$.
    Since each vertex in $G'[S']$ has degree~$ \vert S' \vert  - s \ge k - 1$ by the definition of $s$-plexes, we have $N_{G'}[X_{uv}] \subseteq S'$.
  \end{claimproof}

  Let $\ell =  \vert S' \cap V(G) \vert $.
We conclude that there are at most $\binom{\ell}{2}$ edges $uv \in E(G)$ with $X_{uv} \cap S' \ne \emptyset$ since otherwise the above claim would imply that~$\vert S' \cap V(G) \vert>\ell$.
  By construction, we have $ \vert X_{uv} \vert  = k - 1$ for each $uv \in E(G)$.
  Thus, we have
  \begin{align*}
     \vert S' \vert  =  \vert S' \cap V(G) \vert  +  \vert T \vert  +  \vert S' \cap X \vert  \le \ell + k - 3 + (k - 1) \binom{\ell}{2}.
  \end{align*}
  Since $ \vert S' \vert  = k' = 2k - 3 + \binom{k}{2}$, we obtain $\ell \ge k$.

  By definition, each vertex $v \in S' \cap V(G)$ has at least $ \vert S' \vert  - s \ge k - 1$ neighbors in $G'[S']$. 
  So there are at least $\ell (k - 1) / 2$ edges $uv \in E(G)$ such that $S' \cap X_{uv} \ne \emptyset$.
  From the above claim we know that~$X_{uv}\subseteq S'$ for each~$X_{uv}$ with~$X_{uv}\cap S'\ne\emptyset$.
  Hence, we obtain that
  \begin{align*}
     \vert S' \vert  \ge  \vert S' \cap V(G) \vert  +  \vert T \vert  +  \vert S' \cap X \vert  \ge \ell + k - 3 + (k - 1) \cdot \ell (k - 1) / 2.
  \end{align*}
  Since $ \vert S' \vert  = k' = 2k - 3 + (k - 1) \binom{k}{2}$, we obtain $\ell = k$ and $ \vert S' \cap X \vert  = (k - 1) \binom{k}{2}$.
  Since~$\vert S' \cap X \vert  = (k - 1) \binom{k}{2}$ we conclude that each two vertices in~$S'\cap V(G)$ are adjacent.
  Thus,~$S' \cap V(G)$ is a clique of $k$ vertices in $G$ by construction.
\end{proof}

\subsection{\texorpdfstring{$s$}{s}-Defective Clique}

A clique is a vertex set~$S$ such that there exists no vertex pair in~$S$ which is nonadjacent. 
One way to relax the clique definition is to allow up to~$s$ nonadjacent vertex pairs. 
This idea can be formalized as follows.

\begin{definition}
In a graph~$G=(V,E)$ a set~$S\subseteq V$ is an \emph{$s$-defective clique} if~$G[S]$ has at least~$\binom{ \vert S \vert }{2} - s$ edges.
\end{definition}

Note that cliques are exactly the 0-defective cliques.
Similar to $s$-plexes, we consider the problems of enumerating all maximal $s$-defective cliques and finding a sufficiently large $s$-defective clique in (weakly) closed graphs.
The latter problem can be formalized as follows.

\problemdef{$s$-Defective Clique}
{A graph~$G$ and~$k \in \mathds{N}$.}
{Does~$G$ contain an~$s$-defective clique~$S$ of size at least $k$?}

One can show that \textsc{$s$-Defective Clique} is W[1]-hard with respect to~$k$ even if $c = 2$ by adapting a previous hardness proof for \textsc{Densest-$k$-Subgraph} on~$2$-degenerate graphs~\cite[Theorem 20]{RS08}.

First, we study the problem of enumerating all maximal $s$-defective cliques.
To obtain an FPT-algorithm for this problem for the weak closure number, we adapt the algorithm of \Cref{theo-enumerate-max-s-plexes}.
The only difference to the proof of Theorem~\ref{theo-enumerate-max-s-plexes} is the following:
For bounding the number of~$s$-plexes of Type~$3$ the sets~$\widetilde{S_v}$ and~$\widetilde{S_{uv}}$ were bounded by~$s-1$ each. 
Since a maximal~$s$-defective clique contains at most~$s$ non-edges and~$uv\notin E(G)$ we observe that~$ \vert \widetilde{S_v}\cup\widetilde{S_{uv}} \vert <s$. 
Hence, there are at most~$2^{\gamma-1}n^{s-1}$ maximal~$s$-defective cliques of Type~$3$. 
Thus, we can bound the overall number of maximal~$s$-defective cliques by~$2^{\gamma}n^{s+1}+1$.  
Since the rest of the proof is completely analogous, we omit it.

\begin{theorem}
  \label{thm:defect-clique}
  For~$s\ge 2$, there are~$\Oh(2^{\gamma}n^{s+1})$ maximal~$s$-defective cliques in weakly~$\gamma$-closed graphs and they can be enumerated in~$\Oh(2^\gamma n^{s+3})$~time.
\end{theorem}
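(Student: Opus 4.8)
The plan is to reuse, essentially verbatim, the recursive vertex-deletion machinery from the proof of \Cref{theo-enumerate-max-s-plexes}. Pick a vertex $v$ with $\cl_G(v)<\gamma$, set $G':=G-v$, and compare the collection $\mathcal{S}$ of maximal $s$-defective cliques of $G$ with the collection $\mathcal{S}'$ of those of $G'$. Every maximal $s$-defective clique $S$ of $G$ falls into one of the same four types: (1)~$v\notin S$, so $S$ is maximal in $G'$; (2)~$v\in S$ and $S\setminus\{v\}$ is maximal in $G'$; (3)~$v\in S$, $S\setminus\{v\}$ is not maximal in $G'$, and $S\setminus N_G[v]\ne\emptyset$; (4)~$v\in S$, $S\setminus\{v\}$ is not maximal in $G'$, and $S\subseteq N_G[v]$. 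Types~1 and~2 are in injective correspondence with $\mathcal{S}'$, contributing $|\mathcal{S}'|$ in total and being listable in $\Oh(|\mathcal{S}'|\cdot n)$ time, so the whole argument again reduces to counting and enumerating the $s$-defective cliques of types~3 and~4.

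The only step that uses the defective-clique constraint differently from the $s$-plex constraint is the bound for type~3. Writing $S_v:=S\cap N_G[v]$ and $\overline{S_v}:=S\setminus N_G[v]$, choosing some $u\in\overline{S_v}$, and splitting $S_v$ disjointly into $S_{uv}:=S_v\cap N_G(u)$ and $\overline{S_{uv}}:=S_v\setminus N_G(u)$, observe that each vertex of $\overline{S_v}$ is a non-neighbor of $v$, each vertex of $\overline{S_{uv}}$ is a non-neighbor of $u$, and $uv\notin E(G)$; since $\overline{S_v}$ and $\overline{S_{uv}}$ are disjoint, these witness distinct non-edges of $G[S]$, so the budget of $s$ non-edges forces $|\overline{S_v}\cup\overline{S_{uv}}|<s$ (in contrast to the $s$-plex proof, where $\overline{S_v}$ and $\overline{S_{uv}}$ are bounded by $s-1$ \emph{separately}, giving $n^{2s-2}$). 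Hence there are at most $n^{s-1}$ choices for the pair $(\overline{S_v},\overline{S_{uv}})$ — the partition being recoverable from the union together with $v$ and the lexicographically smallest admissible $u$ — and at most $2^{\gamma-1}$ choices for $S_{uv}\subseteq N_G(u)\cap N_G(v)$ because $|N_G(u)\cap N_G(v)|\le\cl_G(v)<\gamma$; since $S=\overline{S_v}\cup\overline{S_{uv}}\cup S_{uv}$, this yields at most $2^{\gamma-1}n^{s-1}$ maximal $s$-defective cliques of type~3. Type~4 is handled exactly as in \Cref{theo-enumerate-max-s-plexes}: the non-edge budget bounds $|S\setminus N_G(u)|$, and $|N_G(u)\cap N_G(v)|<\gamma$ bounds the number of choices for $S\cap N_G(v)$, contributing at most $2^{\gamma-1}n^{s}$ candidates.

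Putting these together gives $|\mathcal{S}|\le|\mathcal{S}'|+2^{\gamma-1}n^{s-1}+2^{\gamma-1}n^{s}\le|\mathcal{S}'|+2^{\gamma}n^{s}$, so if $a_n$ is the maximum number of maximal $s$-defective cliques in an $n$-vertex weakly $\gamma$-closed graph, then $a_1=1$ and $a_n-a_{n-1}\le 2^{\gamma}n^{s}$, whence $a_n\le 2^{\gamma}n^{s+1}+1$ by telescoping. For the running time, one enumerates the $\Oh(2^{\gamma}(n^{s-1}+n^{s}))$ type-3/type-4 candidate sets, tests each in $\Oh(n^2)$ time for being a maximal $s$-defective clique, merges with the recursively computed type-1/2 list, and removes duplicates by radix sort in $\Oh((|\mathcal{S}'|+2^{\gamma}n^{s})\cdot n)$ time; unrolling the recursion along the closure ordering multiplies by another factor of $n$, for a total of $\Oh(2^{\gamma}n^{s+3})$. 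I expect the main obstacle to be purely in the bookkeeping rather than in any new idea: one must be careful that the ``$<s$'' bound for $\overline{S_v}\cup\overline{S_{uv}}$ correctly handles the overlap of witnessed non-edges (in particular that $v\in\overline{S_{uv}}$ and $u\in\overline{S_v}$, so the non-edge $uv$ is not double-counted), that every counted tuple reconstructs $S$ uniquely, and that the per-level polynomial overheads for maximality testing and duplicate removal combine to exactly the claimed $\Oh(2^{\gamma}n^{s+3})$ bound.
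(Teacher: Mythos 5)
Your proposal follows the paper's route for \Cref{thm:defect-clique} exactly: delete a vertex $v$ with $\cl_G(v)<\gamma$, classify maximal $s$-defective cliques into the same four types as in \Cref{theo-enumerate-max-s-plexes}, recount type~3 using the shared non-edge budget, and telescope. However, two of your quantitative claims do not hold as stated, and one of them is a genuine gap. For type~3, the inference ``$\overline{S_v}$ and $\overline{S_{uv}}$ are disjoint, so the witnessed non-edges are distinct, hence $|\overline{S_v}\cup\overline{S_{uv}}|<s$'' is exactly where the double-counting you worried about bites: the witnessed non-edges are $\{wv\mid w\in\overline{S_v}\}$ and $\{wu\mid w\in\overline{S_{uv}}\}$, and the non-edge $uv$ is witnessed twice (once by $u\in\overline{S_v}$, once by $v\in\overline{S_{uv}}$), so the budget only gives $|\overline{S_v}\cup\overline{S_{uv}}|\le s+1$, which is tight, e.g., when every non-edge of $S$ is incident to $v$. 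This slip is harmless: since $v$ is fixed there are still only $\Oh(n^{s})$ choices, type~3 contributes $2^{\gamma-1}n^{s}$ per level, and the telescoped bound $\Oh(2^{\gamma}n^{s+1})$ survives.

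The real problem is type~4, which is \emph{not} ``exactly as in \Cref{theo-enumerate-max-s-plexes}.'' There, the witness $u$ has at most $s-1$ non-neighbors in the $s$-plex $S'\cup\{u\}$; in an $s$-defective clique $S'\cup\{u\}$, all $s$ non-edges may be incident to $u$, so $|S'\setminus N_G(u)|\le s$ is the best direct bound. The verbatim analogue therefore gives $n\cdot n^{s}\cdot 2^{\gamma-1}=2^{\gamma-1}n^{s+1}$ type-4 candidates per level, and your recursion then only yields $\Oh(2^{\gamma}n^{s+2})$ maximal $s$-defective cliques and $\Oh(2^{\gamma}n^{s+4})$ time, a factor $n$ worse than the theorem. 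To recover the claimed bound you need to split type~4: if $S'$ contains a non-edge, the budget leaves $u$ at most $s-1$ non-neighbors in $S'$, giving $2^{\gamma-1}n^{s}$ candidates as desired; if $S'$ (hence $S$) is a clique, then maximality of $S$ as an $s$-defective clique forces $S$ to be a maximal clique of the current graph, and by \cite{FRSWW20} there are only $\Oh(3^{\gamma/3}n^{2})\subseteq\Oh(2^{\gamma}n^{s})$ of those for $s\ge 2$. The published argument is terse on this point (it recomputes only type~3 and calls the rest ``completely analogous''), so this is a step you must close explicitly rather than one the sketch closes for you; with it, the per-level increment is $\Oh(2^{\gamma}n^{s})$ and both the counting bound $\Oh(2^{\gamma}n^{s+1})$ and the $\Oh(2^{\gamma}n^{s+3})$ enumeration time follow as you describe.
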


A factor of~$n^{s + 1}$ in the number of maximal~$s$-defective cliques in \Cref{thm:defect-clique} is inevitable due to the following lower bound:
Again we consider the graph~$G$ consisting of two disjoint cliques $C_1$ and $C_2$, each of size $n/2$.
For each clique~$C \subseteq C_1$ of size $s$ and each~$v \in C_2$, the vertex set~$C \cup \{ v \}$ is a maximal $s$-defective clique.
Thus, $G$ has $\Omega((n/2)^{s+1})$ maximal $s$-defective cliques.

Second, we study~\textsc{$s$-Defective Clique}, the decision problem of finding a sufficiently large $s$-defective clique. Theorem~\ref{thm:defect-clique} directly implies the following.

\begin{corollary}
\label{cor-s-def-clique-gamma}
\textsc{$s$-Defective Clique} can be solved in~$\Oh(2^\gamma n^{s+3})$~time.
\end{corollary}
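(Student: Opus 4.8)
The plan is to obtain the corollary directly from the enumeration result in \Cref{thm:defect-clique}. The key observation is that being an $s$-defective clique is a monotone property: if $G[S]$ has at most $s$ non-edges, then for every subset $S' \subseteq S$ the graph $G[S']$ has at most $s$ non-edges as well, since the non-adjacent vertex pairs contained in $S'$ form a subset of those contained in $S$. Consequently, every $s$-defective clique of $G$ can be extended, by repeatedly adding vertices, to a \emph{maximal} $s$-defective clique of at least the same size. In particular, $G$ contains an $s$-defective clique on at least $k$ vertices if and only if $G$ contains a maximal $s$-defective clique on at least $k$ vertices.

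Given this equivalence, the algorithm is straightforward. First I would invoke \Cref{thm:defect-clique} to enumerate all maximal $s$-defective cliques of $G$ in $\Oh(2^\gamma n^{s+3})$ time; by the same theorem there are only $\Oh(2^\gamma n^{s+1})$ of them. While enumerating, I would record the largest cardinality seen, and finally answer Yes if and only if this value is at least $k$; correctness is immediate from the equivalence above. The additional bookkeeping costs only $\Oh(2^\gamma n^{s+2})$ time and is absorbed by the enumeration time, so the total running time is $\Oh(2^\gamma n^{s+3})$, as claimed.

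Since \Cref{thm:defect-clique} is stated for $s \ge 2$, the cases $s \in \{0,1\}$ should be treated separately: for $s = 0$, \textsc{$s$-Defective Clique} is exactly \textsc{Clique}, which can be solved within the stated bound using the maximal clique enumeration algorithm for weakly $\gamma$-closed graphs of Fox et al.~\cite{FRSWW20}, and $s = 1$ can be handled analogously by reducing to a constant number of clique searches. I do not expect a genuine obstacle here: all of the combinatorial and algorithmic work lies in \Cref{thm:defect-clique}, and what remains is only the monotonicity argument together with a size check over the enumerated list.
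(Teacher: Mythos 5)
Your proposal is correct and matches the paper's (implicit) argument: the corollary is obtained directly from \Cref{thm:defect-clique} by enumerating all maximal $s$-defective cliques and checking whether one has at least $k$ vertices, which is justified exactly by the hereditary/monotonicity observation you make. Your extra remark about handling $s\in\{0,1\}$ via clique enumeration is a harmless refinement the paper does not spell out.
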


Next, we present faster algorithms in terms of the dependence on~$s$. 
First, we show that each~$s$-defective clique can be covered by $\Oh(\sqrt{s})$ maximal cliques. 

\begin{lemma}
  \label{lemma:cc}
  Let $S$ be an $s$-defective clique for $s \ge 1$.
  Then, there is a collection $\mathcal{C}$ of at most $\Oh(\sqrt{s})$ cliques such that $S \subseteq \bigcup_{C \in \mathcal{C}} C$.
\end{lemma}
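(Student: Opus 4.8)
The plan is to translate the statement into a bound on the chromatic number of a graph with few edges. Consider the complement $H := \overline{G[S]}$ on the vertex set $S$. Since $S$ is an $s$-defective clique, $G[S]$ misses at most $s$ of the $\binom{|S|}{2}$ possible edges, so $|E(H)| \le s$. A proper coloring of $H$ with $r$ colors partitions $S$ into $r$ sets that are independent in $H$, and each such set is a clique in $G[S]$ and hence in $G$. So it suffices to show $\chi(H) = \Oh(\sqrt{s})$ and then let $\mathcal{C}$ be the collection of color classes; if later steps need the members of $\mathcal{C}$ to be maximal cliques, each class can be greedily extended to a maximal clique of $G$ without changing $|\mathcal{C}|$.

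To bound $\chi(H)$, I would use the classical fact that a graph with $m$ edges has chromatic number at most $\tfrac{1}{2}(1 + \sqrt{1 + 8m})$. Fix an optimal proper coloring of $H$ with color classes $V_1, \dots, V_r$, where $r = \chi(H)$. For any two distinct classes $V_i, V_j$ there is at least one edge of $H$ with one endpoint in each: otherwise $V_i \cup V_j$ would be independent in $H$, and recoloring all of $V_j$ with the color of $V_i$ would give a proper coloring with $r - 1$ colors, contradicting optimality. Hence $|E(H)| \ge \binom{r}{2}$, so $\binom{r}{2} \le s$, which yields $r \le \tfrac{1}{2}(1 + \sqrt{1 + 8s}) = \Oh(\sqrt{s})$.

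Combining the two observations, $\mathcal{C} := \{V_1, \dots, V_r\}$ is a collection of at most $\Oh(\sqrt{s})$ cliques of $G$ with $\bigcup_{C \in \mathcal{C}} C = S$, as claimed. If one wants this constructively in polynomial time, note that only the at most $2s$ vertices of $S$ incident to a non-edge of $G[S]$ matter: the remaining vertices are universal in $G[S]$ and can all be placed in a single clique. Starting from the trivial coloring of these (at most $2s$) vertices, one repeatedly merges two color classes with no edge between them until no such pair remains; the same counting shows the process terminates once the number of classes has dropped to $\Oh(\sqrt{s})$.

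I do not expect a genuine obstacle here. The only point needing care is the chromatic-number bound for sparse graphs, and the class-merging argument above handles both its existential and its algorithmic form. It is worth checking the degenerate cases — e.g.\ $H$ edgeless, where $\binom{r}{2} \le 0$ forces $r \le 1$ and the single clique $S$ already suffices — but these are subsumed by the bound since it is only an upper bound on $r$.
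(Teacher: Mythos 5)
Your proof is correct and takes essentially the same route as the paper's: pass to the complement $H$ of $G[S]$, which has at most $s$ edges, and bound $\chi(H)$ by $\Oh(\sqrt{s})$ using the observation that in an optimal coloring every pair of color classes is joined by at least one edge, so $\binom{\chi(H)}{2} \le s$. The extra remarks on constructiveness and degenerate cases are fine but not needed for the statement as given.
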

\begin{proof}
Let~$H$ denote the complement graph of~$G[S]$.
  By definition,~$H$ has at most $s$~edges.
  Since a clique becomes an independent set in the complement graph, it suffices to show that there is an $\Oh(\sqrt{s})$-coloring of $H$ (that is,~$\chi(H) = \Oh(\sqrt{s})$).
  Although this is known folklore, we describe its proof for the sake of completeness.
  Consider an optimal coloring.
  Then, for each pair of colors, say red and blue, there is at least one edge with one endpoint red and the other blue (otherwise we find a coloring with fewer colors).
  Recall that~$H$ is the complement graph of~$G[S]$. 
  Hence,~$H$ has at most~$s$ edges, we obtain~$s \ge \binom{\chi(H)}{2}$, or equivalently, $\chi(H) \le\sqrt{2s + \frac{1}{4}} + \frac{1}{2}$.
\end{proof}

Note that a trivial brute-force algorithm can enumerate all (not necessarily maximal) cliques in $\Oh(2^d d n)$~time.
Lemma~\ref{lemma:cc} says that each~$s$-defective clique is covered by at most $\Oh(\sqrt{s})$~cliques.
Hence, by a simple brute-force we obtain the following.

\begin{theorem}
  \textsc{$s$-Defective Clique} can be solved in $2^{\Oh(d \sqrt{s})} n^{\Oh(\sqrt{s})}$ time.
\end{theorem}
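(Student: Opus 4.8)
The plan is to reduce the problem to enumerating cliques and then brute-forcing over small collections of them. By \Cref{lemma:cc}, every $s$-defective clique $S$ is contained in the union of at most $t := \Oh(\sqrt{s})$ cliques of $G$. These need not be maximal, but any clique is contained in a maximal one, so $S$ is in fact contained in the union of at most $t$ \emph{maximal} cliques. This suggests the following algorithm: enumerate all maximal cliques of $G$, then iterate over all $t$-tuples of maximal cliques, and for each such tuple check whether the union contains an $s$-defective clique of size at least $k$.

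First I would invoke the classical bound that a $d$-degenerate graph has $\Oh(2^d n)$ maximal cliques, all enumerable in $\Oh(2^d d n)$ time; call this set $\mathcal{M}$, so $|\mathcal{M}| = \Oh(2^d n)$. Next, for each choice of $t$ maximal cliques $C_1, \dots, C_t \in \mathcal{M}$ — there are $|\mathcal{M}|^t = 2^{\Oh(d \sqrt{s})} n^{\Oh(\sqrt{s})}$ such choices — I would set $U := C_1 \cup \dots \cup C_t$ and search inside $G[U]$ for an $s$-defective clique of at least $k$ vertices. The key observation making this last step affordable is that although $|U|$ can be large, we are looking for a set that is a union of $t$ cliques each coming from a \emph{fixed} $C_i$; equivalently, it suffices to decide, within the $t$-partite-like structure, whether we can pick subsets $S_i \subseteq C_i$ whose union has the required size and at most $s$ non-edges. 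Since $s$ non-edges can be "charged" across at most $2s$ vertices, one can further argue that an optimal solution differs from $U' := \bigcup_i C_i'$ — where $C_i'$ is a size-capped restriction — only in $\Oh(s)$ vertices, or alternatively guess the $\Oh(s)$ vertices incident to a non-edge and verify the rest forms a clique. This brings the inner check to $n^{\Oh(\sqrt{s})}$ time (guessing an $\Oh(\sqrt{s})$-sized hitting set of the non-edges, as in \Cref{lemma:cc}, then completing greedily).

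Combining, the total running time is $2^{\Oh(d\sqrt{s})} n^{\Oh(\sqrt{s})} \cdot n^{\Oh(\sqrt{s})} = 2^{\Oh(d \sqrt{s})} n^{\Oh(\sqrt{s})}$, as claimed. The main obstacle I anticipate is the inner subroutine: once the $t$ maximal cliques are fixed, correctly and efficiently extracting a maximum-size $s$-defective clique contained in their union requires care, because naively one still has exponentially many subsets to consider. The cleanest way around this is probably to observe that if $S \subseteq C_1 \cup \dots \cup C_t$ is $s$-defective, then the $\Oh(\sqrt{s})$ cliques witnessing $S$ via \Cref{lemma:cc} can each be taken to lie entirely within one of the $C_i$ (after refining the collection), so one may simply guess, for each of the $\Oh(\sqrt{s})$ color classes, which $C_i$ it sits in and how large it is — but the size interactions across classes (to respect the at-most-$s$ non-edge budget) are what need a genuine, if routine, combinatorial argument. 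I would make this precise by guessing the $\le \Oh(\sqrt{s})$ "defect-covering" vertices explicitly and then, for the remaining clique part, solving a simple maximization over subset sizes, which is polynomial.
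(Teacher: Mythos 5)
There is a genuine gap, and it sits exactly where you flagged it: the inner subroutine. The paper avoids this problem entirely by enumerating \emph{all} cliques, not just the maximal ones. In a $d$-degenerate graph there are only $\Oh(2^d n)$ cliques and they can be listed in $\Oh(2^d d n)$ time; moreover, the proof of \Cref{lemma:cc} gives a proper $\Oh(\sqrt{s})$-coloring of the complement of $G[S]$, so the color classes are cliques of $G$ that are \emph{subsets} of $S$, i.e.\ $S$ is exactly the union of $\Oh(\sqrt{s})$ enumerated cliques. Hence one simply tries every $\Oh(\sqrt{s})$-tuple of (not necessarily maximal) cliques and checks in polynomial time whether its union is an $s$-defective clique with at least $k$ vertices; this gives $\bigl(2^d n\bigr)^{\Oh(\sqrt{s})} \cdot n^{\Oh(1)} = 2^{\Oh(d\sqrt{s})} n^{\Oh(\sqrt{s})}$ with no inner search at all. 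By passing to maximal cliques you lose equality ($S$ is only \emph{contained} in the union), and you must then solve an optimization problem inside $G[U]$ that your sketch does not handle.

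Concretely, the patches you propose for that inner step do not work within the claimed budget. First, \Cref{lemma:cc} does not give an ``$\Oh(\sqrt{s})$-sized hitting set of the non-edges'': it bounds the chromatic number of the complement of $G[S]$, not its vertex cover number, and if the non-edges form a matching of size $s$ every hitting set needs $s$ vertices. Guessing all $\Oh(s)$ vertices incident to non-edges explicitly costs $n^{\Oh(s)}$, which exceeds the claimed $n^{\Oh(\sqrt{s})}$ factor. Second, ``completing greedily'' or ``maximization over subset sizes'' is not justified: after fixing the defect vertices, the remaining vertices must form a clique fully adjacent to them, and whether a pair of vertices from different $C_i$'s is adjacent depends on the actual vertices chosen, not on cardinalities; finding a maximum clique inside a union of $t$ cliques is a maximum independent set problem in a $t$-partite graph, which is NP-hard already for $t = 3$, so no routine polynomial completion is available. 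Even importing the paper's own machinery for this inner step (as in \Cref{thm:s-def-clique-gamma} via \Cref{cor-s-def-clique-gamma}) would introduce an extra $2^{\Oh(s \log k)}$ factor that the statement you are proving does not allow. The fix is simply to enumerate all cliques and use that $S$ equals, rather than is contained in, the union of the guessed cliques.
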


We can also use Lemma~\ref{lemma:cc} to obtain an algorithm in terms of the smaller parameter~$\gamma$ instead of the degeneracy~$d$ without increasing the exponent of~$n$.

\begin{theorem}
\label{thm:s-def-clique-gamma}
  \textsc{$s$-Defective Clique} can be solved in $2^{\Oh(\gamma \sqrt{s} + s \log k}) n^{\Oh(\sqrt{s})}$~time.
\end{theorem}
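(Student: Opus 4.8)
The plan is to combine Lemma~\ref{lemma:cc} with the facts that a weakly $\gamma$-closed graph has only $\Oh(3^{\gamma/3}n^2)$ maximal cliques and that they can be listed in $\Oh^*(3^{\gamma/3})$ time~\cite{FRSWW20}. A few normalizations first. If $s\ge\binom{k}{2}$ every $k$-set works, so we may assume $s<k^2$, hence $\log s=\Oh(\log k)$; a subset of an $s$-defective clique is again one, so it suffices to seek a solution $S$ with $|S|=k$; and by enumerating maximal cliques we may assume $\omega(G)<k$, so every solution has at least one non-edge.

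The structural backbone is the following decomposition of a solution $S$: at most $2s$ vertices of $S$ are endpoints of a non-edge of $G[S]$; call this set $S^*$ and put $Q:=S\setminus S^*$. Then $Q$ is a clique, every vertex of $Q$ is adjacent to every vertex of $S$, and $S\setminus C\subseteq S^*$ for any clique $C\supseteq Q$. By \Cref{lemma:cc}, $S^*$ is covered by $\Oh(\sqrt s)$ cliques, and $Q$ is one clique; the coloring realizing the cover can be chosen so that $Q$ sits inside one of these cliques. Extending each cover clique to a maximal clique of $G$, we see that $S$ lies in a union of $t=\Oh(\sqrt s)$ maximal cliques $C_1,\dots,C_t$ with $Q\subseteq C_1$. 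Hence the algorithm enumerates all $t$-tuples of maximal cliques of $G$; there are $\Oh((3^{\gamma/3}n^2)^t)=2^{\Oh(\gamma\sqrt s)}n^{\Oh(\sqrt s)}$ of them and they are listable within this time, and for the right tuple $S\subseteq U:=C_1\cup\dots\cup C_t$ with $Q\subseteq C_1$.

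It then remains to solve, for each such $U$ and distinguished clique $C_1$, the selection problem: is there an $s$-defective set $S\subseteq U$ with $|S|=k$ whose universal part lies in $C_1$? Writing $A:=S\setminus C_1\subseteq S^*$, we have $|A|\le 2s$, $A$ is $s$-defective, and $A\subseteq C_2\cup\dots\cup C_t$. Given $A$, the best choice of $S\cap C_1$ is greedy: order $C_1\setminus A$ by the number of non-neighbours in $A$ and take the longest prefix $B$ with $\sum_{u\in B}|A\setminus N(u)|\le s-e_{\overline G}(A)$; then $A\cup B$ is optimal for this $A$. So the whole problem reduces to searching over the $\Oh(s)$-vertex sets $A$.

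The hard part — and where the $2^{\Oh(s\log k)}$ factor is spent — is carrying out this search in $k^{\Oh(s)}\cdot n^{\Oh(1)}$ time instead of the naive $n^{\Oh(s)}$, since the $\le 2s$ vertices of $A$ may lie inside large cliques and cannot simply be enumerated. Here one again exploits that $A$ is $s$-defective: the number of isomorphism types of $G[A]$, a clique minus $\le s$ edges on $\le 2s$ vertices, is $\binom{\Oh(s^2)}{s}=2^{\Oh(s\log s)}=2^{\Oh(s\log k)}$, and fixing a type prescribes a partition of $A$ into $\Oh(\sqrt s)$ cliques of known sizes with all adjacencies between them determined. One then realizes this prescribed structure inside $C_2\cup\dots\cup C_t$ while maximizing $|A|+|B|$ by a dynamic program over the $\Oh(\sqrt s)$ cliques whose state records the non-edge budget already committed and a bounded summary of the partial choice (counts of chosen vertices per signature class, capped at $k$), giving $k^{\Oh(s)}$ states. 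The main technical obstacle is to argue that such a summary is at once sufficient to drive the greedy completion inside $C_1$ and small enough to meet the claimed bound; once that is done, multiplying the $2^{\Oh(\gamma\sqrt s)}n^{\Oh(\sqrt s)}$ tuples by the $2^{\Oh(s\log k)}n^{\Oh(1)}$ per-tuple cost gives the stated running time.
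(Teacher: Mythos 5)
Your overall skeleton matches the paper's: enumerate all maximal cliques in $\Oh^*(3^{\gamma/3})$ time, answer Yes if one has size at least $k$, use \Cref{lemma:cc} to argue that a solution lies in the union of $\Oh(\sqrt{s})$ maximal cliques, and try all such collections, of which there are $(3^{\gamma/3}\cdot n^{\Oh(1)})^{\Oh(\sqrt{s})}=2^{\Oh(\gamma\sqrt{s})}n^{\Oh(\sqrt{s})}$. The divergence is in the per-collection subproblem, and this is where your proposal has a genuine gap --- and an unnecessary one. You declare the search over the sets $A=S\setminus C_1$ to be the hard part, on the grounds that the at most $2s$ vertices of $A$ may lie inside large cliques and cannot simply be enumerated, and you then sketch an isomorphism-type enumeration plus a dynamic program whose correctness you explicitly leave open (you say yourself that the main technical obstacle is to argue that the bounded summary suffices). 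As written, that step is not proved. Moreover, its premise contradicts a normalization you already made: after the initial check, $G$ has no clique of size $k$, so \emph{every} maximal clique has fewer than $k$ vertices, and the union $U$ of the $\Oh(\sqrt{s})$ chosen maximal cliques has only $\Oh(k\sqrt{s})$ vertices. Since the sought solution (hence also $A$) lies inside $U$, plain enumeration of all candidate sets $A$ of size at most $2s$ costs only $(k\sqrt{s})^{\Oh(s)}=2^{\Oh(s\log k)}$ (using $s<k^2$, as you note), so no type enumeration, no dynamic program, and no summary argument are needed.

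The paper closes the per-collection step even more directly: weak closure is hereditary under taking induced subgraphs, so $G[U]$ is weakly $\gamma$-closed on $\Oh(k\sqrt{s})$ vertices, and one simply runs the $\Oh(2^{\gamma}n^{s+3})$-time algorithm of \Cref{cor-s-def-clique-gamma} on $G[U]$, which takes $\Oh(2^{\gamma}(\sqrt{s}k)^{\Oh(s+3)})=2^{\Oh(\gamma+s\log k)}$ time per collection; multiplying by the number of collections gives the claimed bound. Your decomposition of $S$ into the universal clique part $Q$ and the non-edge endpoints $S^*$, and the greedy completion inside $C_1$ for a fixed $A$, are correct observations, but they are superfluous; if you wish to keep your route, replace the unproven dynamic program by brute-force enumeration of $A\subseteq U\setminus C_1$ with $|A|\le 2s$ followed by your greedy step, which already fits within the stated running time.
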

\begin{proof}
  We first enumerate all maximal cliques in $(3^{\gamma / 3} \cdot n^{\Oh(1)})$~time \cite{FRSWW20}.
  If there is a clique of size at least $k$, then return Yes, since each clique is also an~$s$-defective clique.
  Now, we assume that there is no clique of size at least~$k$.
  By \Cref{lemma:cc}, it suffices to check whether there is an $s$-defective clique of size $k$ in $\bigcup_{C \in \mathcal{C}} C$ for each collection~$\mathcal{C}$ of $\Oh(\sqrt{s})$ maximal cliques.
  Observe that each fixed collection in~$\mathcal{C}$ has~$\Oh(k \sqrt{s})$ vertices.
  Let~$W_{\mathcal{C}}$ denote the vertex set of~$\mathcal{C}$.
  By applying the algorithm of Corollary~\ref{cor-s-def-clique-gamma} to find the largest~$s$-defective clique, we can determine in $\Oh(2^\gamma(\sqrt{s}k)^{\Oh(s+3)})$~time whether~$W_{\mathcal{C}}$ contains an~$s$-defective clique of size at least~$k$.
  Since there are $\Oh^*(3^{\gamma / 3})$ maximal cliques, the overall running time of this algorithm is~$(3^{\gamma / 3} \cdot n^{\Oh(1)})^{\Oh(\sqrt{s})} \cdot \Oh(2^\gamma(\sqrt{s} k)^{\Oh(s+3)}) = 2^{\Oh(\gamma \sqrt{s} + s \log k)} n^{\Oh(\sqrt{s})}$~time.
\end{proof}

For $c$-closed graphs, we can obtain an algorithm whose running time does not depend on~$k$.
This is due to the following lemma.

\begin{lemma}
  \label{lemma:defclsize}
  Let $S \subseteq V(G)$ be an $s$-defective clique in $G$, in which at least one pair of vertices is nonadjacent.
  Then, $ \vert S \vert  \le c + s$.
\end{lemma}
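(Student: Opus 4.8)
The plan is to extract a single non-adjacent pair from $S$ and play it against the $c$-closure. First I would fix vertices $u, v \in S$ with $uv \notin E(G)$; such a pair exists by hypothesis. Since $G$ is $c$-closed and $u, v$ are non-adjacent, we have $|N_G(u) \cap N_G(v)| \le \cl_G(u) < c$, so $u$ and $v$ have at most $c - 1$ common neighbors in all of $G$, and in particular at most $c - 1$ common neighbors inside $S$.

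Next I would count the vertices of $S \setminus \{u, v\}$ that fail to be common neighbors of $u$ and $v$: there are at least $|S| - 2 - (c - 1) = |S| - c - 1$ of them. For each such vertex $w$, at least one of $uw$, $vw$ is a non-edge of $G[S]$, and I would define a map sending $w$ to the non-edge $uw$ if $uw \notin E(G)$ and to $vw$ otherwise. This map is injective and never hits $uv$, because distinct vertices $w$ yield non-edges with distinct endpoints in $S \setminus \{u, v\}$. Hence $G[S]$ contains at least $|S| - c - 1$ non-edges of this form, and adding the non-edge $uv$ itself gives at least $|S| - c$ non-edges in $G[S]$ in total.

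Finally, since $S$ is an $s$-defective clique, $G[S]$ has at most $s$ non-edges, so $|S| - c \le s$, i.e.\ $|S| \le c + s$, which is the claim. The argument is short and I do not anticipate a real obstacle; the only point needing a moment of care is checking that the assignment $w \mapsto (\text{a chosen non-edge incident to } w)$ is injective and avoids $uv$, which is what turns the counting of "bad" vertices into a lower bound on the number of non-edges.
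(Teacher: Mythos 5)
Your proof is correct and follows essentially the same route as the paper: fix the non-adjacent pair $u,v$, bound their common neighbors in $S$ by $c-1$ via $c$-closure, and charge each remaining vertex of $S\setminus\{u,v\}$ to a distinct non-edge incident to $u$ or $v$ (plus the non-edge $uv$ itself), which the defectiveness bound caps at $s$. The paper states this as the partition bound $|S\setminus\{u,v\}|\le (c-1)+(s-1)$, which is just your counting rearranged.
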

\begin{proof}
  Let $u, v \in S$ be vertices such that $uv \notin E(G)$.
  We show that $ \vert S' \vert  \le c + s - 2$ for~$S' \coloneqq  S \setminus \{ u, v \}$.
  Since $G$ is $c$-closed, there are at most $c - 1$ vertices in $S'$ adjacent to both $u$ and $v$.
  Moreover, there are at most $s - 1$ vertices in $S'$ which are nonadjacent to either $u$ or $v$ in~$S'$, by the definition of $s$-defective cliques.
  Thus, we obtain $ \vert S' \vert  \le (c - 1) + (s - 1) = c + s - 2$.
\end{proof}

From Lemma~\ref{lemma:defclsize} we directly obtain the following.

\begin{corollary}
  \textsc{$s$-Defective Clique} can be solved in $2^{\Oh(c \sqrt{s} + s \log (c + s))} n^{\Oh(\sqrt{s})}$ time.
\end{corollary}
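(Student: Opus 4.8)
The plan is to mimic the proof of \Cref{thm:s-def-clique-gamma} but to use \Cref{lemma:defclsize} in place of a crude size bound, which will remove the dependence on~$k$ from the running time.

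First I would enumerate all maximal cliques of~$G$. Since every $c$-closed graph is in particular weakly $c$-closed, the clique-enumeration algorithm of Fox et al.~\cite{FRSWW20} lists all maximal cliques in $\Oh^*(3^{c/3})$~time, and there are only $\Oh(3^{c/3}n^2)$ of them. If some maximal clique has size at least~$k$, return Yes, since a clique is a $0$-defective clique and hence an $s$-defective clique. Otherwise, $G$ has no clique on~$k$ vertices, so every $s$-defective clique~$S$ with $|S|\ge k$ contains a non-adjacent pair; \Cref{lemma:defclsize} then yields $|S|\le c+s$. Consequently, if $k>c+s$ we may safely return No, and from now on we assume $k\le c+s$.

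Next, by \Cref{lemma:cc}, any $s$-defective clique~$S$ is covered by a collection~$\mathcal{C}$ of $\Oh(\sqrt{s})$ cliques, which we may take to be among the enumerated maximal cliques. Hence it suffices to test, for every collection~$\mathcal{C}$ of $\Oh(\sqrt{s})$ maximal cliques, whether $G[\bigcup_{C\in\mathcal{C}}C]$ contains an $s$-defective clique of size at least~$k$. Each such induced subgraph is again $c$-closed (taking an induced subgraph can only shrink common neighborhoods, so it preserves $c$-closure), hence weakly $c$-closed, and it has $\Oh(k\sqrt{s})=\Oh((c+s)\sqrt{s})$ vertices. Applying \Cref{cor-s-def-clique-gamma} to this subgraph (with weak closure at most~$c$) therefore takes $\Oh\bigl(2^{c}\cdot\bigl((c+s)\sqrt{s}\bigr)^{s+3}\bigr)=2^{\Oh(c+s\log(c+s))}$~time, using $s\le c+s$ to simplify the exponent.

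Finally I would bound the number of iterations: the number of collections of $\Oh(\sqrt{s})$ maximal cliques chosen from $\Oh(3^{c/3}n^2)$ candidates is $\bigl(3^{c/3}n^{\Oh(1)}\bigr)^{\Oh(\sqrt{s})}=2^{\Oh(c\sqrt{s})}n^{\Oh(\sqrt{s})}$. Multiplying by the per-iteration cost and absorbing the $2^{\Oh(c)}$ factor into $2^{\Oh(c\sqrt{s})}$ gives a total running time of $2^{\Oh(c\sqrt{s})}n^{\Oh(\sqrt{s})}\cdot 2^{\Oh(c+s\log(c+s))}=2^{\Oh(c\sqrt{s}+s\log(c+s))}n^{\Oh(\sqrt{s})}$, as claimed. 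The only real subtlety is the case distinction in the second paragraph, which is exactly what lets the "$k$" in the running time of \Cref{thm:s-def-clique-gamma} be replaced by the bound $c+s$ from \Cref{lemma:defclsize}; everything else is a direct re-run of the earlier argument with $\gamma$ specialized to~$c$.
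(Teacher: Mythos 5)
Your proposal is correct and follows essentially the same route as the paper, which obtains this corollary implicitly by running the algorithm of \Cref{thm:s-def-clique-gamma} with $\gamma\le c$ and using \Cref{lemma:defclsize} to conclude that, once no clique of size~$k$ exists, either $k\le c+s$ or one may answer No, so that $\log k=\Oh(\log(c+s))$. Your explicit case distinction and re-derivation of the covering-by-$\Oh(\sqrt{s})$-maximal-cliques argument is exactly the intended (unwritten) proof.
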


\subsection{\texorpdfstring{$2$}{2}-Clubs}

A clique is a vertex set~$S$ such that each vertex in~$S$ has distance~$1$ to each other vertex within~$S$. 
One way to relax the clique definition is to allow greater distances of pairs of vertices within~$G[S]$. 
This idea can be formalized as follows.

\begin{definition}
\label{def-s-club}
For a graph~$G$ and $s \in \mathds{N}$, a set~$S\subseteq V(G)$ is an \emph{$s$-club} if each pair of vertices in~$S$ has distance at most~$s$ in~$G[S]$.
\end{definition}

Note that cliques are exactly the 1-clubs.
This definition leads to the following decision problem.

\problemdef{$2$-Club}
{A graph~$G$ and~$k \in \mathds{N}$.}
{Does~$G$ contain a~$2$-club~$S$ of size at least k?}

It is known that \textsc{$2$-Club} parameterized by~$k$ admits an FPT-algorithm~\cite{CHLS13,SKMN12} and that it does not admit a polynomial kernel unless \PHC~\cite{SKMN12}.
Since the largest $2$-club containing some vertex~$v$ is~$N_2[v]$, we observe that the size of a largest $2$-club is~$\Delta^2+1$.
This implies fixed-parameter tractability for~$\Delta$.
In contrast, \textsc{$2$-Club} is W[1]-hard with respect to~$h$-index and it is NP-hard on 6-degenerate graphs~\cite{HKNS15}.
Since~$\gamma\le d+1$, this also implies NP-hardness for constant values of~$\gamma$.
We extent these results, by showing that \textsc{$2$-Club} remains NP-hard even on 4-closed graphs.

\begin{theorem}
\label{thm-2-club-nphard-constant-closure}
\textsc{$2$-Club} remains NP-hard even on~$4$-closed graphs.
\end{theorem}
\begin{proof}
We reduce from \textsc{Clique}. 

\subparagraph*{Construction.}
Let~$(G,k)$ be an instance of \textsc{Clique}.
We construct an equivalent instance~$(G',k')$ of \textsc{$2$-Club} such that~$G'$ is~$4$-closed.
We set~$k'\coloneqq k\cdot n^2$.
For each vertex~$w\in V(G)$, we add a clique~$K_w\coloneqq \{w_j\mid j\in \{ 0,\dots, n^2-1 \} \}$ of size~$n^2$ to~$G'$.
We denote the graph constructed so far by~$G^0$.
Furthermore, let~$(e_1,e_2, \ldots, e_m)$ be an arbitrary but fixed ordering of the edges in~$E(G)$.
We will add edges corresponding to each edge~$e_i\in E(G)$ to~$G'$.
We denote by $G^i$ the graph after we added the gadgets for the edges~$e_1$ to~$e_i$ to~$G^0$.
Note that~$G^0$ is the graph constructed so far; a disjoint union of cliques, and that~$G^m=G'$.
The idea for the gadget of edge~$e_i=uv$ is as follows:
We add a matching between the vertices of the cliques~$K_u$ and~$K_v$. 
More precisely, we add the edges~$u_iv_{i+\ell_{uv}\bmod n^2}$ for each~$i\in\{ 0,\dots, n^2-1 \}$ and some fixed integer~$\ell_{uv}$.
We call~$\ell_{uv}$ the \emph{shift} of~$uv$.
We will assume that $\ell_{uv} + \ell_{vu} = n^2$.
To simplify notation, we will assume that the modulo $n^2$ is taken after the addition of a shift.
The difficult part lies in choosing~$\ell_{uv}$ carefully to obtain a graph with constant closure.

For a vertex pair~$(a, b)$ of~$G$ by~$A_{ab}$ we denote the set of vertices in the cliques~$K_a$ and~$K_b$ and by~$B_{ab}$ the remaining vertices of~$V(G')$.
Next, we prove the following invariant which is an essential ingredient to show that~$G'=G^m$ has constant closure number.

\begin{quote}
\textit{Invariant.} For each~$i$, there is a shift~$\ell_{uv}$ for the gadget of the $i$th~edge~$e_i=uv$ such that for each two nonadjacent vertices~$x\in K_a$ and~$y\in K_b$ for any vertices~$a,b\in V(G)$ we have~$ \vert N_{G^i}(x)\cap N_{G^i}(y)\cap B_{ab} \vert \le 1$.
Moreover, we can find $\ell_{uv}$ is polynomial time.
\end{quote}

That is, we want to maintain the invariant that two nonadjacent vertices in $K_a \cup K_b$ have at most one common neighbor in $B_{ab}$.
Recall that~$G^0$ is a disjoint union of cliques. Thus, the invariant holds for~$G^0$.
In the following, we assume that the invariant holds for the graph~$G^{i-1}$.
Recall that the graph~$G^i$ is constructed from~$G^{i-1}$ by adding the matching for the edge~$e_i= uv$.
We will show that the invariant can be maintained for~$G^i$.
More precisely, we show that we can compute a shift~$\ell_{uv}\in\{ 0,\dots, n^2-1 \}$ in polynomial time such that adding the edges~$u_jv_{j+\ell_{uv}}$ for each~$j\in\{ 0,\dots, n^2-1 \}$ to~$G^{i-1}$ does not violate the invariant.


Assume to the contrary that the invariant is violated by two nonadjacent vertices in~$G^i$. 
Observe that there could be three possibilities on how the invariant could be violated in~$G^i$:
\begin{enumerate}[label=Case \arabic*:,leftmargin=*]
\item Two nonadjacent vertices in~$A_{pq}$ for~$p,q\in V(G)\setminus\{u,v\}$ violate the invariant,
\item two nonadjacent vertices in~$A_{uv}$ violate the invariant, or
\item two nonadjacent vertices in~$A_{wp}$ for~$w\in\{u,v\}$ and~$p\in V(G)\setminus\{u,v\}$ violate the invariant.
\end{enumerate}

In the following, we show that we can choose the shift~$\ell_{uv}$ in such a way to fulfill the invariant also for~$G^i$.
We distinguish the three above cases:

\subparagraph*{Case~$1$.} Let~$x$ and~$y$ be a pair of nonadjacent vertices in~$A_{pq}$ violating the invariant.
Note that each edge added to~$G^{i-1}$ to obtain~$G^i$ is of the form~$u_rv_s$.
Clearly,~$u_r,v_s\notin A_{pq}$.
Hence, we conclude that~$ \vert N_{G^i}(x)\cap N_{G^i}(y)\cap B_{pq} \vert = \vert N_{G^{i-1}}(x)\cap N_{G^{i-1}}(y)\cap B_{pq} \vert \le 1$ since the invariant holds for~$G^{i-1}$.
Thus, this case is not possible.

\subparagraph*{Case~$2$.} Let~$x$ and~$y$ be a pair of nonadjacent vertices in~$A_{uv}$ violating the invariant.
As in case~$1$, since only edges with both endpoints in~$A_{uv}$ are added to the graph~$G'$, we obtain that~$ \vert N_{G^i}(x)\cap N_{G^i}(y)\cap B_{uv} \vert = \vert N_{G^{i-1}}(x)\cap N_{G^{i-1}}(y)\cap B_{uv} \vert \le 1$ since the invariant holds for~$G^{i-1}$.
Thus, this case is also not possible.

\subparagraph*{Case~$3$.} Without loss of generality, assume that~$w=u$.
Recall that adding a matching between the cliques~$K_u$ and~$K_v$ can increase the number of common neighbors in~$B_{up}$ of two nonadjacent vertices in~$A_{up}$ by at most~$1$.
Thus, two vertices in~$A_{up}$ violating the invariant in~$G^i$ have a common neighbor in some clique~$K_t$ in~$G^{i-1}$.
Since only the matchings corresponding to the edges~$ut,pt\in E(G)$ result in edges between~$K_u$ and~$K_t$ and between~$K_p$ and~$K_t$, the matchings corresponding to the edges~$ut$ and~$pt$ are already added to~$G^{i-1}$.
To obtain~$G^i$ from~$G^{i-1}$ only a matching between~$K_u$ and~$K_v$ is added.
Thus, we conclude that the matching corresponding to the edge~$pv$ was already present in~$G^{i-1}$.


For every $j \in \{ 0,\dots, n^2-1 \}$, we have $N(u_j) \cap K_t = \{  t_{j + \ell_{ut}} \}$ and $N(t_{j + \ell_{ut}}) \cap K_p = \{ p_{j + \ell_{ut} + \ell_{tp}} \}$.
Hence, $u_j$ and $p_{j'}$ have a common neighbor in $K_p$ if and only if $j' - j \equiv \ell_{ut} + \ell_{tp}$.
Similarly, $u_j$ and $p_{j'}$ have a common neighbor in $K_v$ if and only if $j'- j \equiv \ell_{uv} + \ell_{vp}$.
Consequently, the invariant is only violated if $\ell_{uv} \equiv \ell_{ut} + \ell_{tp} +\ell_{pv}$.
Thus, for each $p$ and $t$, there is at most one shift violating the invariant, amounting to at most $(n - 2)^2$ forbidden shifts.
Since there are~$n^2$ possible shifts, we conclude that we can choose a shift~$\ell_{uv}$ in a way which does not violate the invariant.
Note that this does not only show the existence of a shift maintaining the invariant, the above argument also shows that the shift~$\ell_{uv}$ can be constructed in polynomial time, although no explicit formula for~$\ell_{uv}$ is given here.

Thus, we have shown that the invariant is maintained for each~$i$, in particular for~$i=m$ and hence for the resulting graph~$G'$.

\subparagraph*{Bounded Closure.}
We use the invariant to show that~$G'$ is~$4$-closed.
Consider two nonadjacent vertices~$x\in K_u$ and~$y\in K_v$ in~$G'$.
Observe that~$u\ne v$ since otherwise~$xy\in E(G')$.
By the invariant, we have~$ \vert N_{G'}(x)\cap N_{G'}(y)\cap B_{uv} \vert \le 1$.
Recall that~$A_{uv}=K_u\cup K_v$.
Since~$x$ has at most one neighbor in~$K_v$ and since~$y$ has at most one neighbor in~$K_u$, we conclude that~$x$ and~$y$ have at most three common neighbors.
Thus,~$G'$ is~$4$-closed.

\subparagraph*{Correctness.}
Suppose that~$G$ contains a clique~$C$ of size at least~$k$.
Let~$S\coloneqq \{K_v\mid v\in C\}$.
Clearly,~$S$ has size~$k'=k\cdot n^2$.
It remains to show that~$S$ is a~$2$-club.
Consider two  nonadjacent vertices~$x,y\in S$.
Note that~$x\in K_u$ and~$y\in K_v$ for~$u,v\in C$ such that~$u\ne v$ since otherwise~$xy\in E(G')$.
Since~$C$ is a clique, we have~$uv\in E(G)$ and thus we added a matching between the cliques~$K_u$ and~$K_v$.
Hence,~$x$ has a neighbor~$z$ in~$K_v$ and thus~$x$ and~$y$ have distance~$2$ since~$K_v$ is a clique.

Conversely, suppose that~$S$ contains an~$2$-club~$S$ of size at least~$k'=k\cdot n^2$. 
Let~$T\coloneqq \{v\mid  \vert K_v\cap S \vert \ge n+1\}$.
Observe that~$ \vert T \vert \ge k$, since otherwise $\vert S \vert \le \vert T \vert \cdot n^2+(n-\vert T \vert)\cdot n \le kn^2 - (k - 1)n$.
In the following, we show that~$T$ is a clique in~$G$.
Assume towards a contradiction that~$T$ is not a clique and let~$u,v\in T$ such that~$uv\notin E(G)$.
Let~$u^*$ be a vertex in~$K_u\cap S$ and let~$U\coloneqq N(u^*)\setminus K_u$.
Note that since~$uv\notin E(G)$ we have~$U\cap K_v=\emptyset$.
Furthermore, note that by construction each vertex~$y\in K_w$ has at most one neighbor in~$K_x$ for any~$w,x\in V(G)$ such that~$w\ne x$.
Thus,~$ \vert U \vert \le n$.
Furthermore, by the same argument we obtain that each vertex in~$U$ has at most $1$~neighbor in~$K_v$.
Thus,~$u^*$ has distance at most~$2$ to at most~$n$ vertices in~$K_v$.
This is a contradiction to the fact that~$ \vert S\cap K_v \vert \ge n+1$ and that~$S$ is an~$2$-club.
Hence,~$T$ is a clique and thus~$G$ contains a clique of size at least~$k$.
\end{proof}

We leave the complexity of \textsc{2-Club} on 2-closed graphs and 3-closed graphs open.
We want to point out that 2-closed graphs of diameter two are also known to be \emph{geodetic}, that is, each pair of vertices has a unique shortest path between them.
Moreover, it is known that every 2-closed graph $G$ of diameter two satisfies one of the following \cite{BB88}:
\begin{itemize}
  \item
    $G$ contains a vertex $v$ such that $N(v) = V(G)$, or
  \item
    $G$ is strongly regular, that is, $G$ is regular and for some~$\lambda, \mu \in \mathds{N}$, every two adjacent (nonadjacent) vertices have $\lambda$ ($\mu$, respectively) common neighbors (note that $\mu = 1$ since $G$ is 2-closed), or
  \item
    $G$ has exactly two vertex degrees.
\end{itemize}

To show that \textsc{$2$-Club} in $2$-closed graph is solvable in polynomial time exploiting these three properties might be helpful.

For $2$-clubs we only studied the decision variant \textsc{$2$-Club} in which we ask for an sufficiently large $2$-club in~$c$-closed graphs.
The enumeration of all maximal $2$-clubs is \emph{not} possible in FPT-time even for graphs with constant closure:
Observe that in the construction of Theorem~\ref{thm-2-club-nphard-constant-closure},~$C$ is a maximal clique in the graph~$G$ of the \textsc{Clique} instance if and only if~$\{K_c\mid c\in C\}$ is a maximal $2$-club in the graph~$G'$ of the \textsc{$2$-Club} instance.
The number of maximal cliques in an~$n$-vertex graph is~$3^{n/3}$~\cite{moonM1965}. 
Hence, the above correspondence shows that even a 4-closed graph may have up to $3^{n/3}$~maximal $2$-clubs.


\section{Bicliques}

The counterpart of cliques in bipartite graphs are (non-) induced bicliques.
In this section we study the parameterized complexity of enumerating all maximal (non-) induced bicliques and finding a sufficiently large (non-) induced biclique in (weakly) closed graphs.

\subsection{Non-Induced Biclique}

In this subsection, we study problems of finding non-induced maximal bicliques fulfilling certain cardinality constraints. 
Next, we formally define non-induced bicliques.

\begin{definition}
In a graph~$G=(V,E)$ two disjoint vertex sets~$S\subseteq V$ and~$T\subseteq V$ are a \emph{non-induced biclique} if~$st\in E(G)$ for each~$s\in S$ and each~$t\in T$.
\end{definition}

There is an algorithm that enumerates
in~$\Oh^*(2^{d})$~time all maximal pairs of sets~$S$ and~$T$ such that each vertex of~$S$
is adjacent to each vertex of~$T$~\cite{Epp94}.\footnote{Eppstein~\cite{Epp94} describes an algorithm with running time~$\Oh^*(2^{2a})$ for the graph parameter arboricity~$a$ which is linearly bounded in~$d$ by the inequality~$a\le d\le 2a-1$. It can be shown that this algorithm also has running time~$\Oh^*(2^d)$.}
We also consider the problem of finding a sufficiently large non-induced biclique.

\problemdef
{Non-Induced~$(k_1, k_2)$-Biclique}
{A graph $G$ and $k_1, k_2 \in \mathds{N}$.}
{Does~$G$ contain a non-induced biclique with vertex sets~$S$ and~$T$ such that~$\vert S\vert \ge k_1$ and~$\vert T\vert \ge k_2$?}

\textsc{Non-Induced~$(k_1, k_2)$-Biclique} is W[1]-hard with respect to~$k_1$ even if~$k_1=k_2$~\cite{L18}.
We also consider \textsc{Non-Induced Max-Edge Biclique} where we demand
that~$ \vert S \vert \cdot  \vert T \vert \ge k$ instead of putting constraints on the partition
sizes.
We may assume that $\min \{ \vert S \vert, \vert T\vert \} \le \sqrt{k}$.
Thus, \textsc{Non-Induced Max-Edge Biclique} can be solved by solving~$\sqrt{k}$
instances of \textsc{Non-Induced~$(k_1, k_2)$-Biclique} and thus the latter problem can be
considered to be more difficult in our setting. \textsc{Non-Induced Max-Edge Biclique} can
be solved in $\Oh(k^{2.5}k^{\sqrt{k}}n)$~time by applying the algorithm for
\textsc{Induced Max-Edge Biclique} on bipartite graphs~\cite{FLZW18}.

First, we study the parameterized complexity of enumerating all maximal non-induced bicliques in weakly~$\gamma$-closed graphs.
 We need to define carefully, however, what we mean by
enumerating bicliques: 
The algorithm of Eppstein~\cite{Epp94} enumerates
in~$\Oh^*(2^{d})$~time all maximal pairs of sets~$S$ and~$T$ such that each vertex of~$S$
is adjacent to each vertex of~$T$.
For this enumeration problem, an FPT-algorithm for the weak closure is unattainable since any clique of size~$n$ is 1-closed and
admits $\Theta(2^{n})$~bipartitions that need to be enumerated. To circumvent this
issue, we view a biclique as a vertex set that can be partitioned into sets~$S$
and~$T$. Thus, in order to strengthen the parameterization from~$d$ to~$\gamma$, we go from an explicit listing of bicliques with bipartitions to a compact representation of bicliques as vertex sets and this is indeed necessary.
We say that a vertex set $U \subseteq V(G)$ is a non-induced biclique if $G[U]$ contains a biclique as a (not necessarily induced) subgraph.
Note that it can be decided in $\Oh(n^2)$ time whether a vertex set $U \subseteq V(G)$ is a non-induced biclique or not, because~$U$ is a non-induced biclique if and only if the complement of $G[U]$ has multiple connected components.
We adapt the algorithm of \Cref{theo-enumerate-max-s-plexes} to obtain an $\Oh^*(2^\gamma)$-time algorithm to enumerate all maximal non-induced bicliques.

Recall that in Theorem~\ref{theo-enumerate-max-s-plexes} we bounded the overall number of maximal $s$-plexes in a weakly~$\gamma$-closed graph~$G$ by distinguishing 4 different types of maximal $s$-plexes if we are provided wih the set of maximal $s$-plexes of~$G-v$.
As in the proof of \Cref{theo-enumerate-max-s-plexes}, we aim to enumerate all maximal non-induced bicliques in~$G$, provided with the collection $\mathcal{S'}$ of all non-induced maximal bicliques in $G' \coloneqq  G - v$.
Again, we define the same four types of non-induced bicliques~$S$:
Type~$1$:  $S$ does not contain $v$, Type~$2$: $S$ contains $v$ and $S \setminus \{ v \}$ is maximal in $G'$, Type~$3$: $S$ contains $v$,~$S \setminus \{ v \}$ is not maximal in $G'$, and $S$ contains a non-neighbor $u$ of $v$, and Type~$4$: $S$ contains $v$,~$S \setminus \{ v \}$ is not maximal in $G'$, and $S$ is contained in the neighborhood of~$v$, that is,~$S \subseteq N_G[v]$.

First and foremost, all maximal non-induced bicliques of Type~1 and Type~2 can be enumerated from $\mathcal{S}'$ in $ \vert \mathcal{S}' \vert  \cdot n^2$ time.
We claim that there are at most~$2^{\gamma - 1} n$ maximal non-induced bicliques of Type 3:
Let $U$ be such a non-induced biclique with a bipartition $(S, T)$.
Without loss of generality, assume that $u, v \in S$.
There are at most $n$ choices for $u \in S \setminus N_G[v]{}$ and there are at most~$2^{\gamma - 1}$ choices for $T \subseteq N_G(v) \cap N_G(u)$.
Since~$U$ is a maximal non-induced biclique, we obtain $S = \bigcap_{w \in T} N_G(w)$.
Finally, there is only one maximal non-induced biclique of Type 4, namely $N_G[v]$.
Thus, we obtain the following theorem.

\begin{theorem}
  \label{thm:nbiclique-enum}
  All maximal non-induced bicliques can be enumerated in $\Oh^*(2^\gamma)$ time.
\end{theorem}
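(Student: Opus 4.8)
The plan is to mirror the recursive structure used in the proof of \Cref{theo-enumerate-max-s-plexes}. Fix a closure ordering and let $v$ be its first vertex, so $\cl_G(v)<\gamma$; set $G':=G-v$ and assume recursively that the collection $\mathcal{S}'$ of all maximal non-induced bicliques of $G'$ is available. Every maximal non-induced biclique $S$ of $G$ is of one of the four types from that earlier proof, and I would dispose of types $1$, $2$, $4$ quickly: a type-$1$ biclique is just an element of $\mathcal{S}'$ that remains maximal in $G$; a type-$2$ biclique is $S'\cup\{v\}$ for some $S'\in\mathcal{S}'$; and since $N_G[v]$ is itself a non-induced biclique (bipartition $(\{v\},N_G(v))$), any biclique contained in $N_G[v]$ is contained in it, so $N_G[v]$ is the only type-$4$ candidate. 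Moreover, for each $S'\in\mathcal{S}'$, either $S'$ is still maximal in $G$ or $S'\cup\{v\}$ is (not both, since if $S'\cup\{v\}$ is a non-induced biclique then $S'$ is extendable), so types $1$ and $2$ together contribute at most $|\mathcal{S}'|$ maximal bicliques, all producible by scanning $\mathcal{S}'$.

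The heart of the argument is the bound of $2^{\gamma-1}n$ on the number of type-$3$ maximal bicliques, and here I would exploit the characterization of a non-induced biclique as a vertex set $U$ with $\overline{G[U]}$ disconnected. Let $S$ be a type-$3$ maximal biclique, let $C$ be the connected component of $\overline{G[S]}$ containing $v$, and set $Q:=S\setminus C\neq\emptyset$. Since $S$ contains a non-neighbor $u$ of $v$ and non-neighbors of $v$ in $G[S]$ are exactly the $\overline{G[S]}$-neighbors of $v$, we get $u\in C$; as $C$ is complete to $Q$ in $G$, every vertex of $Q$ is a common neighbor of $u$ and $v$, so $Q\subseteq N_G(u)\cap N_G(v)$ and hence $|Q|\le\cl_G(v)<\gamma$. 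The key structural claim is that $S$ is recovered from $Q$ alone as $S = Q\cup\bigcap_{q\in Q}N_G(q)$: the inclusion $C\subseteq\bigcap_{q\in Q}N_G(q)$ holds because $C$ is complete to $Q$; no vertex of $Q$ lies in $\bigcap_{q\in Q}N_G(q)$ (it would be adjacent to itself); and any vertex of $\bigcap_{q\in Q}N_G(q)$ outside $S$ could be added to the $C$-side while keeping the complement disconnected, contradicting maximality of $S$. Therefore each type-$3$ maximal biclique is determined by a choice of a non-neighbor $u$ of $v$ (at most $n$ choices) together with a nonempty $Q\subseteq N_G(u)\cap N_G(v)$ (at most $2^{\gamma-1}$ choices), and the corresponding candidate vertex set $Q\cup\bigcap_{q\in Q}N_G(q)$ is computable in polynomial time.

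I expect this type-$3$ step to be the main obstacle, specifically establishing the compact recovery $S=Q\cup\bigcap_{q\in Q}N_G(q)$ and being careful that the small side $Q$ is read off from the component structure of $\overline{G[S]}$ rather than taken naively as $N_G(v)\cap S$ (which may contain same-side neighbors of $v$ and need not be small); this is precisely where the weak-closure bound is used. The remainder is bookkeeping in the style of \Cref{theo-enumerate-max-s-plexes}: letting $a_n$ be the maximum number of maximal non-induced bicliques over weakly $\gamma$-closed $n$-vertex graphs, the above yields $a_n\le a_{n-1}+2^{\gamma-1}n+1$ with $a_1=0$, hence $a_n=\Oh(2^\gamma n^2)$. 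For the running time, at each recursion level we form the candidate family consisting of $\mathcal{S}'$, all $S'\cup\{v\}$ with $S'\in\mathcal{S}'$, the at most $2^{\gamma-1}n$ type-$3$ candidates, and $N_G[v]$ — a family of size $\Oh(|\mathcal{S}'|+2^\gamma n)=\Oh^*(2^\gamma)$ — test each for being a maximal non-induced biclique in polynomial time (recognizing a non-induced biclique costs $\Oh(n^2)$), and remove duplicates by radix sort. Summing over the $n$ levels gives total running time $\Oh^*(2^\gamma)$, which in particular reconfirms the $\Oh(2^\gamma n^2)$ bound on the number of maximal non-induced bicliques.
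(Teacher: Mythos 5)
Your proposal is correct and follows the paper's overall scheme---peel off a weakly closed vertex $v$, classify the maximal non-induced bicliques of $G$ into the same four types relative to the collection $\mathcal{S}'$ for $G-v$, handle types 1, 2 and 4 exactly as the paper does (with $N_G[v]$ as the sole type-4 set), and bound type 3 by $2^{\gamma-1}n$ candidates---but your treatment of type 3 genuinely differs in the key encoding step. The paper encodes a type-3 set $S$ by a non-neighbor $u\in S\setminus N_G[v]$ together with $N_G(v)\cap S$, asserting $N_G(v)\cap S\subseteq N_G(v)\cap N_G(u)$ and reconstructing $S\setminus N_G(v)=\bigcap_{w\in N_G(v)\cap S}N_G(w)$; you instead encode $S$ by $u$ together with $Q:=S\setminus C$, where $C$ is the component of $\overline{G[S]}$ containing $v$, prove $Q\subseteq N_G(u)\cap N_G(v)$ (so $|Q|<\gamma$), and recover $S=Q\cup\bigcap_{q\in Q}N_G(q)$ via maximality. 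Your variant is the more careful one: $v$ may have $G$-neighbors on its own side of the biclique, in which case $N_G(v)\cap S$ need not lie in $N_G(u)$ and the paper's reconstruction identity can fail (for instance, let $G[S]$ on $S=\{v,u,w,x,y\}$ be complete except for the non-edges $uv$, $uw$, $xy$, and add one outside vertex $z$ adjacent exactly to $u$ and $w$; then $S$ is a maximal non-induced biclique of type 3 with $w\in (N_G(v)\cap S)\setminus N_G(u)$). Reading the small side off the component structure of $\overline{G[S]}$, as you do, is precisely what the weak-closure bound applies to, so your argument is a sound sharpening of the paper's sketch at the cost of a slightly longer proof of the recovery formula. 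Both routes yield at most $2^{\gamma-1}n$ type-3 sets per level, and your bookkeeping (recurrence $a_n\le a_{n-1}+2^{\gamma-1}n+1$, polynomial-time candidate testing, radix-sort deduplication over the $n$ levels) matches the paper's and gives the same $\Oh^*(2^\gamma)$ total running time.
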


Second, we consider the decision variant of this problem.  
We show that \textsc{Non-Induced $(k_1, k_2)$-Biclique} can be solved in $\Oh^*(2^\gamma)$ time, using this enumeration algorithm.
\begin{theorem}
  \label{thm-non-induced-k-k-biclique-weakly-gamma}
  \textsc{Non-Induced~$(k_1, k_2)$-Biclique} can be solved in $\Oh^*(2^{\gamma})$~time.
\end{theorem}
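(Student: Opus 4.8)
The idea is to reduce \textsc{Non-Induced $(k_1,k_2)$-Biclique} to a search over the maximal non-induced bicliques enumerated by \Cref{thm:nbiclique-enum}. First I would observe that if $(S,T)$ is any solution, then the vertex set $U := S \cup T$ is a non-induced biclique, and hence $U$ is contained in some maximal non-induced biclique $W$. So it suffices to check, for each maximal non-induced biclique $W$ in the enumerated collection, whether $G[W]$ admits a biclique subgraph with one side of size at least $k_1$ and the other of size at least $k_2$. Since there are only $\Oh^*(2^\gamma)$ maximal non-induced bicliques (by \Cref{thm:nbiclique-enum}), and the enumeration itself runs in $\Oh^*(2^\gamma)$ time, the whole algorithm is $\Oh^*(2^\gamma)$ provided the per-biclique check is polynomial.

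\textbf{The per-biclique check.} Fix a maximal non-induced biclique $W$. By definition the complement $\overline{G[W]}$ is disconnected; let $A_1, \dots, A_r$ be its connected components. The crucial structural fact is that $G[W]$ contains a $(k_1,k_2)$-biclique as a subgraph if and only if the $A_j$'s can be partitioned into two groups whose total sizes are at least $k_1$ and at least $k_2$ respectively: indeed, vertices in different components of $\overline{G[W]}$ are adjacent in $G[W]$, so grouping the components into two bundles $S$ and $T$ yields a complete bipartite subgraph between $S$ and $T$; conversely any bipartition $(S,T)$ realizing a biclique subgraph cannot split a component of $\overline{G[W]}$ across both sides. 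Thus the check reduces to the following: given component sizes $a_1, \dots, a_r$ with $\sum_j a_j = |W|$, decide whether there is an index set $J$ with $\sum_{j \in J} a_j \ge k_1$ and $\sum_{j \notin J} a_j \ge k_2$. Equivalently, writing $m := |W|$, we ask whether there is a subset-sum value $x$ attainable from $\{a_1, \dots, a_r\}$ with $k_1 \le x \le m - k_2$. This is a bounded subset-sum feasibility question and is solvable in $\Oh(r \cdot m) = \Oh(n^2)$ time by the standard dynamic program over attainable partial sums. (One can also note that in the typical case $r = 2$, or handle $r \ge 3$ by merging all but one component, but the DP is cleanest to state.)

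\textbf{Putting it together and the main obstacle.} Combining: enumerate all maximal non-induced bicliques in $\Oh^*(2^\gamma)$ time; for each one, compute the components of the complement in $\Oh(n^2)$ time and run the subset-sum DP in $\Oh(n^2)$ time; return Yes iff some biclique passes. Correctness follows from the "contained in a maximal one" argument together with the component-bundling characterization. The main thing to get right — and the only real subtlety — is the structural claim that a $(k_1,k_2)$-biclique subgraph of $G[W]$ corresponds exactly to a bipartition of the complement components; in particular one must argue that we never need to look at a non-maximal biclique and that restricting attention to whole components of $\overline{G[W]}$ loses nothing. Everything else (the subset-sum DP, the polynomial-time complement-connectivity test, the bookkeeping of the recursion from \Cref{theo-enumerate-max-s-plexes} that underlies \Cref{thm:nbiclique-enum}) is routine. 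I would also remark that the same reduction immediately handles \textsc{Non-Induced Max-Edge Biclique} by iterating over the $\Oh(\sqrt{k})$ relevant pairs $(k_1,k_2)$, giving an $\Oh^*(2^\gamma)$ bound there as well.
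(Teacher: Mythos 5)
Your algorithm is structurally the same as the paper's: enumerate all maximal non-induced bicliques via \Cref{thm:nbiclique-enum}, then decide a subset-sum question over the sizes of the connected components of $\overline{G[W]}$ for each enumerated set $W$. The gap sits exactly at the point you yourself flag as the only real subtlety. Your converse claim --- that a bipartition $(S,T)$ realizing a biclique subgraph of $G[W]$ cannot split a component of $\overline{G[W]}$ --- is valid only when $S\cup T=W$; if $S\cup T\subsetneq W$, a single component of $\overline{G[W]}$ can contain vertices of both $S$ and $T$, connected through vertices of $W\setminus(S\cup T)$. Concretely, let $W=\{s_1,s_2,t_1,t_2,z,x\}$ where $\{s_1,s_2,t_1,t_2,z\}$ is a clique and $x$ is adjacent only to $z$. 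Then $G[W]$ contains the biclique $(\{s_1,s_2\},\{t_1,t_2\})$, i.e.\ a $(2,2)$-biclique subgraph, but $\overline{G[W]}$ has exactly two components, of sizes $5$ and $1$ (namely $\{x,s_1,s_2,t_1,t_2\}$ and $\{z\}$), so no grouping of whole components gives two bundles of size at least $2$ each, and your per-biclique check (the subset-sum DP over $\{5,1\}$ with target range $[2,4]$) answers No. Worse, if the input graph is exactly this six-vertex graph, then $V(G)=W$ is the \emph{unique} maximal non-induced biclique, so the overall algorithm answers No on a Yes-instance of \textsc{Non-Induced $(2,2)$-Biclique}. Hence the asserted equivalence ``$G[W]$ has a $(k_1,k_2)$-biclique subgraph iff the complement components of $W$ can be grouped into bundles of sizes at least $k_1$ and $k_2$'' is false, and the component-bundling reduction to \textsc{Subset Sum} over maximal non-induced bicliques does not, as written, establish the theorem.

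To be fair, this is not a divergence from the paper: its proof performs exactly this reduction and disposes of the problematic direction with ``it is easy to see,'' so your write-up mirrors the paper's approach faithfully; but what is missing, in both texts, is an argument that every solution $(S,T)$ can be charged to \emph{some} enumerated maximal non-induced biclique whose complement components separate (an extension of) $S$ from (an extension of) $T$, and the example above shows this cannot be argued for an arbitrary maximal $W\supseteq S\cup T$ --- the greedy extension of $(S,T)$ to a maximal biclique \emph{pair} may stop at a vertex set that is not inclusion-maximal as a non-induced biclique. Closing the gap needs an additional idea (a richer enumerated family than inclusion-maximal vertex sets, or a canonical choice of $W$ with a proof that its component structure refines some optimal bipartition), not just the component-bundling step as stated.
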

\begin{proof}
With the algorithm behind Theorem~\ref{thm:nbiclique-enum} we can enumerate the vertex sets of all maximal non-induced bicliques.
This algorithm, however, only returns the vertex set, and not a bipartition of any maximal non-induced biclique.
To check whether any of these maximal non-induced bicliques has a bipartition into sets~$S$ and~$T$ such that~$\vert S\vert \ge k_1$ and~$\vert T\vert \ge k_2$, we use the following observation:
Let~$G'$ denote the complement graph of~$G$.
Any connected component of~$G'$ is either completely contained in~$S$ or completely contained in~$T$.
Now, we can use this observation to define an instance of \textsc{Subset Sum} to check whether there exists a valid bipartition.
\textsc{Subset Sum} is formally defined as follows.

  \problemdef
  {Subset Sum}
  {A set $A = \{ a_1, \dots, a_n \}$ of $n$ positive integers and $k_1 \le k_2 \in \mathds{N}$.}
  {Is there a set $B \subseteq A$ such that $k_1 \le \sum_{b \in B} b \le k_2$?}
  
  A standard dynamic programming algorithm can solve \textsc{Subset Sum} in $\Oh(n \cdot \sum_{a \in A} a)$ time.
  To solve \textsc{Non-Induced $(k_1, k_2)$-Biclique}, we construct an instance~$(A', k_1', k_2')$ of \textsc{Subset Sum} for each maximal non-induced biclique~$U$ with~$\vert U\vert\ge k_1+k_2$ returned by the algorithm of Theorem~\ref{thm:nbiclique-enum}, where~$k_1' \coloneqq  k_1$,~$k_2' \coloneqq   \vert U \vert  - k_2$, and $A' \coloneqq  \{  \vert C_i \vert  \colon i \in [\ell] \}$ for the connected components~$C_1, \dots, C_{\ell} \subseteq V(G)$ of the complement of $G[U]$.
  Observe that~$(G, k_1, k_2)$ is a Yes-instance if and only if the constructed instance of \textsc{Subset Sum} is a Yes-instance for some maximal non-induced biclique~$U$:
note that  $k_1'$ is a lower bound and~$k_2'$ is an upper bound for the size of the smaller side of any valid bipartition and any solution~$B$ of the \textsc{Subset Sum} instance corresponds to~$S$, the smaller side of the bipartition of~$U$, and~$A\setminus B$ corresponds to the other part of the bipartition.
\end{proof}

Recall that \textsc{Non-Induced Max-Edge Biclique} can be solved by solving~$\sqrt{k}$
instances of \textsc{Non-Induced~$(k_1, k_2)$-Biclique}.
Hence, we obtain the following from Theorem~\ref{thm-non-induced-k-k-biclique-weakly-gamma}.

\begin{corollary}
\textsc{Non-Induced Max-Edge Biclique} can be solved in $\Oh^*(2^{\gamma})$~time.
\end{corollary}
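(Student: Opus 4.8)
The plan is to self-reduce \textsc{Non-Induced Max-Edge Biclique} to $\Oh(\sqrt{k})$ instances of \textsc{Non-Induced $(k_1,k_2)$-Biclique} and then invoke \Cref{thm-non-induced-k-k-biclique-weakly-gamma}. Concretely, for every integer $i \in \{1, \dots, \lceil\sqrt{k}\rceil\}$ I would run the algorithm of \Cref{thm-non-induced-k-k-biclique-weakly-gamma} on the instance $(G, i, \lceil k/i\rceil)$, and accept $(G,k)$ if and only if at least one of these calls accepts.

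For correctness, the backward direction is immediate: if the instance $(G, i, \lceil k/i\rceil)$ is a Yes-instance, witnessed by disjoint sets $S$ and $T$ with all edges between them, then $|S|\cdot|T| \ge i \cdot \lceil k/i\rceil \ge k$, so $(G,k)$ is a Yes-instance. For the forward direction, suppose disjoint sets $S,T$ with all edges between them satisfy $|S|\cdot|T|\ge k$, and assume w.l.o.g.\ $|S|\le|T|$. If $|S| \ge \lceil\sqrt{k}\rceil$, then also $|T|\ge\lceil\sqrt{k}\rceil \ge \lceil k/\lceil\sqrt{k}\rceil\rceil$ (the last inequality because $k/\lceil\sqrt{k}\rceil \le \sqrt{k} \le \lceil\sqrt{k}\rceil$), so $(S,T)$ already witnesses the call for $i = \lceil\sqrt{k}\rceil$. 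Otherwise $|S| =: i \le \lceil\sqrt{k}\rceil-1$, and from $|S|\cdot|T|\ge k$ and integrality of $|T|$ we get $|T|\ge\lceil k/i\rceil$, so $(S,T)$ witnesses the call for this $i$.

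For the running time, there are $\lceil\sqrt{k}\rceil = n^{\Oh(1)}$ calls, each taking $\Oh^*(2^{\gamma})$ time by \Cref{thm-non-induced-k-k-biclique-weakly-gamma}, so the total running time is $\Oh^*(2^{\gamma})$. The only point needing a careful (but entirely elementary) check is the completeness of the $\Oh(\sqrt{k})$ guessed pairs $(i,\lceil k/i\rceil)$, i.e., the case analysis above showing that every solution of \textsc{Non-Induced Max-Edge Biclique} is ``covered'' by one of them; I do not expect any real obstacle here, since once the enumeration of maximal non-induced bicliques (\Cref{thm:nbiclique-enum}) and the \textsc{Subset Sum} reduction underlying \Cref{thm-non-induced-k-k-biclique-weakly-gamma} are available, this corollary is just a routine self-reduction.
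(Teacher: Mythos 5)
Your proposal is correct and follows exactly the route the paper intends: the paper derives this corollary from the remark that \textsc{Non-Induced Max-Edge Biclique} reduces to $\sqrt{k}$ instances of \textsc{Non-Induced $(k_1,k_2)$-Biclique}, each solved via \Cref{thm-non-induced-k-k-biclique-weakly-gamma}. You merely spell out the guesses $(i,\lceil k/i\rceil)$ for $i \le \lceil\sqrt{k}\rceil$ and their completeness, which the paper leaves implicit.
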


\subsection{Induced Biclique}

In this subsection, we study problems where one aims to find \emph{induced} maximal bicliques fulfilling certain cardinality constraints. Formally, we consider the following.

\begin{definition}
In a graph~$G=(V,E)$ two disjoint vertex sets~$S\subseteq V$ and~$T\subseteq V$ are an \emph{induced biclique} if $G[S \cup T]$ is isomorphic to a complete bipartite graph, that is, $st\in E(G)$ for each~$s\in S$ and each~$t\in T$,~$ss'\notin E(G)$ for each~$s,s'\in S$, and~$tt'\notin E(G)$ for each~$t,t'\in T$.
\end{definition}

Gaspers et al.~\cite{GKL12} provided an $\Oh^*(3^{n/3})$-time algorithm to enumerate all maximal induced bicliques.
Moreover, all maximal induced bicliques can be enumerated in $\Oh^*(3^{(\Delta+d)/3})$~time~\cite{HM20}. 
On the negative side, it is impossible to enumerate all maximal induced bicliques in time $\Oh^*(f(d + c))$ for any function $f$ because a graph may have too many maximal induced bicliques~\cite{HM20}:
Consider the graph with a single universal vertex~$u$ and~$(n - 1) / 3$ disjoint triangles.
This graph is $3$-degenerate and $2$-closed, and it has $3^{(n - 1) / 3}$ maximal induced bicliques where one part consists of $u$.

In addition to the enumeration problem, we also study the following decision problem.
 
\problemdef
{Induced~$(k_1, k_2)$-Biclique}
{A graph $G$ and $k_1, k_2 \in \mathds{N}$ such that~$k_1\le k_2$.}
{Does~$G$ contain an induced biclique with vertex sets~$S$ and~$T$ such that~$\vert S\vert \ge k_1$ and~$\vert T\vert \ge k_2$?}

When $k_1 = k_2$, we will refer to the problem as \textsc{Induced $(k, k)$-Biclique}.
\textsc{Induced~$(k,k)$-Biclique} is W[1]-hard~\cite{CFK+15}. 
We also consider \textsc{Induced Max-Edge Biclique} where we demand
that~$ \vert S \vert \cdot  \vert T \vert \ge k$ instead of putting constraints on the partition
sizes.
\textsc{Induced Max-Edge Biclique} is NP-hard~\cite{P03} and W[1]-hardness with respect to the solution size~$k$ can be shown by a reduction from \textsc{Independent Set} where we attach an universal vertex.
As in the non-induced case, \textsc{Induced Max-Edge Biclique} can be solved by solving~$\sqrt{k}$ instances of \textsc{Induced~$(k_1,k_2)$-Biclique}. Thus, positive results for \textsc{Induced~$(k_1,k_2)$-Biclique} transfer to~\textsc{Induced Max-Edge Biclique}.

First, we present an FPT-algorithm for \textsc{Induced~$(k, k)$-Biclique} parameterized by~$\gamma$. 

\begin{theorem}
\label{thm-induced-k-k-biclique-weakly-gamma}
  \textsc{Induced~$(k, k)$-Biclique} can be solved in $\Oh^*(\gamma^{\Oh(\gamma)})$ time.
\end{theorem}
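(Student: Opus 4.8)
The plan is to reduce in polynomial time to the case $2 \le k \le \gamma - 1$ and then to solve that case by peeling off vertices of small closure. If $k = 1$ an induced $(1,1)$-biclique is just an edge, so assume $k \ge 2$. I would first argue that if $k \ge \gamma$ there is no induced $(k,k)$-biclique at all: if $(S,T)$ were one, then $G[S \cup T]$ is an induced subgraph of $G$ and hence weakly $\gamma$-closed, so some $v \in S \cup T$ satisfies $\cl_{G[S\cup T]}(v) < \gamma$; assuming $v \in S$ (the case $v \in T$ is symmetric since $k_1 = k_2$) and picking any $s' \in S \setminus \{v\}$ (which exists as $k \ge 2$), we have $T = N_{G[S\cup T]}(v) = N_{G[S\cup T]}(s')$ with $v s' \notin E(G)$, so $k = |T| \le \cl_{G[S\cup T]}(v) < \gamma$, a contradiction. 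Hence the only interesting case is $2 \le k \le \gamma - 1$; in particular we are looking for only $2k < 2\gamma$ vertices.

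For this case I would use a recursion in the spirit of the proof of \Cref{theo-enumerate-max-s-plexes}. Pick a vertex $v$ with $\cl_G(v) < \gamma$. Either the desired biclique avoids $v$, and we recurse on the (still weakly $\gamma$-closed) graph $G - v$, or it uses $v$; by the symmetry $k_1 = k_2$ we may assume $v$ lies on the $S$-side. In the latter case I would additionally guess a second vertex $u \in S \setminus \{v\}$ (at most $n$ choices, and $u$ must be a non-neighbor of $v$). Since every vertex of $T$ is a common neighbor of $v$ and $u$, we have $T \subseteq W := N_G(v) \cap N_G(u)$, and as $u \notin N_G[v]$ the closure bound yields $|W| \le \cl_G(v) < \gamma$. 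I would therefore enumerate all $k$-element subsets of $W$ — there are at most $2^{\gamma-1}$ of them — and discard those that are not independent. For each surviving candidate $T$, it remains to find an independent set of size exactly $k-2$ in the induced subgraph $H := G\bigl[\bigcap_{t \in T} N_G(t) \setminus (T \cup N_G[v] \cup N_G[u])\bigr]$: any such set $I$ yields the $S$-side as $\{v,u\} \cup I$ (which one checks is independent, disjoint from $T$, and completely joined to $T$), and conversely $S \setminus \{v,u\}$ is such a set for any genuine solution $(S,T)$ with $v, u \in S$. As $H$ is an induced subgraph of $G$ it is weakly $\gamma$-closed, so \Cref{cor:is} kernelizes the \textsc{Independent Set} instance $(H, k-2)$ to one on at most $\gamma(k-2)^2 < \gamma^3$ vertices, after which brute force over all $(k-2)$-subsets runs in $(\gamma^3)^{k} \le \gamma^{\Oh(\gamma)}$ time because $k < \gamma$.

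The recursion is a chain of at most $n$ calls, and each call performs $\Oh(n)$ guesses for $u$, at most $2^{\gamma-1}$ guesses for $T$, and one $\gamma^{\Oh(\gamma)}$-time independent-set computation, together with polynomial-time bookkeeping, for a total of $\gamma^{\Oh(\gamma)} \cdot n^{\Oh(1)} = \Oh^*(\gamma^{\Oh(\gamma)})$. I expect the only delicate point to be the correctness of the reduction to \textsc{Independent Set} — namely verifying that, after intersecting with the common neighborhood $\bigcap_{t \in T} N_G(t)$ and deleting $T$ and the closed neighborhoods of the two guessed vertices $v$ and $u$, what remains of $V(G)$ is exactly the set of admissible choices for $S \setminus \{v,u\}$ — together with the routine bookkeeping that the two appeals to the symmetry $k_1 = k_2$ (which side $v$ sits on, and the relabeling needed when $v$ would be on the $T$-side) are legitimate. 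Everything else is routine.
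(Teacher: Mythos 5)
Your proposal is correct and takes essentially the same route as the paper's proof: rule out $k \ge \gamma$, guess the earliest solution vertex $v$ (your peeling recursion on low-closure vertices is equivalent to the paper's use of a fixed closure ordering), guess a second $S$-vertex $u$, enumerate the at most $2^{\gamma-1}$ candidate sets $T \subseteq N(v)\cap N(u)$, and finish by solving an \textsc{Independent Set} instance via the kernel of \Cref{cor:is} followed by brute force. The only deviations are cosmetic bookkeeping (you fix $v,u$ in $S$ and search for $k-2$ further vertices outside $N[v]\cup N[u]$, whereas the paper simply looks for an independent set of size $k$ inside $\bigcap_{t\in T}N(t)$), which does not change the argument or the $\Oh^*(\gamma^{\Oh(\gamma)})$ bound.
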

\begin{proof}
  Since a biclique $K_{\gamma, \gamma}$ is not weakly $\gamma$-closed, $(G, k,k)$ is a No-instance if $k \ge \gamma$.
  Moreover, \textsc{Induced~$(k, k)$-Biclique} is trivially solvable in polynomial time when~$k \le 1$.
  Hence, we may assume that $2 \le k < \gamma$.
  Let~$\sigma$ be a fixed weak closure ordering of~$G$.
  Suppose that $(S, T)$ is a solution of $(G, k)$.
  Furthermore, let~$v \in S \cup T$ be the vertex of~$S\cup T$ that appears in~$\sigma$ before all other vertices of $S \cup T$.
  We assume without loss of generality that $v$ lies in~$S$.
  Note that there are at most~$n$ choices for~$v$.
  Let~$G'$ be the graph obtained by removing all vertices preceding $v$ in~$\sigma$.
Furthermore, let~$v' \in V(G') \setminus \{ v \}$  be another vertex which is contained in~$S$.
Note that there are at most~$n$ choices for~$v'$.
Next, we determine an independent set~$T \subseteq N_{G'}(v) \cap N_{G'}(v')$ of at least $k$ vertices.
  Since $ \vert N_{G'}(v) \cap N_{G'}(v') \vert  < \gamma$, there are at most $2^\gamma$ possibilities for $T$.
  Now, it remains to find an independent set $S \subseteq \bigcap_{u \in T} N_{G'}(u)$ of size at least~$k$ in $G'$.
  Recall that \textsc{Independent Set} admits a kernel with at most $\gamma k^2$ vertices by \Cref{cor:is}, and thus this can be achieved in $\Oh^*((\gamma k^2)^k)$ time.
  Since~$k < \gamma$, the overall running time is $\Oh^*(2^\gamma \gamma^{3\gamma}) = \Oh^*(\gamma^{\Oh(\gamma)})$.
\end{proof}

For $c$-closed graphs, we show that there is a single-exponential time algorithm when~$k_1 \ge 2$.
Our algorithm is based on a reduction to a variant of \textsc{Independent Set} called \textsc{Bicolored Independent Set} \cite{CK12}.

\problemdef
  {Bicolored Independent Set}
  {A graph $G$, a partition $(V_1, V_2)$ of $V(G)$, and $k_1, k_2 \in \mathds{N}$.}
  {Is there an independent set $I \subseteq V(G)$ with $ \vert I \cap V_1 \vert  = k_1$ and~$ \vert I \cap V_2 \vert  = k_2$?}

\begin{theorem}
\label{thm-induced-k-k-biclique-c-closed}
  \textsc{Induced~$(k_1, k_2)$-Biclique} with~$k_1\ge 2$ can be solved in $\Oh^*(1.611^c)$~time.
\end{theorem}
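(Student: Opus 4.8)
The plan is to reduce \textsc{Induced $(k_1,k_2)$-Biclique} with $k_1\ge 2$ to \textsc{Bicolored Independent Set}, which (by~\cite{CK12}) admits an algorithm running in $\Oh^*(1.6107^c)$ time on $c$-closed graphs, and then simply invoke that algorithm. The key structural observation that makes $k_1\ge 2$ work is that a solution $(S,T)$ contains two non-adjacent vertices $s,s'\in S$, and every vertex of $T$ lies in $N(s)\cap N(s')$; since $G$ is $c$-closed, $|N(s)\cap N(s')|<c$, so $T$ is pinned down to a subset of a set of fewer than $c$ vertices. Thus the number of candidate sets $T$ is at most $2^{c}$, but we want to do better by folding this into the $c$-closure-parameterized \textsc{Bicolored Independent Set} machinery rather than brute-forcing over $T$.

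Concretely, first I would guess the two non-adjacent vertices $s,s'\in S$ (this is $\Oh(n^2)$ guesses) and restrict attention to the common neighborhood $W := N_G(s)\cap N_G(s')$, which has $|W| < c$. The set $T$ we seek is an independent set of size exactly $k_2$ inside $W$; and the remaining part $S\setminus\{s,s'\}$ must be an independent set of size exactly $k_1-2$ inside $\bigcap_{t\in T} N_G(t)$, with no edges to $s$ or $s'$ either — but since $s,s'$ are already adjacent to all of $W\supseteq T$, the relevant constraint is that $S\setminus\{s,s'\}$ lies in the common non-neighborhood of $s$ and $s'$ restricted appropriately. The difficulty is that the choice of which vertices of $W$ form $T$ interacts with which vertices can go into $S$ (a vertex may go into $S$ only if it is adjacent to all chosen $T$-vertices and to none of the others chosen for $S$). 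This two-sided selection with a "must be adjacent to all of one side, non-adjacent to the other" structure is exactly what \textsc{Bicolored Independent Set} captures: one builds an auxiliary graph whose two color classes correspond to the two sides of the biclique and whose edges encode the forbidden pairs (non-edges across the biclique, plus within-side edges), and asks for an independent set hitting prescribed quotas in each class.

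The main technical step, and the place where care is needed, is showing that the auxiliary instance one constructs is itself $\Oh(c)$-closed (or that its closure is bounded by a function of $c$ that keeps the base $1.6107$ intact), so that the $\Oh^*(1.6107^c)$ bound of~\cite{CK12} actually applies to it rather than to the original graph. I expect this to require exploiting $c$-closure of $G$ twice: once to bound $|W|<c$ (controlling the $T$-side, which becomes small outright), and once more to bound common neighborhoods on the $S$-side within the auxiliary graph. Since one side ($T$) is already confined to fewer than $c$ vertices, the auxiliary graph is essentially $G$ restricted to $(S\text{-candidates})\cup W$ with a bipartite-style edge relation overlaid; its closure is governed by that of $G$. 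Once the closure bound is established, the algorithm is: enumerate the $\Oh(n^2)$ choices of $\{s,s'\}$, build the auxiliary \textsc{Bicolored Independent Set} instance, run the $\Oh^*(1.6107^c)$-time algorithm of~\cite{CK12}, and return Yes iff some choice yields a solution. The hard part is the closure-preservation argument for the reduction; the rest is routine bookkeeping of adjacency constraints and quotas.
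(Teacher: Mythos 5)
Your choice of reduction target is right, but the way you invoke \cite{CK12} is where the argument breaks. The algorithm of Couturier and Kratsch solves \textsc{Bicolored Independent Set} in $\Oh^*(1.2691^n)$ time, where $n$ is the number of vertices of the instance; it is an exact exponential-time algorithm in the instance size, not an algorithm parameterized by $c$-closure. So there is no ``$\Oh^*(1.6107^c)$ bound of \cite{CK12} for $c$-closed instances'' to transfer, and the closure-preservation argument you flag as the main technical step cannot rescue the running time. Moreover, that preservation claim itself fails: in your auxiliary graph, a pair $x\in W$, $y$ an $S$-side candidate with $xy\in E(G)$ becomes a non-adjacent pair, and its common neighborhood on the $S$-side consists of vertices adjacent to $y$ but non-adjacent to $x$ in $G$ --- a set that the $c$-closure of $G$ does not bound at all, since closure only controls common neighborhoods of non-adjacent pairs in $G$.

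The missing idea is to exploit $k_2\ge k_1\ge 2$ on \emph{both} sides, not just one. A solution $(S,T)$ contains two non-adjacent vertices $u_S,v_S\in S$ and two non-adjacent vertices $u_T,v_T\in T$, i.e., an induced $C_4$. Guessing this $C_4$ ($\Oh(n^4)$ choices) confines the remaining $T$-vertices to $N_G(u_S)\cap N_G(v_S)$ and the remaining $S$-vertices to $N_G(u_T)\cap N_G(v_T)$, each of size less than $c$ by $c$-closure. The resulting \textsc{Bicolored Independent Set} instance (edges within each side kept, non-edges of $G$ across the sides turned into edges) therefore has at most $2c-2$ vertices in total, and running the $\Oh^*(1.2691^n)$ algorithm on it yields $\Oh^*(1.2691^{2c})=\Oh^*(1.6107^c)$. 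This is exactly the paper's proof; your one-sided pinning of $T\subseteq N(s)\cap N(s')$ leaves the $S$-side unbounded and, as it stands, gives at best a $2^{c}$-type brute force over $T$ followed by an \textsc{Independent Set} computation, not the claimed bound.
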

\begin{proof}
  Let~$(G=(V,E),k_1,k_2)$ be an instance of \textsc{Induced~$(k_1, k_2)$-Biclique}.
    Since~$k_1\ge 2$ any induced biclique with~$k_1$ vertices in one partite set and with~$k_2$ vertices in the other partite set contains at least on cycle on four vertices.
  For each induced cycle $(u_S, u_T, v_S, v_T)$ on four vertices in $G$ we search the largest induced biclique containing these four vertices.
  Now, we construct an instance $(G', V_1', V_2', k_1, k_2)$ of \textsc{Bicolored Independent Set}, where
\begin{itemize}
\item $V_1' \coloneqq  N_G(u_S) \cap N_G(v_S)$,
\item $V_2' \coloneqq  N_G(u_T) \cap N_G(v_T)$, and
\item $G' \coloneqq (V_1' \cup V_2', E(G[V_1']) \cup E(G[V_2']) \cup \{ v_1' v_2' \mid v_1' \in V_1', v_2' \in V_2', v_1' v_2' \notin E(G) \})$.
\end{itemize}  
  
  In other words, $G'$ is constructed from $G[V_1' \cup V_2']$ by flipping the adjacency between $V_1'$ and $V_2'$.
  By the $c$-closure of $G$, there are at most $2c - 2$ vertices in $G'$.
  Since $v_1' \in V_1'$ and $v_2' \in V_2'$ are adjacent in $G$ if and only if they are not in $G'$, there is a $(k_1, k_2)$-biclique containing $u_S, u_T, v_S, v_T$ if and only if $(G', V_1', V_2', k_1, k_2)$ is a Yes-instance.
  Since \textsc{Bicolored Independent Set} is $\Oh^*(1.2691^n)$-time solvable on $n$-vertex graphs \cite{CK12}, we obtain an $\Oh^*(1.611^c)$-time algorithm for \textsc{Induced $(k_1, k_2)$-Biclique}.
\end{proof}

By using a reduction similar to the one in the proof of \Cref{thm-induced-k-k-biclique-c-closed}, and using the algorithm of Gaspers et al.~\cite{GKL12} to enumerate all maximal induced bicliques in $\Oh^*(3^{n/3})$~time we obtain the following.

\begin{proposition}
\label{prop-enum-max-induced-bicliques-c}
All maximal induced bicliques in which each part has at least two vertices can be enumerated in $\Oh^*(3^{2c/3})$~time.
\end{proposition}

However, even 2-closed graphs may have $\Omega(3^{n/3})$ maximal induced bicliques:
Consider the aforementioned graph proposed by Hermelin and Manoussakis~\cite{HM20}, which consists of a single universal vertex $u$ and $(n - 1) / 3$ disjoint triangles.
Observe that this graph is 2-closed and has~$3^{(n - 1) / 3}$ maximal induced bicliques where one part consists of $u$.

In contrast to our positive result for $k_1 \ge 2$ presented in Theorem~\ref{thm-induced-k-k-biclique-c-closed}, we prove that \textsc{Induced~$(1,k)$-Biclique} is NP-hard even on graphs with constant~$h$-index,~$c$-closure, and weak~$\gamma$-closure. 


\begin{theorem}
\label{thm-induced-k-edge-biclique-hard}
\textsc{Induced Max-Edge Biclique} and \textsc{Induced~$(1, k_2)$-Biclique} remain NP-hard even on graphs with~$h$-index 4,~$c$-closure 3, and weak~$\gamma$-closure 2.
\end{theorem}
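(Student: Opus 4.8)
The plan is to give a polynomial-time reduction from a classical NP-hard problem to \textsc{Induced Max-Edge Biclique} that produces graphs whose $h$-index, $c$-closure, and weak $\gamma$-closure are all bounded by the claimed small constants. A natural source problem is \textsc{Independent Set} (or its equivalent formulation \textsc{Vertex Cover}): given $(H, \kappa)$ we want to decide whether $H$ has an independent set of size $\kappa$. The key trick to handle the case $k_1 = 1$ is that an induced $(1, k)$-biclique is exactly a vertex $s$ together with an independent set of $k$ vertices in $N_G(s)$; so if we can attach a single ``hub'' vertex whose neighborhood is exactly $V(H)$ and which, together with appropriate gadgeting, keeps the closure parameters tiny, we are done. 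The obstacle is that making one vertex adjacent to all of $V(H)$ immediately risks creating many common neighbors (hurting $c$-closure) and a high-degree vertex (hurting $h$-index), so the construction must be more delicate.

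First I would build $G$ from $H$ as follows: keep a copy of $V(H)$ with all of its edges, and subdivide every edge of $H$ (replacing $uv$ by a path $u$–$x_{uv}$–$v$) so that the only short cycles and common-neighborhood structure come from controlled gadgets rather than from $H$ itself; subdivision is the standard device for driving $c$-closure and weak closure down to a constant (as in the proof of \Cref{thm-s-plex-w-hard}). Then I would introduce a constant-size gadget that simulates a single vertex adjacent to all of $V(H)$ without any one vertex having large degree: for instance, a small ``selector'' structure whose independent transversals correspond to the choice of an original vertex, or — more simply — observe that for the NP-hardness we only need \emph{some} graph family, so we may instead reduce to a version where the hub's role is played by a bounded-degree bipartite-like gadget built over $V(H)$. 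I would set the edge target $k$ so that a biclique with exactly $k$ edges forces the ``$S$''-side to be a single hub-type vertex and the ``$T$''-side to be an independent set of the required size among the original vertices, and verify the equivalence in both directions: a size-$\kappa$ independent set in $H$ yields a $(1,\kappa)$-biclique with $\kappa$ edges, and conversely any biclique maximizing the edge count must, by the degree bounds built into the gadget, look exactly like hub-plus-independent-set.

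The verification of the parameter bounds is where I would be most careful. For the $h$-index I must ensure that only $O(1)$ vertices have degree exceeding some constant; the original vertices of $H$ can have unbounded degree, so I would route their adjacencies through subdivision vertices and gadget vertices, leaving each vertex of bounded degree except for at most a constant number of gadget ``cores'' — yielding $h$-index four. For the $c$-closure I must check that no two non-adjacent vertices share three or more common neighbors; after subdivision, two original vertices share no common neighbor (their former common neighbors became paths), subdivision vertices have degree two, and the gadget is hand-checked — yielding $c = 3$. For the weak $\gamma$-closure I would exhibit an explicit closure ordering: peel off the degree-two subdivision vertices first (each has closure number $0$ once one can argue its two neighbors are non-adjacent or share nothing extra, at worst $1$), then the gadget vertices, then the original vertices, checking $\cl_{G_i}(v_i) < 2$ throughout — yielding $\gamma = 2$.

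The main obstacle, as flagged, is reconciling the ``$k_1 = 1$ means independent set in a neighborhood'' structure with the requirement that \emph{no} vertex acts as a true universal vertex: a literal apex vertex over $V(H)$ would wreck both the $h$-index and the $c$-closure. Resolving this requires the gadget to present, to the biclique, the \emph{illusion} of a common neighborhood equal to a large independent-set-candidate region, while to the closure/degree bookkeeping it looks like a constant-size constant-degree object glued on via subdivided edges. Getting this gadget right — so that the unique edge-maximizing bicliques correspond bijectively to large independent sets of $H$ and nothing else sneaks in — is the crux; once the gadget is pinned down, the parameter computations are routine case checks and the reduction's correctness proof is a short two-way argument.
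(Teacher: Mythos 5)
There is a genuine gap, and it sits exactly where you flagged it: the ``selector'' gadget that is supposed to simulate a universal hub without creating one is never constructed, and in fact the premise that motivates it is wrong. A universal vertex does \emph{not} wreck the parameters in question. The $h$-index only counts how many vertices have large degree, so a single vertex of huge degree is harmless as long as all remaining vertices have degree at most four; and a universal vertex is adjacent to every other vertex, so it contributes no non-adjacent pair of its own and adds only one common neighbor to every other non-adjacent pair, which is compatible with $c$-closure three and weak closure two. The paper exploits precisely this: it reduces from \textsc{Independent Set} on \emph{cubic} graphs (this restriction, which you omit, is what keeps every non-hub degree at most four), subdivides each edge \emph{twice} (so that the independence number increases by exactly $|E(G)|$, a correspondence that fails for the single subdivision you propose -- e.g.\ $K_3$ versus $C_6$), and then simply adds a universal vertex $w$. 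A biclique with at least $k+|E(G)|\ge 10$ edges must use $w$ because all other vertices have degree at most three in $G'-w$; since $w$ is universal and the biclique is induced, its side is exactly $\{w\}$, and the other side is an independent set, which translates back to an independent set of size $k$ in the cubic graph.

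Beyond being unnecessary, your hub-avoiding gadget cannot deliver what you ask of it. In an induced $(1,k)$-biclique the single vertex on the small side must literally be adjacent to all $k$ vertices on the other side, so a biclique with many edges and one side of size one forces a genuinely high-degree vertex -- no constant-degree ``illusion'' is possible. And if you instead try to realize many edges with both sides of size at least two, then in a $3$-closed graph both sides have at most $2c-2=4$ vertices (the argument of \Cref{thm-induced-k-k-biclique-c-closed}), so the edge count is bounded by a constant and the reduction collapses. So the only viable route is the one you tried to avoid: accept the single high-degree apex, and verify (as the paper does) that it is benign for $h$-index, $c$-closure, and weak closure.
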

\begin{proof}
We first show the NP-hardness for \textsc{Induced Max-Edge Biclique}.
We reduce from \textsc{Independent Set}, which is NP-hard even on graphs in which each vertex has degree at most 3~\cite{Gj79}.
Recall that in \textsc{Independent Set} we are given a graph~$G$ and an integer~$k$, and ask whether~$G$ contains an independent set of size at least~$k$.
We assume that~$k\ge 10$, since otherwise the instance $(G,k)$ can be solved in polynomial time.  
We construct an instance~$(G',k')$ of \textsc{Induced Max-Edge Biclique} as follows:
We begin with a copy of~$G$.  
Then, each edge~$uv\in E(G)$ is replaced by a path on four vertices~$u, u_v, v_u$, and~$v$.
Finally, we introduce a new universal vertex~$w$ (that is, $N_{G'}[w] = V(G')$) and set~$k'\coloneqq k+\vert E(G) \vert$. 
It is easy to see that~$G'$ has~$h$-index~4 (because every vertex except $w$ has degree at most 4), is~$3$-closed and weakly~$2$-closed.  
It remains to show that~$G$ contains an independent set of size~$k$ if and only if~$G'$ contains an induced biclique with at least~$k'=k+ \vert E(G) \vert $ edges. 

Suppose that~$G$ contains an independent set~$I$ of size at least~$k$.
Then, there is an independent set $I'$ of size $k +  \vert E(G) \vert $ in $G' - w$:
Since~$I$ is an independent set, for each edge~$uv\in E(G)$ we have without loss of generality that~$u\notin I$.
Let~$F\coloneqq \{u_v\mid uv\in E(G)\text{ such that~$u\notin I$}\}$ be the union of the neighbors of these vertices~$u$ not in the independent set in paths on four vertices in~$G'$.
Then,~$I'$ is the disjoint union of~$I$ and~$F$. 
Thus, the set~$I' \cup \{ w \}$ is an induced biclique with at least~$k+ \vert E(G) \vert $ edges in~$G'$. 

Conversely, suppose that~$G'$ contains a biclique $(S, T)$ with at least~$k'=k+\vert E(G)\vert$ edges. 
Since each vertex in~$G' - w$ has degree at most~3 and~$k\ge 10$, we see that vertex~$w$ is contained in $(S, T)$. 
Without loss of generality, assume that $w \in S$. 
Since~$w$ is a universal vertex, we obtain~$S=\{w\}$. 
It follows that $T$ is an independent set of size at least $k +  \vert E(G) \vert $ in $G'$.
We may assume $\vert T \cap \{ u_v, v_u \} \vert = 1$:
For each edge~$uv\in E(G)$, the set~$T$ contains at most one of $u_v$ and~$v_u$.
If neither is in $T$, then $(T \setminus \{ u \}) \cup \{ u_v \}$ is another independent set of size $k'$.
Thus, we may assume that $\vert T \cap \{ u_v,  v_u\} \vert = 1$ for every $uv \in E(G)$.
No pair of adjacent vertices $u$ and $v$ in~$G$ are part of $T$ since otherwise $T$ contains three vertices from a path $(u, u_v, v_u, v)$.
Thus,~$T \cap V(G)$ is an independent set of size $\vert T' \vert - \vert E(G) \vert \ge k$.

Finally, note that this reduction also shows NP-hardness of \textsc{Induced $(1, k_2)$-Biclique} (let $k_2 = k'$).
\end{proof}

Together with \Cref{thm-induced-k-edge-biclique-hard}, the next theorem paints a full picture of the complexity of \textsc{Induced $(k_1, k_2)$-Biclique} with respect to the weak closure number.

\begin{theorem}
  \label{thm-induced-biclique-dicho}
  For constant $k_1 \ge 2$, \textsc{Induced $(k_1, k_2)$-Biclique} on weakly $\gamma$-closed graphs is polynomial-time solvable if $\gamma \le k_1 + 1$ and NP-hard otherwise.
  Moreover, \textsc{Induced $(1, k_2)$-Biclique} on weakly $1$-closed graphs is polynomial-time solvable.
\end{theorem}
\begin{proof}
  We start with the NP-hardness.
  We adapt the reduction in the proof of \Cref{thm-induced-k-edge-biclique-hard}:
  Instead of adding a single universal vertex $w$, we add $k_1$ universal vertices (which are pairwise nonadjacent).
  Note that the graph constructed by our reduction is weakly $(k_1 + 2)$-closed (consider an ordering in which all the universal vertices appear last).

  Our polynomial-time algorithms solve \textsc{Independent Set} on weakly 1-closed graphs as a subroutine.
  We fix a weak closure ordering $\sigma$.
  Start with $I = \emptyset$.
  In a first step, we add the last vertex~$v$ of~$\sigma$ to~$I$ and then delete $N[v]$ from the graph.
  For the correctness of this step, observe that the neighborhood of $v$ is a clique. 
  Otherwise, there exists a non-neighbor~$u$ of~$v$ with~$u<_\sigma v$ and distance 2 to~$v$. 
  Since~$u$ and~$v$ have at least one common neighbor, we obtain a contradiction to the fact that the graph is weakly~$1$-closed. 
  Since~$N[v]$ is a clique, there exists a maximum independent set containing~$v$.
  We repeat this step until the graph is empty.

  Next, we give a polynomial-time algorithm for \textsc{Induced $(1, k_2)$-Biclique} on weakly 1-closed graphs.
  Without loss of generality, we assume that the input graph~$G$ is connected.
  Observe that there is a universal vertex $u$ that is adjacent to every other vertex.
  Now, observe that there is an induced $(1, k_2)$-biclique in $G$ if and only if a maximum independent set of size $k_2$ in~$G - u$.
  Since a maximum independent set in a weakly~$1$-closed graph can be found in polynomial time, we are done.

  Finally, we prove the polynomial-time solvability for $\gamma \le k_1 + 1$.
  Observe that if~$\gamma \le k_1$, then we have a No-instance of \textsc{Induced $(k_1, k_2)$-Biclique} since an induced~$(k_1,k_2)$-biclique has weak closure~$k_1+1$.
  Hence, in the following we assume that~$\gamma=k_1+1$.
  Now, consider a hypothetical solution~$(S, T)$ with~$\vert S \vert = k_1$ and~$ \vert T \vert = k_2$.
  We can guess which vertices correspond to the smaller side $S$ in $\Oh(n^{k_1})$~time.
  Let~$\sigma$ be a fixed weak closure ordering and let~$X$ be the set of vertices that occur in~$\sigma$~before any vertex in $S$.
  Since~$T\cap X$ are common neighbors of~$S$ we observe that~$T\cap X$ has size at most~$\gamma-1=k_1$.
  Hence, in $\Oh(n^{k_1})$~time, we can guess~$T\cap X$.
  It remains to find $T \cap X$.
  Note that $T \cap X \subseteq U := \bigcap_{s \in S} N(s)$, that is,~$S \subseteq N(t)$ for every vertex $t \in T \cap X$.
  Observe that $G[U \cap X]$ is weakly 1-closed:
  In the ordering $\sigma$, two nonadjacent vertices~$u,u'\in U\cap X$ such that~$u <_{\sigma} u'$ have no common neighbor $w \in U \cap X$ with $u <_{\sigma} w$ since~$u$ and~$u'$ have $S$ as common neighbors which appear after~$u'$ in~$\sigma$, and $S$ has size~$k_1=\gamma - 1$.
  As argued above, we can find a maximum independent set in~$G[U \cap X]$ in polynomial time.
  Thus, \textsc{Induced $(k_1, k_2)$-Biclique} can be solved in polynomial time if~$k_1$ is a constant and~$\gamma\le k_1+1$.
\end{proof}

To complete the dichotomy with respect to~$c$, we prove that \textsc{Induced Max-Edge Biclique} and \textsc{Induced $(k_1, k_2)$-Biclique} can be solved in polynomial time if~$c=2$. 
Observe that \Cref{thm-induced-k-k-biclique-c-closed} implies a polynomial-time algorithm for~$k_1\ge 2$ if~$c=2$.
Hence, it remains to show that \textsc{Induced $(1, k_2)$-Biclique} can be solved in polynomial-time if~$c=2$.
For this, is it sufficient to consider diamond-free graphs since each $2$-closed graph is diamond-free.

\begin{proposition}\label{prop-diamond-free}
  \textsc{Induced $(1, k_2)$-Biclique} can be solved in polynomial time on diamond-free graphs.
\end{proposition}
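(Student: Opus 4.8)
The plan is to exploit the structural constraint that a diamond-free graph imposes on the neighborhood of any single vertex. In an induced $(1,k_2)$-biclique we pick one vertex $s$ and then need an independent set of size $k_2$ inside $N_G(s)$; the catch is that the $k_2$ chosen vertices must also be non-adjacent to each other \emph{and} we must be careful that no extra edges are induced — but since $|S|=1$ there is no constraint of type~(2), so the problem reduces exactly to: is there a vertex $s$ such that $G[N_G(s)]$ contains an independent set of size $k_2$? The key observation is that in a diamond-free graph, for any vertex $s$ the graph $G[N_G(s)]$ is a disjoint union of cliques: if two neighbors $x,y$ of $s$ are adjacent and also share another common neighbor $z\in N_G(s)$, then $\{s,x,y,z\}$ would contain a diamond unless the neighborhood relation is transitive on $N_G(s)$. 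More precisely, $N_G(s)$ cannot contain an induced $P_3$, because an induced $P_3$ $x\text{--}y\text{--}z$ inside $N_G(s)$ together with $s$ forms a diamond ($s$ adjacent to all three, $x\not\sim z$). A $P_3$-free graph is exactly a disjoint union of cliques.

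First I would make this cluster structure explicit: for each vertex $s\in V(G)$, compute the connected components of $G[N_G(s)]$, each of which is a clique. Then the maximum independent set in $G[N_G(s)]$ has size exactly the number of such components, obtained by picking one vertex from each clique. So the algorithm is: for every $s\in V(G)$, count the number $\ell(s)$ of connected components of $G[N_G(s)]$; answer \textbf{Yes} iff some $\ell(s)\ge k_2$, and in that case output $S=\{s\}$ together with one representative vertex from each of any $k_2$ of these components as $T$. Checking that the output is a valid induced $(1,k_2)$-biclique is immediate: the representatives are pairwise non-adjacent (they lie in distinct cliques of a cluster graph), each is adjacent to $s$, and conditions (2) and (3) hold trivially for $S$ and for $T$ respectively. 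This runs in polynomial time — $\Oh(n\cdot(n+m))$ for the component computations over all choices of $s$.

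For the running-time and correctness bookkeeping there is little to do: the forward direction of correctness is the observation above, and the reverse direction is that any induced $(1,k_2)$-biclique with center $s$ yields an independent set of size $k_2$ in the cluster graph $G[N_G(s)]$, hence $\ell(s)\ge k_2$. The only place requiring a small argument is the claim that $G[N_G(s)]$ is $P_3$-free, i.e.\ the diamond-freeness translation; I would state and prove this as a one-line claim. I do not anticipate a genuine obstacle here — the main point is simply recognizing the cluster-graph structure of neighborhoods in diamond-free graphs; everything else is routine. (Presumably this proposition is then combined with the fact that $2$-closed graphs are diamond-free — any diamond has two non-adjacent vertices with two common neighbors, contradicting $c=2$ — to conclude that \textsc{Induced Max-Edge Biclique} is polynomial-time solvable when $c=2$, by also handling the $k_1\ge 2$ case via Theorem~\ref{thm-induced-k-k-biclique-c-closed}.)
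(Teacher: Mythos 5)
Your proposal is correct and follows essentially the same route as the paper: observe that diamond-freeness makes every neighborhood $G[N(v)]$ a $P_3$-free graph, hence a disjoint union of cliques, and then check for each vertex whether that cluster graph has at least $k_2$ components. The extra details you supply (the explicit $P_3$ argument, the representative-picking, and the link to $2$-closed graphs) are all consistent with the paper's intent.
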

\begin{proof}
  Suppose that the input graph $G$ is diamond-free.
  Then, for each vertex $v \in V(G)$ the graph~$G[N(v)]$ is a disjoint union of cliques.
  Thus, $(G, 1, k_2)$ is a Yes-instance if and only if there is a vertex $v \in V(G)$ such that $G[N(v)]$ has at least $k_2$ connected components.
\end{proof}

Now, from \Cref{prop-diamond-free} ($k_1=1$) and \Cref{thm-induced-k-k-biclique-c-closed} ($k_1\ge 2$) we obtain the following.  

\begin{corollary}\label{cor:induced-biclique-2-closed}
  \textsc{Induced $(k_1, k_2)$-Biclique} and \textsc{Induced Max-Edge Biclique} can be solved in polynomial time on 2-closed graphs.
\end{corollary}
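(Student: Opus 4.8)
The plan is to settle the two regimes $k_1 = 1$ and $k_1 \ge 2$ separately, since each is already covered by an earlier result once one knows that $2$-closed graphs are structurally restricted, and then to lift the conclusion to the edge-maximization variant by the reduction already mentioned in the paper.

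The key preliminary observation is that every $2$-closed graph is diamond-free. In the diamond $K_4 - e$ the two vertices of degree two, say $c$ and $d$, are non-adjacent and share the other two vertices as common neighbours. Hence, if a graph $G$ contained an induced diamond on $\{a,b,c,d\}$ with $cd \notin E(G)$, then $\cl_G(c) \ge |N_G(c) \cap N_G(d)| \ge |\{a,b\}| = 2$, contradicting $2$-closedness; so no $2$-closed graph has an induced diamond. (The non-edge $cd$ of the induced diamond is a non-edge of $G$ as well, which is what makes the closure bound applicable.)

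Given this, the case $k_1 = 1$ is handled directly by \Cref{prop-diamond-free}: on a $2$-closed, hence diamond-free, graph one only needs to check whether some vertex has a neighbourhood inducing at least $k_2$ connected components, which is polynomial. For $k_1 \ge 2$ I would invoke \Cref{thm-induced-k-k-biclique-c-closed}, whose running time $\Oh^*(1.6107^c)$ becomes $\Oh^*(1.6107^2) = \Oh^*(1)$ when $c = 2$, i.e.\ polynomial. Since $k_1 = 1$ and $k_1 \ge 2$ exhaust all relevant values, \textsc{Induced $(k_1, k_2)$-Biclique} is polynomial-time solvable on $2$-closed graphs.

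Finally, for \textsc{Induced Max-Edge Biclique} I would use that deleting vertices from either side of an induced biclique again leaves an induced biclique: therefore a solution with at least $k$ edges exists if and only if, for one of the $\Oh(\sqrt{k})$ candidate pairs $(k_1, k_2)$ with $k_1 \le k_2$ and $k_1 \le \lceil \sqrt{k} \rceil$, there is an induced $(k_1, k_2)$-biclique. Each such instance is polynomial-time solvable by the first part, so the overall procedure is polynomial. I do not expect a genuine obstacle here, as this is a corollary of the preceding results; the only point that needs a moment's care is the elementary structural claim that $2$-closedness forbids induced diamonds, which is exactly what permits the split into the $k_1 = 1$ and $k_1 \ge 2$ cases.
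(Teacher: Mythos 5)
Your proposal is correct and takes essentially the same route as the paper: the corollary is obtained by combining \Cref{prop-diamond-free} for the case $k_1=1$ (using that $2$-closed graphs are diamond-free) with \Cref{thm-induced-k-k-biclique-c-closed} for $k_1\ge 2$ (where $\Oh^*(1.6107^c)$ is polynomial for $c=2$), and the max-edge variant is reduced to $\Oh(\sqrt{k})$ such instances exactly as the paper indicates. Your explicit argument that $2$-closedness forbids induced diamonds is a small detail the paper leaves unproven, but it matches the intended justification.
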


Our results for \textsc{Induced $(k_1, k_2)$-Biclique} can be summarized as follows (see also \Cref{tab:results}):
If $k_1 = k_2$, then the problem becomes FPT with respect to the weak closure number $\gamma$ (\Cref{thm-non-induced-k-k-biclique-weakly-gamma}).
In the general case, the complexity strongly depends on whether $k_1 \ge 2$ or $k_1 = 1$.
If $k_1 \ge 2$, the problem is polynomial-time solvable for $\gamma \le k_1 + 1$ (\Cref{thm-induced-biclique-dicho}), NP-hard for $\gamma \ge k_1 + 2$ (\Cref{thm-induced-biclique-dicho}), and FPT for the parameterization by $c$ (\Cref{thm-induced-k-k-biclique-c-closed}).
If $k_1 = 1$, then we have a complexity dichotomies in terms of $c$ and $\gamma$:
we have a polynomial-time algorithm for $c = 2$ (\Cref{cor:induced-biclique-2-closed}) and $\gamma = 1$ (\Cref{thm-induced-biclique-dicho}) and NP-hardness for $c \ge 3$ (\Cref{thm-induced-k-edge-biclique-hard}) and $\gamma \ge 2$ (\Cref{thm-induced-k-edge-biclique-hard}).

\section{Variants of Dominating Set}

In companion work \cite{KKS20}, we showed that \textsc{Dominating Set} admits a kernel of size $k^{\Oh(c)}$.
Recently, Lokshtanov and Surianarayanan showed that \textsc{Dominating Set} parameterized by~$\gamma+k$ can be solved in $\Oh^*(k^{\Oh(\gamma^2k^3)})$~time~\cite{LS21}.
Here, we develop FPT-algorithms for the related \textsc{Independent Dominating Set} and \textsc{Dominating Clique} problems in weakly~$\gamma$-closed graphs. 

\subsection{Independent Dominating Set}

We consider the \textsc{Independent Dominating Set} problem. The task in this problem is to find a small independent set~$S$ that dominats all vertices in~$G$.
\begin{definition}
  A vertex set~$S\subseteq V(G)$ is a dominating set in~$G$ if ~$S\cap N[v]\neq\emptyset$ for each~$v\in V$. Moreover,~$S\subseteq V(G)$ is an independent dominating set in $G$ if~$S$ is a dominating set and all vertices of~$S$ are pairwise nonadjacent. 
\end{definition}
\problemdef{Independent Dominating Set}
{A graph~$G$ and~$k \in \mathds{N}$.}
{Does~$G$ contain an independent dominating set~$S \subseteq V(G)$ of size at most~$k$?}

\textsc{Independent Dominating Set} is W[2]-hard for the parameter~$k$~\cite{DF13}. 
There are several fixed-parameter tractability results in restricted graph classes: 
\textsc{Independent Dominating Set} has a kernel of~$\Oh(d^2 k^{d+1})$~vertices computable in $\Oh^*(2^d)$~time~\cite{PRS12}.
Moreover, when the graph contains no cycles of length 3 or 4, \textsc{Independent Dominating Set} can be solved in $\Oh^*(k^{\Oh(k)})$~time~\cite{RS08}.

We present an FPT-algorithm \textit{SolveIDS} (\Cref{algo-branching-ids}) with running time~$\Oh^*((\frac{\gamma-1}{2})^k k^{2k})$.
Note that our algorithm extends the $\Oh^*(k^{\Oh(k)})$~time algorithm of Raman and Saurabh \cite{RS08}, because any graph without cycles of length~3 or~4 is 2-closed.
Let~$G'$ be a copy of~$G$.
\Cref{algo-branching-ids} first greedily computes an independent set $I$ of~$G$ of size at most~$k+1$ by iteratively choosing vertices $v$ such that $\cl_{G'}(v) \le \gamma - 1$ and afterwards removing~$N[v]$ from~$G'$ (Line~\ref{line-ids-choose-vertex}).
If~$I$ is inclusion-maximal and of size at most~$k$, then~$I$ constitutes a solution.
Otherwise, we find a vertex set~$P$ to branch on.
The choice of~$I$ will ensure that $P$ has at most $(\gamma - 1) \binom{k + 1}{2}$ vertices.

\begin{algorithm}[t]
	\caption{An FPT-Algorithm for \textsc{Independent Dominating Set}.}
	\label{algo-branching-ids}
	\begin{algorithmic}[1]	
	\Procedure{\textit{SolveIDS}}{$G,k$}
	\State \algorithmicif\ $k=0$ \ALGAND $V(G)\neq \emptyset$ \algorithmicthen\ \Return No  \label{line-ids-no}
	\State Let $I\coloneqq \emptyset$ and $G'\coloneqq G$.
  \LineComment{$I$ will be an independent set of size at most~$k+1$ in~$G $}\label{line-compute-is-start}
	\While {$V(G')\neq\emptyset$ \ALGAND $ \vert I \vert \le k$}
	\State Let~$v$ be a vertex such that $\cl_{G'}(v) \le \gamma - 1$. \label{line-ids-choose-vertex}
	\State $I\coloneqq I\cup\{v\}$ and $G'\coloneqq G' - N_{G'}[v]$. \label{line-compute-is-end}
	\EndWhile 
	\If {$ \vert I \vert  \le k$} \Return Yes \label{line-brute-force}
	\Else \label{line-I-small}
    \State $P\coloneqq \{v~ \vert ~v \text{ is a common neighbor of at least two vertices in } I\}$ \label{line-compute-common-neigs-of-i}
    \ForEach{$u \in P$}
      \If {\textit{SolveIDS}$(G- N_G[u],k-1)$ returns Yes}
        \State \Return Yes \label{line-choose-v-in-p}
      \EndIf
    \EndFor
	\EndIf
	\State \Return No
	\EndProcedure
	\end{algorithmic}
\end{algorithm}



\begin{theorem}
\label{theo-ids-fpt-algo}
\textsc{Independent Dominating Set} can be solved in $\Oh^*((\frac{\gamma-1}{2})^k k^{2k})$~time. 
\end{theorem}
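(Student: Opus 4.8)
The plan is to show that \textit{SolveIDS} (\Cref{algo-branching-ids}) decides \textsc{Independent Dominating Set} correctly within the claimed running time, arguing by induction on $k$. Note that every graph passed to a recursive call is an induced subgraph of $G$ and hence again weakly $\gamma$-closed, so the inductive claim applies to all recursive calls.

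\emph{Correctness.} The greedy phase in Lines~\ref{line-compute-is-start}--\ref{line-compute-is-end} is well defined: as $G'$ is always an induced subgraph of the weakly $\gamma$-closed graph $G$, it contains a vertex $v$ with $\cl_{G'}(v) < \gamma$, so Line~\ref{line-ids-choose-vertex} succeeds; and since $N_{G'}[v]$ is deleted right after each insertion, $I$ stays independent throughout. If the loop stops with $|I| \le k$, then necessarily $V(G') = \emptyset$, so $I$ is a maximal independent set, hence a dominating set, of size at most $k$, and returning Yes on Line~\ref{line-brute-force} is correct. Otherwise $|I| = k+1$, and the algorithm branches over $P$. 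For the forward direction of the branching, suppose $(G,k)$ has an independent dominating set $S$ with $|S| \le k$. A map sending each of the $k+1$ vertices of $I$ to some vertex of $S$ dominating it cannot be injective, so two distinct $w_1, w_2 \in I$ are dominated by a common $u \in S$; since $I$ is independent, $u \notin I$, hence $u$ is a common neighbour of $w_1$ and $w_2$, and therefore $u \in P$. As $S$ is independent, it contains no vertex of $N_G[u] \setminus \{u\}$, so $S \setminus \{u\}$ is an independent set of $G - N_G[u]$ of size at most $k-1$ that still dominates $G - N_G[u]$; by the induction hypothesis the call on Line~\ref{line-choose-v-in-p} returns Yes. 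The backward direction is immediate: if \textit{SolveIDS}$(G - N_G[u], k-1)$ returns Yes with solution $S''$, then $S'' \cup \{u\}$ is an independent dominating set of $G$ of size at most $k$. The base case $k = 0$ is handled correctly by Line~\ref{line-ids-no}.

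\emph{Bounding the branching factor.} The crucial estimate is $|P| \le (\gamma-1)\binom{k+1}{2}$. Let $w_1, \dots, w_{k+1}$ be the vertices of $I$ in the order the greedy phase inserts them, and let $G'_j$ denote the value of $G'$ just before $w_j$ is inserted. Fix $u \in P$, and note $u \notin I$ because $u$ has a neighbour in the independent set $I$. Let $a < b$ be the two smallest indices $j$ with $u w_j \in E(G)$. First, $u$ is not deleted before step $a$: a deletion at step $j < a$ would force $u \in N_{G'_j}(w_j) \subseteq N_G(w_j)$, contradicting the minimality of $a$. Hence $u \in V(G'_a)$, and $w_b \in V(G'_a)$ as well, since it is inserted only later. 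As $w_a w_b \notin E(G)$, we have $w_b \in V(G'_a) \setminus N_{G'_a}[w_a]$, while $u$ is a common neighbour of $w_a$ and $w_b$ in $G'_a$; thus $u \in N_{G'_a}(w_a) \cap N_{G'_a}(w_b)$, a set of size at most $\cl_{G'_a}(w_a) \le \gamma-1$. Since each $u \in P$ is charged in this way to exactly one unordered pair $\{w_a, w_b\} \subseteq I$, summing over all $\binom{k+1}{2}$ pairs yields the bound. I expect this charging argument --- in particular, verifying that $u$ and $w_b$ still lie in $G'_a$ --- to be the main obstacle, since it is the only place where the closure ordering is used in an essential way.

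\emph{Running time.} Each recursion node performs only polynomial work: the greedy phase runs at most $k+1$ iterations, each locating a minimum-closure vertex in polynomial time, and $P$ is computed in polynomial time. The parameter decreases by one per recursion level, so the tree has depth at most $k$, and by the bound above each node has at most $B := (\gamma-1)\binom{k+1}{2}$ children, giving at most $(k+1)\,B^k$ nodes in total. Since $B^k = \big(\tfrac{\gamma-1}{2}\big)^k \big(k(k+1)\big)^k \le e\,\big(\tfrac{\gamma-1}{2}\big)^k k^{2k}$, the overall running time is $\Oh^*\big((\tfrac{\gamma-1}{2})^k k^{2k}\big)$.
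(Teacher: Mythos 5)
Your proposal is correct and follows essentially the same route as the paper's proof: greedily build a maximal independent set $I$ of size up to $k+1$ using the weak-closure ordering, branch on the set $P$ of common neighbours of pairs in $I$ via pigeonhole, and bound $|P|\le(\gamma-1)\binom{k+1}{2}$ by charging each $u\in P$ to a pair $\{v_i,v_j\}\subseteq I$ and using $\cl_{G_i}(v_i)\le\gamma-1$ in the intermediate graph. Your explicit charging to the two earliest neighbours of $u$ in $I$ is just a spelled-out version of the paper's summation bound, so there is no substantive difference.
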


\begin{proof}
We show that the search tree algorithm \Cref{algo-branching-ids} solves any instance $(G, k)$ of \textsc{Independent Dominating Set} in the claimed time.
First, we prove the correctness of \Cref{algo-branching-ids}.
Let~$I$ be the independent set of size at most~$k+1$ of~$G$ obtained in Lines~\ref{line-compute-is-start} to \ref{line-compute-is-end}.
Suppose that~$ \vert I \vert \le k$.  
Since~$I$ is a maximal independent set, each vertex~$v\in V(G)$ is either contained in~$I$ or a neighbor of a vertex in~$I$. 
Hence,~$I$ is an independent dominating set of size at most~$k$ of~$G$. 
Thus,~$(G,k)$ is a Yes-instance.
Now, suppose that~$ \vert I \vert =k+1$. 
Let~$P$ be the set of vertices in~$G$ which have at least two neighbors in~$I$ (Line~\ref{line-compute-common-neigs-of-i}). 
Since~$ \vert I \vert =k+1$, the sought solution $S$ must contain at least one vertex $u$ of~$P$. 
If $u \in S$, then $S$ does not contain any neighbor of $u$.
Thus, the branching into $(G - N_G[u], k - 1)$ in Line~\ref{line-choose-v-in-p} is correct. 

Now, we analyze the running time of \Cref{algo-branching-ids}.
First, we bound the number of children of any node in the search tree.
To do so, we prove that~$ \vert P \vert \le (\gamma-1) \binom{k + 1}{2}$. 
Let~$v_i$ be the $i$th vertex added to $I$ in Line~\ref{line-compute-is-end} and let $G_i \coloneqq  G - N_G[\{ v_1, \dots, v_{i - 1} \}]$ for each $i \in [k + 1]$.
Observe that~$\cl_{G_i}(v_i) \le \gamma - 1$ for each $i \in [k]$ since~$G$ is weakly~$\gamma$-closed.
For a vertex $u \in P$, let $v_i \in I$ be the first vertex that $u$ is adjacent to.
Then,~$u$ is present in the graph $G_i$ and we have $u \in N_{G_i}(v_i)$.
Thus, $P \subseteq \bigcup_{i \in [k+1]} N_{G_i}(v_i)$ and we see that $ \vert P \vert  \le \sum_{i \in [k + 1]}  \vert N_{G_i}(v_i) \cap P \vert $.
Moreover, we have $ \vert N_{G_i}(v_i) \cap P \vert  \le \sum_{j \in [i + 1, k + 1]}  \vert N_{G_i}(v_i) \cap N_{G_i}(v_j) \vert $ for each $i \in [k]$.
Since~$I$ is an independent set we obtain that~$\vert N_{G_i}(v_i) \cap N_{G_i}(v_j) \vert\le \gamma-1$ for each~$j\in[i+1,k+1]$.
Therefore, 
\begin{align*}
   \vert P \vert  \le \sum_{i < j \in [k + 1]}  \vert N_{G_i}(v_i) \cap N_{G_i}(v_j) \vert  \le (\gamma - 1) \binom{k + 1}{2}.
\end{align*}

It is easy to see that finding an independent set~$I$ in Lines~\ref{line-compute-is-start} to \ref{line-compute-is-end} only requires polynomial time. 
Hence, the algorithm spends polynomial time in each search tree node.
Since each node has at most~$(\gamma-1) \binom{k + 1}{2}$ children in the search tree and its depth is at most $k$, the overall running time of \Cref{algo-branching-ids} is $\Oh^*((\gamma-1 \cdot \binom{k + 1}{2})^k) = \Oh^*((\frac{\gamma-1}{2})^k k^{2k})$. 
\end{proof}

A natural next question is whether \textsc{Independent Dominating Set} admits a polynomial kernel in weakly~$\gamma$-closed graphs.
We answer this question in the negative way, that is, we provide kernel lower bounds for \textsc{Independent Dominating Set} via a \emph{cross-composition} \cite{BDFH09,BJK14}.

An equivalence relation~$R$ on~$\Sigma^*$ is called a \emph{polynomial equivalence relation} if the following two conditions hold:
(1) There is an algorithm that given two strings~$x,y\in\Sigma^*$ decides whether~$x$ and~$y$ belong to the same equivalence class in $( \vert x \vert + \vert y \vert )^{\Oh(1)}$~time, and
(2) for any finite set~$S\subseteq \Sigma^*$ the equivalence relation~$R$ partitions the elements of~$S$ into at most~$(\max_{x\in S} \vert x \vert )^{\Oh(1)}$ classes.

\begin{definition}
\label{def:cross-comp}
Let~$L\subseteq\Sigma^*$ be a set and let~$Q\subseteq\Sigma^*\times\mathds{N}$ be a parameterized problem. 
We say that~$L$ \emph{cross-composes} into~$Q$ if there is a polynomial equivalence relation~$R$ and an algorithm which, given~$2^t$ strings~$x_1,x_2,\ldots ,x_{2^t}$ belonging to the same equivalence class of~$R$, computes an instance~$(x^*,k^*)\in\Sigma^*\times\mathds{N}$ in time polynomial in~$\sum_{i=1}^{2^t} \vert x_i \vert $ such that:
\begin{enumerate}
\item $(x^*,k^*)\in Q$ if and only if~$x_i\in L$ for some~$i\in[2^t]$, and
\item $k^*$ is bounded by a polynomial in~$\max_{i \in [2^t]}  \vert x_i \vert +t$.
\end{enumerate}
\end{definition}

It is known that if an NP-hard problem cross-composes into a parameterized problem, then the parameterized problem does not admit a kernel of polynomial size unless \PHC{} \cite{BDFH09,BJK14}.

\begin{theorem}
\label{thm-ids-no-poly-gamma-2}
Unless \PHC{}, \textsc{Independent Dominating Set} admits
\begin{itemize}
  \item no kernel of size $(k + c)^{\Oh(1)}$ and
  \item no kernel of size $k^{\Oh(1)}$ even if $\gamma = 2$.
\end{itemize}

\end{theorem}

\begin{proof}
We provide a cross-composition from \textsc{Independent Dominating Set} on $2$-closed graphs.
Note that \textsc{Independent Dominating Set} remains NP-hard on 2-closed graphs and hence also on weakly~$2$-closed graphs.
This follows from the fact that \textsc{Independent Dominating Set} is NP-hard on graphs of girth at least five~\cite{CP17,ZZ95}.
In particular, the graph constructed by the cross-composition procedure is weakly 2-closed and $(t + 2)$-closed.

Assume that we are given~$2^t$ instances~$I_x\coloneqq (G_x,k)$ of \textsc{Independent Dominating Set} on 2-closed graphs for $x\in [2^t]$.
We will describe how to construct an instance~$(G',k')$ of \textsc{Independent Dominating Set} with weak closure~$2$ and closure~$t+2$ that meets the requirements as specified in \Cref{def:cross-comp}.
To do so, we write an integer $x \in [2^t]$ in binary encoding $(x_1,\ldots, x_t)$.
For~$y\in\{0,1\}$ let~$\widetilde{y}\coloneqq  1 - y$.
Furthermore, for any string~$x\coloneqq (x_1,\ldots , x_s)$ a string~$(x_1,\ldots, x_p)$ for some~$p\in[s]$ is a \emph{prefix} of~$x$.

First, we construct the \emph{instance selector} gadget $H_t$.
For each string~$z\in\{0,1\}^*$ of length at most~$t$, we introduce a vertex~$w_z$ to~$H_t$.
We add an edge~$w_{z} w_{y}$ whenever~$z$ is a prefix of~$y$.
Furthermore, we add an edge~$w_{z}w_{y}$ whenever $z$ and $y$ are of the same length $s$ and they differ only in the last bit, that is, $z\coloneqq (z_1,\ldots,z_{s-1},z_s)$ and~$y\coloneqq (z_1,\ldots, z_{s-1},\widetilde{z_s})$.
This concludes the construction of $H_t$.

To construct~$G'$, start with a disjoint union of~$G_x$ for all~$x \in [2^t]$ and with~$H_t$.
We then add an edge from vertex~$w_z$ to every vertex in~$V(G_x)$ whenever~$z$ is a prefix of~$x$.
Finally, we set~$k'\coloneqq k+t$.
As we will show, every independent set of $H_t$ avoids dominating the vertices of $G_x$ for some $x \in \{ 0, 1 \}^t$ (see \Cref{claim:select}).
Intuitively speaking, this ensures that $G_x$ has an independent dominating set of size at most $k$ whenever $(G', k')$ is a Yes-instance.

Before showing the correctness, we verify that~$G'$ is weakly~$2$-closed.
To this end, we show that every induced subgraph~$G^*$ of~$G'$ has a vertex~$v\in V(G^*)$ such that~$\cl_{G^*}(v)<2$.
If~$G^*$ does not contain any vertex of $H_t$, then~$G^*$ is $2$-closed.
Otherwise, assume that~$G^*$ contains at least one vertex of $H_t$.
Let~$w_z\in V(G^*)$ be a vertex such that~$z\coloneqq (z_1,\ldots, z_{s-1},z_s)$ has the shortest length among all vertices~$w_y\in V(H_t)\cap V(G^*)$.
We show that~$w_z$ and any vertex~$v \in V(G^*) \setminus N[w_z]$ have at most one common neighbor.
We show this claim for~$v=w_y$; the proof for~$v \in V(G_x)$ with $x \in [2^t]$ is analogous because $N(v) \subseteq N(w_x)$.
Since~$w_zw_y\notin E(G')$, we observe that~$z$ is not a prefix of~$y$.
By construction, we have 
\begin{align*}
N_{G'}(w_z)\cap N_{G'}(w_y) \subseteq Z \cup \{ w_{(z_1, \dots, \widetilde{z_s})}\},
\end{align*}
Here,~$Z \coloneqq  \{ w_{z'} \mid z' \text{ is a prefix of both $y$ and $z$} \}$.
Since~$z$ has the shortest length among all strings~$z'$ such that~$w_{z'}\in V(G^*)$, we have $Z \cap V(G^*) = \emptyset$.
Hence, we have $ \vert N_{G^*}(w_z)\cap N_{G^*}(w_y) \vert  \le 1$.
We thus have shown that~$G'$ is weakly~$2$-closed.

We then examine the $c$-closure of $G'$.
By construction, each vertex has at most~$t$ neighbors in~$H_t$.
Hence, any two nonadjacent vertices of $G'$ have at most $t$ common neighbors in $H_t$.
Moreover, since the~$2^t$ many instances of \textsc{Independent Dominating Set} are~$2$-closed and disjoint and any two vertices in~$H_t$ having a common neighbor in some~$G_x$ are adjacent, we conclude that any two nonadjacent vertices of~$G'$ have at most one common neighbor in $\bigcup_{x \in [2^t]} V(G_x)$.
Thus, $G'$ is $(t + 2)$-closed.

Next we show that~$G'$ contains an independent dominating set of size at most~$k'$ if and only if~$G_x$ contains an independent dominating set of size at most~$k$ for some~$x\in[2^t]$.

Assume that~$S$ is an independent dominating set of size at most~$k$ for the instance~$I_x$.
Recall that the binary encoding of~$x$ is~$(x_1,\ldots, x_t)$ where~$x_i\in\{0,1\}$ for each~$i\in[t]$.
We define~$y_i\coloneqq (x_1,\ldots, x_{i-1},\widetilde{x_i})$ for each~$i\in[t]$.
In the following, we verify that~$S'\coloneqq S\cup\{w_{y_i}\mid i\in[t]\}$ is an independent dominating set of size at most~$k'$ of~$G'$.
Clearly,~$S'$ has size at most~$k+t=k'$.
It remains to verify that~$S'$ is an independent dominating set of~$G'$.

First, we show that~$S'$ is an independent set in~$G'$.
Since~$y_i$ is not a prefix of~$x$ for each~$i\in[t]$, we conclude that~$z_{y_i}$ is not adjacent to any vertex of~$V(G_x)$.
Furthermore, by assumption~$S$ is an independent set.
Thus, it remains to show that~$w_{y_i}w_{y_j}\notin E(G')$ for each~$i\ne j$.
Without loss of generality, assume that~$i<j$.
Recall that~$y_i=(x_1,\ldots,x_{i-1},\widetilde{x_i})$ and that~$y_j=(x_1,\ldots, x_{i-1},x_i,\ldots, x_{j-1},\widetilde{x_j})$.
Hence,~$y_i$ is not a prefix of~$y_j$.
Thus,~$S'$ is an independent set.

Second, we show that~$S'$ is a dominating set of~$G'$.
First, we show that~$S'$ dominates $w_z$ for every vertex of $H_t$.
Assume that~$z$ is a prefix of~$x$.
Let~$y\in S$.
Since~$S\subseteq V(G_x)$,~$S\subseteq S'$, and since~$w_zq\in E(G')$ we see that~$y$ dominates~$w_z$.
Otherwise, assume that~$z$ is no prefix of~$x$.
Let $i \in [t]$ be the smallest number such that~$x_i\ne z_i$.
Then,~$z=(x_1,\ldots,x_{i-1},\widetilde{x_i},z_{i+1},\ldots z_s)$ where~$s$ denotes the length of~$z$.
Observe that~$y_i=(x_1,\ldots, x_{i-1},\widetilde{x_i})$ is a prefix of~$z$.
Hence, $w_z$ is dominated by~$w_{y_i}\in S'$.
We can analogously show that every vertex in~$v\in V(G_y)$ for some~$y \in [2^t]$ is dominated by $S'$: either~$y=x$, then~$v$ is dominated by some vertex in~$S\subseteq S'$ since~$S$ is an independent dominating set of~$G_x$, or~$y\ne x$, then there exists a prefix~$z$ of~$y$ such that~$w_z\in S'$ and~$vw_z\in E(G')$.
Thus, we have shown that $S'$ is a dominating set of~$G'$.

Conversely, suppose that~$G'$ has an independent dominating set~$S'$ of size at most~$k'=k+t$.
We prove that $G_x$ has an independent dominating set of size $k$ for some $x \in [2^t]$.
We start with an observation on independent sets in the instance selector gadget~$H_t$.

\begin{claim}
  \label{claim:select}
  Let $I'$ be an independent set of~$H_t$.
  Then, there exists an~$x=(x_1,\ldots, x_t)$ such that for any prefix~$y$ of~$x$, the vertex~$w_y$ is not contained in~$I'$.
\end{claim}
\begin{claimproof}
We construct the string~$x=(x_1, \ldots, x_t)$ inductively.
First, we construct~$x_1$ for the start of the induction.
Observe that~$w_{(0)}$ and~$w_{(1)}$ are adjacent in~$G'$ (note that these correspond to the 1-bit strings).
Hence,~$I'$ can contain at most one of these two vertices.
In other words,~$w_i\notin I'$ for some~$i\in\{0,1\}$.
We set~$x_1\coloneqq i$.
Now, we consider the inductive step.
Here we assume that we already constructed the string~$(x_1,\ldots, x_s)$ for some~$s\in[t-1]$ and now we aim to construct~$x_{s+1}$.
Observe that for~$y\coloneqq (x_1, \ldots, x_s, 0)$ and~$z\coloneqq (x_1, \ldots, x_s,1)$ the vertices~$w_y$ and~$w_z$ are adjacent in~$G'$.
Hence,~$I'$ can contain at most one of these two vertices.
In other words,~$w_{(x_1, \ldots, x_s,i)}\notin I'$ for some~$i\in \{0,1\}$.
We set~$x_{s+1}\coloneqq i$.
Now, the claim follows after constructing~$x_t$.
\end{claimproof}

Let~$I'\coloneqq S'\cap V(H_t)$.
In the following, let~$x=(x_1,\ldots, x_t)$ be a string fulfilling the conditions of Claim~\ref{claim:select}, that is, for any prefix~$y$ of~$x$, the vertex~$w_y$ is not contained in~$I'$ and hence also not in~$S'$.
Furthermore, let~$G_x$ be the graph of the \textsc{Independent Dominating Set} instance corresponding to~$x$.
Since~$w_y\notin S'$ for any prefix~$y$ of~$x$, we obtain that~$S'\cap V(G_x)\ne\emptyset$.
Consider the vertex $w_{y_i}$ where $y_i \coloneqq (x_1,\ldots, x_{i-1},\widetilde{x_i})$ for some~$i\in[t]$.
By construction, the vertices~$\{w_{y_i}\mid i\in[t]\}$ are pairwise unreachable in $G' - N[S' \cap V(G_x)]$.
Hence,~$G' - N[S' \cap V(G_x)]$ has at least $t$ connected components.
Since $S'$ contains at least one vertex of each connected component, it follows that there is an independent dominating set in $G_x$ of size at most $k' - t = k$.
\end{proof}


\subsection{An FPT-Algorithm for Dominating Clique}
\label{sec:dc}

We now consider the \textsc{Dominating Clique} problem.
The task in this problem is to find a small clique that dominates all vertices.
\begin{definition}
  A set~$S\subseteq V(G)$ is a dominating clique in~$G$ if all vertices of~$S$ are pairwaise adjacent and~$S$ is a dominating set.
\end{definition}

\problemdef{Dominating Clique}
{A graph~$G$ and a parameter~$k \in \mathds{N}$.}
{Does~$G$ contain a dominating clique of size at most~$k$?}

It is known that \textsc{Dominating Clique} is W[2]-hard with respect to~$k$ even on graphs which do not contain a~$4$-claw (a~$K_{1,4}$) as an induced subgraph~\cite{CPPPW11}.

Note that there is a straightforward $\Oh^*(d^k)$-time algorithm for \textsc{Dominating Clique} on~$d$-degenerate graphs:
Enumerate all cliques of size at most~$k$ and check if any of them dominates all vertices. To see the running time bound, observe that we may use a degeneracy ordering~$(v_1, \dots, v_n)$ of~$G$ and recall that for this ordering $\deg_{G_i}(v_i) \le d$ where~$G_i\coloneqq G[\{v_i,\ldots, v_n\}]$.
By considering all~$n$ possibilities for the first vertex of the dominating clique in this ordering, we can enumerate every clique of at most $k$ vertices in $\Oh(n\cdot d^k)$ time.
Instead, one may also solve \textsc{Dominating Clique} in $\Oh^*(2^d)$~time by enumerating all cliques of~$G$. At first glance, an $\Oh^*(2^d)$-time algorithm may sometimes seem prefrable to the $\Oh^*(d^k)$-time algorithm. However, a more precise running time bound of the latter algorithm is~$\Oh^*(\binom{d}{k})$ which is never larger than~$\Oh^*(2^d)$.

In this subsection, we describe an FPT-algorithm for weakly $\gamma$-closed graphs, resulting in an $\Oh^*((\gamma - 1)^{k})$-time algorithm.
Note that a maximal clique of a weakly $\gamma$-closed graph may be arbitrarily large.
Thus, a simple brute-force search on maximal cliques may require $\Omega(n^k)$ time even on graphs with constant weak closure. 
Moreover, we want to avoid enumerating all maximal cliques since this alone incurs a running time of~$\Omega(3^{\gamma/3})$~\cite{FRSWW20}.
Instead, we will use \Cref{algo-branching-dc} for each vertex~$v_i$ in a fixed weak closure ordering~$\sigma$. 
The key idea is that we assume that~$v_i$ is the first vertex in the dominating clique with respect to~$\sigma$. 
As we shall see in the proof of \Cref{thm-dom-clique-weakly-gamma-fpt} this guarantees that for each vertex~$w$ which is not adjacent to~$v_i$, we may branch into at most~$\gamma-1$ cases to determine a vertex that dominates~$w$.

\begin{algorithm}[t]
  \caption{An algorithm for finding a dominating clique~$S$ that contains $v_i$ as the first vertex in the fixed weak closure ordering~$\sigma$ of~$G$. Initially we have~$T\coloneqq \{v_i\}$.}
  \label{algo-branching-dc}
  \begin{algorithmic}[1]  
  \Procedure{\textit{SolveDC}}{$G, k, T$} \Comment{$T\subseteq\{v_i, \ldots , v_n\}$ and~$v_i\in T$}
  \State \algorithmicif\ $k=0$ \ALGAND~$V(G)\neq N[T]$ \algorithmicthen\ \Return No \label{line-cdc-return-no}
  \State \algorithmicif\ $V(G) = N[T]$ \algorithmicthen\ \Return Yes\label{line-cdc-yes}
  \State Find a vertex~$w$ such that~$v_iw\notin E(G)$ \label{line-cdc-compute-v-j}
  \ForEach{$u\in \bigcap_{x\in T}N(x)\cap N(w)\cap V(G_i)$} \label{line-cdc-branch1} \Comment {$G_i\coloneqq G[\{v_i, \ldots , v_n\}]$}
    \If {\textit{SolveDC}$(G, k - 1, T\cup\{u\})$ returns Yes}
      \State \Return Yes \label{line-cdc-branch2}
    \EndIf
  \EndFor
  \State \Return No
  \EndProcedure
  \end{algorithmic}
\end{algorithm}

\begin{theorem}
\label{thm-dom-clique-weakly-gamma-fpt}
\textsc{Dominating Clique} can be solved in $\Oh^*((\gamma -1)^{k})$~time.
\end{theorem}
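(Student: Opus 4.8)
The plan is to analyze Algorithm~\ref{algo-branching-dc} (\textit{SolveDC}), prove it correct, bound its running time by $\Oh^*((\gamma-1)^{k-1})$, and then run it once for each vertex $v_i$ in a fixed closure ordering $\sigma$, adding only a polynomial factor $n$ to the overall running time. The main invariant I would carry through the recursion is that $T$ is a clique contained in $\{v_i,\dots,v_n\}$ with $v_i\in T$, that the current graph $G$ is the induced subgraph on vertices still available to extend $T$, and that we are searching for a dominating clique $S\supseteq T$ of total size at most $k$ whose first vertex in $\sigma$ is exactly $v_i$ (so no vertex of $S$ precedes $v_i$). Under this assumption, if $V(G)=N[T]$ we already have a dominating clique and return Yes (Line~\ref{line-cdc-yes}); if $k=0$ but some vertex is still undominated we return No (Line~\ref{line-cdc-return-no}); otherwise we pick any non-neighbor $w$ of $v_i$ (Line~\ref{line-cdc-compute-v-j}), observe that $w$ must be dominated by some vertex $u\in S\setminus\{v_i\}$, and branch on the candidates for $u$.

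The key structural point is the bound on the number of branches. Any vertex $u$ that could dominate $w$ and extend the clique $T$ must be a common neighbor of all of $T$ (since $S$ is a clique), adjacent to $w$, and lie in $G_i=G[\{v_i,\dots,v_n\}]$ (since $S$ has no vertex before $v_i$). In particular $u\in N_{G_i}(v_i)\cap N_{G_i}(w)$; since $v_iw\notin E(G)$, the definition of the closure ordering gives $|N_{G_i}(v_i)\cap N_{G_i}(w)|=\cl_{G_i}(v_i)<\gamma$ — wait, more precisely, $w$ may precede $v_i$ only if we are not careful, but here $w$ is a non-neighbor of $v_i$ in the \emph{current} graph which is a subgraph of $G_i$, so $w\in\{v_i,\dots,v_n\}$ and the bound $\cl_{G_i}(v_i)<\gamma$ applies, yielding at most $\gamma-1$ candidates. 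Each recursive call decreases $k$ by one, and after fixing $v_i$ and including it in $T$ the remaining depth is $k-1$, so the search tree has at most $(\gamma-1)^{k-1}$ leaves; each node does polynomial work (computing $w$, the candidate set, and the updated graph $G-(N(u)\setminus N[v_i])$, which removes exactly the vertices no longer extendable to a clique containing $\{v_i,u\}$). This gives $\Oh^*((\gamma-1)^{k-1})$ per starting vertex and $\Oh^*(n\cdot(\gamma-1)^{k-1})=\Oh^*((\gamma-1)^{k-1})$ overall.

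For correctness I would argue both directions. For soundness, a straightforward induction on the recursion shows that whenever \textit{SolveDC} returns Yes, the union of $T$ with the vertices $u$ chosen along the accepting path is a clique of size at most $k$ dominating $V$ of the original graph. For completeness, suppose $(G,k)$ has a dominating clique $S'$ of size at most $k$; let $v_i$ be its $\sigma$-earliest vertex. Then the call \textit{SolveDC}$(G,k,\{v_i\})$ (after appropriately restricting $G$ to the vertices that can still lie in a clique with $v_i$, or by noting the algorithm only removes vertices that provably cannot be in $S'$) succeeds: by induction on $k-|T|$, at each step the non-neighbor $w$ of $v_i$ is dominated by some $u\in S'\setminus T$, this $u$ survives in the current graph and belongs to the branching set $\bigcap_{x\in T}N(x)\cap N(w)\cap V(G_i)$, and recursing with $T\cup\{u\}$ maintains the invariant. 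The one subtlety I would be careful about — and this is the step I expect to be the main obstacle — is verifying that the graph update $G-(N(u)\setminus N[v_i])$ correctly preserves exactly the vertices that can still extend the current clique: we must keep all of $N[v_i]$ (since the clique's first vertex is $v_i$ and we still need to dominate those vertices) while discarding vertices adjacent to $u$ but not to $v_i$, and simultaneously argue that the "undominated" bookkeeping via $V(G)=N[T]$ remains accurate even though we never physically delete dominated vertices. I would also double-check the edge case that if $v_i$ already dominates everything then the initial call returns Yes immediately with $|S|=1\le k$, and that the branching bound $\gamma-1$ (rather than $\gamma$) is legitimate because $v_iw\notin E(G)$ forces $u$ into a set of size strictly less than $\gamma$, i.e.\ at most $\gamma-1$.
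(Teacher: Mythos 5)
Your overall plan follows the paper's algorithm and analysis, but there is a genuine gap in the branching bound, precisely at the point you flagged and then dismissed. You claim that the vertex $w$ chosen in Line~\ref{line-cdc-compute-v-j} satisfies $w\in\{v_i,\dots,v_n\}$ because ``the current graph is a subgraph of $G_i$.'' That is false: the recursion starts from the \emph{whole} graph $G$, not from $G_i$, and it must, since the clique has to dominate every vertex of $G$, including those preceding $v_i$ in $\sigma$; the only vertices ever deleted are $N(u)\setminus N[v_i]$, so undominated vertices $w$ with $w<_\sigma v_i$ can perfectly well survive and be picked in Line~\ref{line-cdc-compute-v-j}. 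For such a $w$ the inequality $\cl_{G_i}(v_i)<\gamma$ does not directly bound $|N(v_i)\cap N(w)\cap V(G_i)|$, because $w\notin V(G_i)$ and Definition~\ref{def:gamma} only speaks about non-neighbors of $v_i$ inside $G_i$. So as written, your argument only covers the case $v_i<_\sigma w$ and the claimed bound of $\gamma-1$ children is unproven in the other case.

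The missing piece is the second case of the paper's argument: if $w<_\sigma v_i$, apply the closure-ordering property at $w$'s position $j$ rather than at $v_i$'s. Since $v_i\in V(G_j)$ and $v_iw\notin E(G)$, we get $|N_{G_j}(w)\cap N_{G_j}(v_i)|<\gamma$, and because $V(G_i)\subseteq V(G_j)$ this yields $|N(v_i)\cap N(w)\cap V(G_i)|\le\gamma-1$ as well. With this case added, your correctness sketch (invariant that $v_i$ is the $\sigma$-first clique vertex, completeness via the surviving vertex $u\in S'$ dominating $w$, and the $V(G)=N[T]$ bookkeeping, which works because only vertices dominated by some chosen $u$ and outside $N[v_i]$ are deleted) matches the paper's proof, and the $(\gamma-1)^{k-1}\cdot n^{\Oh(1)}$ bound over all $n$ starting vertices follows as you describe.
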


\begin{proof}

To solve an instance $(G, k)$ of \textsc{Dominating Clique}, we first compute a weak closure ordering~$\sigma$. 
Afterwards, we invoke \textit{SolveDC} on input $(G, k-1, \{v_i\})$ for each vertex~$v_i\in V$. 
In the call \textit{SolveDC}$(G, k-1, \{v_i\})$, we assume that~$v_i$ is the first vertex in the dominating clique~$S$ with respect to the weak closure ordering~$\sigma$. 

We first show that \textit{SolveDC}$(G, k, T)$ is correct in the following sense: it returns Yes if and only if there is a dominating clique~$S$ of size at most $k$ which contains all vertices of~$T$, and vertex~$v_i$ is the first vertex in~$S$ with respect to~$\sigma$ (where~$v_i$ is the minimal vertex of~$T$ with respect to~$\sigma$).
It is easy to see that the terminal condition in Line~\ref{line-cdc-return-no} is correct. Moreover, Line \ref{line-cdc-yes} (where we return Yes when $V(G) = N[T]$) is correct if $T$ is a clique, we will argue below that this is always the case. 
Let~$w \notin N(v_i)$ be the vertex computed in Line~\ref{line-cdc-compute-v-j}. 
Since we want to compute a dominating clique~$S$ which contains~$T$, where vertex~$v_i$ is the first vertex in~$S$ with respect to the weak closure ordering~$\sigma$ and since~$v_iw\notin E(G)$, any dominating set must contain at least one vertex~$u$ of $N(w)\cap V(G_i)$. Moreover, since we are searching for a dominating clique, we have that~$u$ must also be a common neighbor of all vertices in~$T$, that is,~$u\in (\bigcap_{x\in T}N(x))$. 
Thus, the branching into $(G, k - 1, T \cup \{ u \})$ in Lines~\ref{line-cdc-branch1} and \ref{line-cdc-branch2} is correct. Since each vertex~$u$ chosen in Line~\ref{line-cdc-branch1} is a common neighbor of all vertices in~$T$, we conclude that~$T$ is a clique and thus Line~\ref{line-cdc-yes} returns Yes if and only if~$G$ contains a dominating clique of size at most~$k$. 
Furthermore, each vertex~$u$ chosen in Line~\ref{line-cdc-branch1} is contained in~$G_i$. 
Hence~$v_i<_\sigma u$. 
In other words, vertex~$v_i$ is the smallest vertex in~$T$ with respect to~$\sigma$.

Let us analyze the time complexity of \textit{SolveDC}. 
It is easy to see that Lines~\ref{line-cdc-return-no} to~\ref{line-cdc-compute-v-j} can be performed in polynomial time.
Consider the search tree where each node corresponds to an invocation of \textit{SolveDC}. 
We show that each node in the search tree has at most~$\gamma-1$ children. To this end, we bound the size of~$ \vert N(v_i)\cap N(w)\cap V(G_i) \vert $ which is an upper bound on the number of branches created in Line~\ref{line-cdc-branch1}. If~$v_i<_\sigma w$, then~$ \vert N(v_i)\cap N(w)\cap V(G_i) \vert \le\gamma -1$ by Definition~\ref{def:gamma}. Otherwise, if~$w<_\sigma v_i$, then~$v_i$ and~$w$ have at most~$\gamma-1$ common neighbors in~$\{v'\mid w<_\sigma v'\}$ and thus also in~$V(G_i)$.
Hence, each node has at most~$\gamma -1$ children.  
Moreover, the depth of the search tree is at most~$k-1$. 
Thus, we spend $\Oh^*((\gamma - 1)^{k-1})$ time for each vertex~$v_i\in V(G)$ and the claimed running time bound follows.
\end{proof}

In companion work~\cite{KKS20}, we showed by a reduction from \textsc{$\lambda$-Hitting Set} that  \textsc{Dominating Set} does not admit kernels of size~$\Oh(k^{c-1-\epsilon})$ under some standard complexity-theoretic assumptions.
The idea of this well-known reduction is to construct a split graph in which the universe is the clique and the sets of the set family are the vertices in the independent set. In other words, the split graph is obtained from the incidence graph of the set family by making the universe a clique. We thus directly obtain the following hardness results for \textsc{Dominating Clique} from this reduction.

\begin{proposition}
\label{thm-dom-clique-kernel-lb}
For~$c\ge 3$, \textsc{Dominating Clique} has no kernel of size~$\Oh(k^{c-1-\epsilon})$ unless coNP~$\subseteq$ NP/poly. 
\end{proposition}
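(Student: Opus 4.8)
The plan is to reuse the reduction from \textsc{$\lambda$-Hitting Set} to \textsc{Dominating Set} given in the companion work~\cite{KKS20} and to observe that it is in fact already a reduction to \textsc{Dominating Clique}. Recall that in \textsc{$\lambda$-Hitting Set} one is given a universe $U$, a family $\mathcal{F}$ of subsets of $U$ each of size at most $\lambda$, and an integer $k$, and asks for a set $X \subseteq U$ with $|X| \le k$ that meets every member of $\mathcal{F}$; for every fixed $\lambda \ge 2$ this problem admits no kernel of bitsize $\Oh(k^{\lambda - \epsilon})$ for any $\epsilon > 0$ unless $\mathrm{coNP} \subseteq \mathrm{NP/poly}$. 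Setting $\lambda := c - 1$, it therefore suffices to exhibit a polynomial-parameter transformation from $(c-1)$-\textsc{Hitting Set} to \textsc{Dominating Clique} that produces $c$-closed graphs and leaves the parameter essentially unchanged.

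First I would recall the construction of~\cite{KKS20}: build a graph $G$ on the vertex set $U \cup \mathcal{F}$ (together with a constant number of gadget vertices that force solutions to avoid the set vertices), turn $U$ into a clique, and join each element $u \in U$ to each set $F \in \mathcal{F}$ with $u \in F$. Then any two non-adjacent vertices of $G$ are either a set vertex $F$ and an element $u \notin F$, whose common neighbours lie in $F$ and hence number at most $\lambda$, or two set vertices $F, F'$, whose common neighbours lie in $F \cap F'$ and again number at most $\lambda$; so $G$ is $(\lambda+1)$-closed, i.e.\ $c$-closed. The decisive point is that, since $U$ induces a clique, \emph{any} dominating set of $G$ that is contained in $U$ is automatically a clique, so a size-$\le k$ dominating set living inside $U$ is the same object as a size-$\le k$ dominating clique.

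It then remains to check both directions. In the forward direction, a hitting set $X$ of size at most $k$ dominates every set vertex by the hitting property and, being nonempty, dominates all of $U$ and the gadget vertices by the construction of~\cite{KKS20}; hence $X$ is a dominating clique. In the backward direction, given a dominating clique $S$ of size at most $k$, one argues — as in~\cite{KKS20} for dominating sets — that $S$ may be assumed to avoid the gadget vertices, and then that if $S$ contains a set vertex $F$ (a clique contains at most one such vertex, since distinct set vertices are non-adjacent), replacing $F$ by an arbitrary element of $F$ yields a clique that is no larger and still dominating, because every closed neighbour of $F$ is a closed neighbour of that element. Thus $S$ can be pushed inside $U$, where it is a hitting set of size at most $k$.

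I expect the main obstacle to be purely bookkeeping: making sure that the gadget vertices of~\cite{KKS20} — designed to block dominating \emph{sets} from using set vertices — do not spoil clique-ness, and that the swap "$F \mapsto$ element of $F$" remains valid in their presence, so that the construction is a clean polynomial-parameter transformation with $k' = k$ and $c' = c$. Once this is verified, feeding the transformation into the standard kernel-lower-bound framework (composition and polynomial-parameter transformations, exactly as used in~\cite{KKS20}) rules out kernels of size $\Oh(k^{c-1-\epsilon})$ for \textsc{Dominating Clique} unless $\mathrm{coNP} \subseteq \mathrm{NP/poly}$.
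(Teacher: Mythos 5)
Your proposal matches the paper's own (very terse) argument: the paper likewise just invokes the reduction from \textsc{$\lambda$-Hitting Set} to \textsc{Dominating Set} of the companion work~\cite{KKS20}, observing that it also establishes the lower bound for \textsc{Dominating Clique}, combined with the standard kernel lower bound for \textsc{$\lambda$-Hitting Set} with $\lambda = c-1$. Your additional details (the element vertices forming a clique, the swap of a set vertex for one of its elements, and preservation of $c$-closure) are exactly the verifications implicit in the paper's one-line proof, so the approach is essentially the same.
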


\begin{proposition}
\label{thm-dom-clique-gamma-to-k-mandatory}
Unless the ETH fails, there is no $n^{o(k)}$-time algorithm for \textsc{Dominating Clique}. 
\end{proposition}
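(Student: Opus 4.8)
The plan is to derive the bound from a very simple parameter-preserving, polynomial-time reduction from \textsc{Dominating Set} (equivalently \textsc{Set Cover}) together with the well-known fact that, unless the ETH fails, \textsc{Dominating Set} parameterized by the solution size~$k$ has no $f(k)\cdot n^{o(k)}$-time algorithm~\cite{CFK+15}. Given an instance $(G,k)$ of \textsc{Dominating Set}, I would build a split graph $G'$: its clique part is $C := \{ c_v \mid v \in V(G) \}$, its independent part is $I := \{ x_v \mid v \in V(G) \}$, and we add an edge $c_v x_w$ precisely when $w \in N_G[v]$. Thus $C$ plays the role of the closed-neighbourhood sets of a \textsc{Set Cover} instance and $I$ plays the role of the universe. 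The new parameter is again~$k$ and $|V(G')| = 2|V(G)|$, so any $n^{o(k)}$-time algorithm for \textsc{Dominating Clique} on $G'$ immediately gives an $n^{o(k)}$-time algorithm for \textsc{Dominating Set} on $G$, contradicting the ETH.

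The correctness argument I have in mind has two short directions. For the forward direction, a dominating set $D$ of $G$ with $|D|\le k$ yields the vertex set $\{ c_v \mid v\in D\}$, which lies inside $C$ and is therefore a clique; it dominates every $c_u$ because $C$ is a clique, and it dominates every $x_w$ because $w$ has some dominator $v\in D$ in $G$, so $c_v x_w\in E(G')$. Hence $G'$ has a dominating clique of size at most~$k$. For the converse, let $D'$ be a dominating clique of $G'$ with $|D'|\le k$. As $I$ is independent, $D'$ contains at most one vertex of~$I$; and if $|V(G)|\ge 2$, then $D'$ must contain at least one vertex of~$C$, since an $I$-vertex dominates no other $I$-vertex. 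Setting $D'':=D'\cap C$, every $x_w$ is dominated by some $c_v\in D''$, so $\{ v \mid c_v\in D''\}$ is a dominating set of $G$; its size is at most $|D'|\le k$.

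I expect the only point requiring care to be the converse direction, and specifically the possibility that the sought dominating clique ``wastes'' one vertex inside the independent part~$I$; the observation that such a vertex cannot help dominate the remainder of~$I$ is what forces the clique part alone to already encode a dominating set of~$G$ within the same budget. The split-graph construction, the linear blow-up in the number of vertices, and the transfer of the $n^{o(k)}$ lower bound are all routine. (This is essentially the same phenomenon that underlies \Cref{thm-dom-clique-kernel-lb}: the textbook \textsc{Set Cover}-to-\textsc{Dominating Set} reduction already outputs instances in which some minimum dominating set is a clique, so it is simultaneously a reduction to \textsc{Dominating Clique}.)
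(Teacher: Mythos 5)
Your proposal is correct and follows essentially the same route as the paper, which obtains this bound by pushing the known ETH lower bound for \textsc{Dominating Set}/\textsc{Hitting Set} (no $f(k)\cdot n^{o(k)}$ algorithm) through a parameter-preserving covering-style reduction whose image instances have dominating sets that are cliques -- your split-graph construction is exactly that textbook reduction. The only detail worth making explicit is the case $x_w \in D'$ in the converse: since $D'$ is a clique and $D''=D'\cap C\neq\emptyset$, the vertex $x_w$ is adjacent to every $c_v\in D''$, so $w$ is dominated in $G$ as well, which completes your argument.
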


In view of \Cref{thm-dom-clique-gamma-to-k-mandatory}, it is unlikely that the running time $\Oh^*((\gamma - 1)^k)$ of Theorem~\ref{thm-dom-clique-weakly-gamma-fpt} can be substantially improved:
an algorithm running in time $\Oh^*(\gamma^{o(k)})$ or $\Oh^*(c^{o(k)})$ would dispute the ETH.
Furthermore, for~$\lambda=2$, when \textsc{$\lambda$-Hitting Set} is the \textsc{Vertex Cover} problem, the reduction shows NP-hardness for constant closure since the independent set vertices in the constructed instance have degree~2.
\begin{proposition}
\label{thm-dom-clique-nph-constant-c}
\textsc{Dominating Clique} remains NP-hard even on~$3$-closed graphs. 
\end{proposition}

Thus, both parameters~$\gamma$ and~$k$ are necessary in \Cref{thm-dom-clique-weakly-gamma-fpt}: \textsc{Dominating Clique} is W[2]-hard with respect to~$k$ even on graphs which do not contain a~$4$-claw as an induced subgraph~\cite{CPPPW11} and NP-hard even for~$\gamma=3$ since~$\gamma\le c$ (\Cref{thm-dom-clique-nph-constant-c}).

\section{Conclusion}
We have provided further applications of the weak closure parameter~$\gamma$ which was
introduced for clique enumeration~\cite{FRSWW20}.  Given the algorithmic usefulness of the class of
weakly closed graphs, it seems important to further study its properties. For example, it
would be nice to obtain a forbidden subgraph characterization. We note that the
weakly-1-closed graphs are exactly the graphs that do not contain a~$C_4$ or a~$P_4$ as an
induced subgraph. These graphs are also known as quasi-threshold graphs. Can we obtain a
similar characterization for weakly~$2$-closed graphs? 

Further FPT-algorithms for the
parameter~$\gamma$ would also be very interesting from a theoretical and practical point of view. 
In particular, obtaining kernelization algorithms for the class of weakly closed graphs is unexplored for many problems. For example, \textsc{Dominating Set} has an FPT-algorithm for
the parameter~$\gamma+k$~\cite{LS21} but it remains open whether \textsc{Dominating Set} admits a polynomial kernel for~$k$ if~$\gamma$ is a constant.
Only for special graph classes, kernels for~$\gamma+k$ are known:
In companion work~\cite{KKS21} we provided almost tight kernels of size~$k^{\Oh(\gamma)}$ for split graphs and of size~$k^{\Oh(\gamma^2)}$ for graphs with constant clique size.
In contrast, for the larger parameters degeneracy and closure almost tight upper and lower bounds are known:
\textsc{Dominating Set} admits a kernel with $\Oh(k^{(d+1)^2})$~vertices~\cite{PRS12}, and a kernel of size~$\Oh(k^{(d-3)(d-1)-\varepsilon})$ is unlikely~\cite{CGH17}.
Similarly, \textsc{Dominating Set} admits a kernel with~$k^{\Oh(c)}$ vertices~\cite{KKS20} and a kernel of size~$\Oh(k^{c-1-\varepsilon})$ is unlikely~\cite{KKS20}. Observe in this context that, for \textsc{Independent Dominating Set}, we showed that a kernel of size~$k^{g(\gamma)}$ is unlikely. Such a kernel may, however, still be achievable for \textsc{Dominating Clique}.

Also, some questions about clique relaxations in (weakly) closed graphs remain open:
In \textsc{$s$-Club} we ask for a vertex set~$S$ of size at least~$k$ in a graph~$G$ which is an $s$-club. 
Recall that in an $s$-club~$S$ each pair of vertices in~$S$ has distance at most~$s$ in~$G[S]$ (see also \Cref{def-s-club}).
We showed that \textsc{2-Club} is NP-hard even in $4$-closed graphs (\Cref{thm-2-club-nphard-constant-closure}).
It is open, whether \textsc{2-Club} is also NP-hard 2-closed or 3-closed graphs.
Finally, the complexity of \textsc{$s$-Club} for~$s\ge 3$ on graphs with constant closure remains open.


\bibliographystyle{plain}

\newpage
\appendix

\section{Parameter values in Real-World Instances}

\begin{table}[h]
  \caption{A comparison of the number~$n$ of vertices, number~$m$ of edges, the maximum degree~$\Delta$, the closure~$c$, the degeneracy~$d$ and the weak closure~$\gamma$ in social and biological networks.}
  \label{tab:c-closure}

  \centering
  \begin{tabularx}{.7\textwidth}{Xrrrrrr}
    \toprule
    Instance name & $n$& $m$ & $\Delta$ & $c$ & $d$ & $\gamma$\\
    \midrule
    adjnoun-adjacency & 
112 & 425 & 49 & 14 & 6 & 6\\ 
arenas-jazz &
198 & 2\,742 & 100 & 42 & 29 & 18\\ 
ca-netscience & 
379 & 914 & 34 & 5 & 8 & 3\\ 
bio-celegans &
453 & 2\,025 & 237 & 26 & 10 & 9\\
bio-diseasome &
516 & 1\,188 & 50 & 9 & 10 & 5\\
soc-wiki-Vote &
889 & 2\,914 & 102 & 18 & 9 & 8\\ 
arenas-email &
1\,133 & 5\,451 & 71 & 19 & 11 & 8\\
bio-yeast  &
1\,458 & 1\,948 & 56 & 8 & 5 & 4\\
ca-CSphd  &
1\,882 & 1\,740 & 46 & 3 & 2 & 3\\ 
soc-hamsterster  &
2\,426 & 16\,630 & 273 & 77 & 24 & 19\\
ca-GrQc  &
4\,158 & 13\,422 & 81 & 43 & 43 & 9\\
soc-advogato  &
5\,167 & 39\,432 & 807 & 218 & 25 & 21\\
bio-dmela  &
7\,393 & 25\,569 & 190 & 72 & 11 & 12\\
ca-HepPh  &
11\,204 & 117\,619 & 491 & 90 & 238 & 54\\
ca-AstroPh  &
17\,903 & 196\,972 & 504 & 61 & 56 & 30\\ 
soc-brightkite & 
56\,739 & 212\,945 & 1\,134 & 184 & 52 & 49\\
\bottomrule
\end{tabularx}
\end{table}

\end{document}